\newtheorem{theorem}{Theorem}[section]
\newtheorem{proposition}[theorem]{Proposition}
\newtheorem{lemma}[theorem]{Lemma}
\newtheorem{corollary}[theorem]{Corollary}
\theoremstyle{definition}
\newtheorem{definition}[theorem]{Definition}
\theoremstyle{remark}
\begin{document}
\title{Nuclear Norm of Higher-Order Tensors}
\author[S.~Friedland]{Shmuel~Friedland}
\address{Department of Mathematics, Statistics and Computer Science,  University of Illinois, Chicago}
\email{friedlan@uic.edu}
\author[L.-H.~Lim]{Lek-Heng~Lim}
\address{Computational and Applied Mathematics Initiative, Department of Statistics,
University of Chicago}
\email{lekheng@galton.uchicago.edu}
\begin{abstract}
We establish several mathematical and computational properties of the nuclear norm for higher-order tensors. We show that like tensor rank, tensor nuclear norm is dependent on the choice of base field --- the value of the nuclear norm of a real $3$-tensor depends on whether we regard it as a real $3$-tensor or a complex $3$-tensor with real entries.  We show that every tensor has  a nuclear norm attaining decomposition and every symmetric tensor has a symmetric nuclear norm attaining decomposition. There is a corresponding notion of nuclear rank that, unlike tensor rank, is upper semicontinuous. We establish an analogue of Banach's theorem for tensor spectral norm and Comon's conjecture for tensor rank --- for a symmetric tensor, its symmetric nuclear norm always equals its nuclear norm. We show that computing tensor nuclear norm is  NP-hard in several sense. Deciding weak membership in the nuclear norm unit ball of $3$-tensors is NP-hard, as is finding an $\varepsilon$-approximation of nuclear norm for $3$-tensors.  In addition, the problem of computing spectral or nuclear norm of a $4$-tensor is NP-hard, even if we restrict the $4$-tensor to be bi-Hermitian, bisymmetric, positive semidefinite, nonnegative valued, or all of the above. We discuss some simple polynomial-time approximation bounds. As an aside, we show that the nuclear $(p,q)$-norm of a matrix is NP-hard in general but can be computed in polynomial-time if $p=1$, $q = 1$, or $p=q=2$, with closed-form expressions for the nuclear $(1,q)$- and $(p,1)$-norms.
\end{abstract}
\maketitle

\section{Introduction}

The nuclear norm of a $2$-tensor (or, in coordinate form, a matrix) has recently found widespread use as a convex surrogate for rank, allowing one to relax various intractable rank minimization problems into tractable convex optimization problems. More generally, for $\mathbb{F} = \mathbb{R}$ or $\mathbb{C}$, the nuclear norm of a $d$-tensor  $A\in \mathbb{F}^{n_1} \otimes \dots \otimes \mathbb{F}^{n_d} = \mathbb{F}^{n_1 \times \dots \times n_d}$ is defined by
\begin{equation}\label{eq:intro}
\|A\|_{*,\mathbb{F}} = \inf\Bigl\{\sum_{i=1}^r \lvert \lambda_i\rvert : A =  \sum_{i=1}^r \lambda_i u_{1,i}\otimes \dots \otimes u_{d,i}, \; \lVert u_{k,i} \rVert = 1, \; r \in \mathbb{N}\Bigr\}
\end{equation}
where $\|\cdot\|$ is the $l^2$-norm and $u_{k,i} \in \mathbb{F}^{n_k}$ for $k=1,\dots,d$, $i=1,\dots,r$. The nuclear norm of a matrix is then the case when $d = 2$ and is equivalent to the usual definition as a sum of singular values, also known as the Schatten $1$-norm \cite{Book}. For higher-order tensors it was defined explicitly in \cite{LC1, LC2} (see also \cite{Derk, Fri83}) although the original idea dates back to Grothendieck \cite{Gro} and Schatten \cite{Scha}.  In Section~\ref{sec:norms} we will discuss the definitions and basic properties of Hilbert--Schmidt, spectral, and nuclear norms for tensors of arbitrary orders over $\mathbb{C}$ and $\mathbb{R}$ as well as their relations with the projective and injective norms in operator theory.

\subsection{Mathematical properties of tensor nuclear norm}

We start by showing in Section~\ref{sec:special} that the expression in \eqref{eq:intro} defines a norm  and that the infimum is always attained, i.e., there is a finite $r$ and a decomposition into a linear combination of $r$ norm-one rank-one terms such that the $l^1$-norm of the $r$ coefficients gives the nuclear norm. We call this a \emph{nuclear decomposition}. Such a decomposition gives a corresponding notion of \textit{nuclear rank} that, unlike the usual tensor rank, is upper semicontinuous and thus avoids the ill-posedness issues in the best rank-$r$ approximation problem for tensor rank \cite{DSL}. As an aside, we show that one cannot get a Schatten $p$-norm for tensors in this manner: If the $l^1$-norm of the coefficients is replaced by an $l^p$-norm for any $p > 1$, the infimum is identically zero. In Section~\ref{sec:decomp}, we give a necessary and sufficient condition for checking whether a given decomposition of a tensor into rank-one terms is a nuclear decomposition of that tensor. We also show that every norm on a real finite-dimensional vector space may be regarded as a nuclear norm in an appropriate sense.

For notational simplicity let $d = 3$ but the following conjecture and results may be stated for any $d \ge 3$. Let $A\in \mathsf{S}^{3}(\mathbb{F}^{n})$ be a symmetric tensor. Comon's conjecture \cite{CGLM} asserts that the rank and symmetric rank of $A$ are always equal, i.e.,
\begin{equation}\label{eq:rk}
\min \Bigl\{ r : A = \sum_{i=1}^{r}\lambda_{i}u_{i}\otimes v_{i}\otimes w_{i}\Bigr\} \overset{?}{=} 
\min \Bigl\{ r : A = \sum_{i=1}^{r}\lambda_{i}v_{i}\otimes v_{i}\otimes v_{i}\Bigr\}.
\end{equation}
Banach's theorem \cite{Ban38,Fri13} on the other hand shows that the analogous statement holds for the spectral norm in place of rank, i.e.,
\[
\sup_{x,y,z\ne0}\frac{\lvert \langle A, x\otimes y\otimes z\rangle \rvert}{\lVert x\rVert\lVert y\rVert\lVert z\rVert} =
\sup_{x\ne0}\frac{\lvert \langle A, x\otimes x \otimes x\rangle\rvert}{\lVert x\rVert^3} .
\]
We prove the analogous statement for nuclear norm (for arbitrary $d$) in Section~\ref{sec:banach}:
\begin{equation}\label{eq:nn}
\inf\Bigl\{  \sum_{i=1}^{r}\lvert\lambda_{i} \rvert :A=\sum_{i=1}^{r}\lambda_{i}u_{i}\otimes v_{i}\otimes w_{i}\Bigr\} 
= \inf\Bigl\{  \sum_{i=1}^{r}\lvert\lambda_{i} \rvert :A=\sum_{i=1}^{r}\lambda_{i}v_{i}\otimes v_{i}\otimes v_{i}\Bigr\},
\end{equation}
where the infimum is taken over all   $r \in \mathbb{N}$ and $\lVert u_i \rVert = \lVert v_i \rVert = \lVert w_i \rVert = 1$, $i=1,\dots, r$.
This may be viewed as a dual version of Banach's theorem or, if we regard tensor nuclear norm as a continuous proxy for tensor rank, then this shows that the continuous analogue of Comon's conjecture is true.  In addition, we show that every symmetric tensor over $\mathbb{F}$ has a \textit{symmetric nuclear decomposition} over $\mathbb{F}$, i.e., a decomposition that attains the right-hand side of \eqref{eq:nn}.

Tensor rank is known to depend on the choice of base field \cite{Bry, DSL}. We show in Section~\ref{sec:base} that the same is true for  nuclear and spectral norms. If we define $B, C \in \mathbb{R}^{2 \times 2 \times 2} \subseteq \mathbb{C}^{2 \times 2 \times 2}$ by
\begin{align*}
B&= \frac{1}{2}(e_1\otimes e_1\otimes e_2
+ e_1\otimes e_2\otimes e_1
+ e_2\otimes e_1\otimes e_1
- e_2\otimes e_2\otimes e_2),\\
C&= \frac{1}{\sqrt{3}}(e_1\otimes e_1\otimes e_2
+ e_1\otimes e_2\otimes e_1
+ e_2\otimes e_1\otimes e_1),
\end{align*}
where $e_1, e_2 \in \mathbb{R}^2$ are the standard basis vectors,  then
\[
\lVert B\rVert_{\sigma,\mathbb{R}}=1/2 < 1/\sqrt{2} = \lVert B\rVert_{\sigma,\mathbb{C}},\qquad \lVert C\rVert_{*,\mathbb{C}}=3/2  < \sqrt{3} = \lVert C\rVert_{*,\mathbb{R}}.
\]
We give explicit nuclear decompositions and symmetric nuclear decompositions of $B$ and $C$ over $\mathbb{R}$ and $\mathbb{C}$.

As our title indicates, most of this article is about nuclear norms of $d$-tensors where $d \ge 3$. Section~\ref{sec:matrix} is an exception in that it is about the nuclear $(p,q)$-norm for matrices,
\[
\|A\|_{*,p,q} = \inf\Bigl\{\sum_{i=1}^r \lvert \lambda_i\rvert : A =  \sum_{i=1}^r \lambda_i u_i v_i^\mathsf{T}, \; \lVert u_i \rVert_p =  \lVert u_i \rVert_q = 1, \; r \in \mathbb{N}\Bigr\}.
\]
We discuss its computational complexity --- polynomial-time if $p=1$ or $q = 1$ or $p=q=2$, but NP-hard otherwise --- and show that the nuclear $(1,q)$- and $(p,1)$-norms have nice closed-form expressions.

\subsection{Computational properties of tensor nuclear norm}

More generally, we may also define the nuclear $p$-norm of a $d$-tensor  $A\in \mathbb{F}^{n_1 \times \dots \times n_d}$ by
\[
\|A\|_{*,p} = \inf\Bigl\{\sum_{i=1}^r \lvert \lambda_i\rvert : A =  \sum_{i=1}^r \lambda_i u_{1,i}\otimes \dots \otimes u_{d,i}, \; \lVert u_{k,i} \rVert_p = 1, \; r \in \mathbb{N}\Bigr\}
\]
where $\|\cdot\|_p$ is the $l^p$-norm and $u_{k,i} \in \mathbb{F}^{n_k}$ for $k=1,\dots,d$, $i=1,\dots,r$.  When $p = 2$, the nuclear $2$-norm is just the nuclear norm in \eqref{eq:intro}.

For the special case $d = p =2$, the matrix nuclear norm is polynomial-time computable to arbitrary accuracy, as we had pointed out above. Obviously, the computational tractability of the matrix nuclear norm is critical to its recent widespread use. In Sections~\ref{sec:matrix} and \ref{sec:np}, we discuss the computational complexity of the nuclear norm in cases when $p \ne 2$ and $d \ne 2$. We will show that the following norms are all NP-hard to compute:
\begin{enumerate}[\upshape (i)]
\item\label{p2} nuclear $p$-norm of $2$-tensors if $p \ne 1, 2, \infty$,
\item\label{2dR} nuclear $2$-norm of $d$-tensors over $\mathbb{R}$ for all $d \ge 3$,
\item\label{2dC} nuclear $2$-norm of $d$-tensors over $\mathbb{C}$ for all $d \ge 4$.
\end{enumerate}
We rely on our earlier work \cite{FL1} for \eqref{p2} and \eqref{2dR}: The NP-hardness of the nuclear $p$-norm of $2$-tensors follows from that of the operator $p$-norm for $p \ne 1, 2,\infty$ \cite{HO10}; the NP-hardness of the nuclear norm of real $3$-tensors follows from that of the spectral norm of real $3$-tensors \cite{HL13}.

For \eqref{2dC}, we establish a stronger result --- we show that even if we require our $4$-tensor to be bi-Hermitian, bisymmetric, positive semidefinite, nonnegative-valued, or all of the above, the problem of deciding its weak membership in either the spectral or nuclear norm unit ball in $\mathbb{C}^{n \times n \times n \times n}$ remains NP-hard. We provide a direct proof by showing that the clique number of a graph (well-known to be NP-hard) is the spectral norm of a $4$-tensor satisfying these properties, and applying \cite{FL1} to deduce the corresponding result for nuclear norm. Since we do not regard $d$-tensors as special cases of $(d+1)$-tensors, we provide a simple argument for extending such hardness results to higher order, giving us the required NP-hardness when $d \ge 3$ (for real tensors) and $d \ge 4$ (for complex tensors).

These hardness results may be stated in an alternative form, namely, the nuclear $p$-norm of $2$-tensors, the nuclear norm of $3$-tensors over $\mathbb{R}$, and the nuclear norm of $4$-tensors over $\mathbb{R}$ and $\mathbb{C}$, are all not polynomial-time approximable to arbitrary accuracy. We provide some simple polynomial-time computable approximation bounds for the spectral and nuclear norms in Section~\ref{sec:approx}.

\section{Hilbert--Schmidt, spectral, and nuclear norms for higher-order tensors}\label{sec:norms}

We let $\mathbb{F}$ denote either $\mathbb{R}$ or $\mathbb{C}$ throughout this article. A result stated for $\mathbb{F}$ holds true for both $\mathbb{R}$ and $\mathbb{C}$.  Let $\mathbb{F}^{n_1\times \dots \times n_d} \coloneqq \mathbb{F}^{n_1} \otimes \dots \otimes \mathbb{F}^{n_d}$ be the space of $d$-tensors of dimensions $n_1,\dots,n_d\in\mathbb{N}$. If desired, these may be viewed as $d$-dimensional hypermatrices $A = (a_{i_1 \cdots i_d})$ with entries $a_{i_1\cdots i_d}\in \mathbb{F}$.

The \textit{Hermitian inner product} of two $d$-tensors  $A,B \in \mathbb{C}^{n_1 \times \dots \times n_d}$ is given by
\begin{equation}\label{eq:hip}
\langle A, B\rangle = \sum_{i_1,\dots, i_d =1}^{n_1,\dots,n_d} a_{i_1 \cdots i_d} \overline{b_{i_1 \cdots i_d}}.
\end{equation}
When restricted to $ \mathbb{R}^{n_1 \times \dots \times n_d}$, \eqref{eq:hip} becomes the Euclidean inner product. This induces the  \textit{Hilbert--Schmidt norm} on  $\mathbb{F}^{n_1 \times \dots \times n_d}$, denoted by
\[
\| A \| = \sqrt{\langle A, A \rangle} =\left(\sum_{i_1,\dots, i_d =1}^{n_1,\dots,n_d} |a_{i_1 \cdots i_d} |^2 \right)^{\frac{1}{2}}.
\]
We adopt the convention that an unlabeled $\|\cdot\|$ will always denote the Hilbert--Schmidt norm. When $d=1$, this is the $l^2$-norm of a vector in $\mathbb{C}^n$ and when $d =2$, this is the Frobenius norm of a matrix in $\mathbb{C}^{m \times n}$.  As an $\mathbb{F}$-vector space, $\mathbb{F}^{n_1\times \dots \times n_d} \simeq \mathbb{F}^n$ where $n=\prod_{k=1}^d  n_k$, and the Hilbert--Schmidt norm on $\mathbb{F}^{n_1\times \dots \times n_d} $ equals the Euclidean
norm on $\mathbb{F}^n$.

Let $A\in\mathbb{F}^{n_1\times \dots\times n_d}$. We define its \textit{spectral norm}  by
\begin{align}
\|A\|_{\sigma,\mathbb{F}}&\coloneqq \sup \biggl\{ \frac{|\langle A, x_1\otimes \dots \otimes x_d\rangle |}{\| x_1 \| \cdots \| x_d \|} :  0 \ne x_k \in \mathbb{F}^{n_k} \biggr\}, \label{def1tenspnrm} \\
\intertext{and its \textit{nuclear norm} by}
\|A\|_{*,\mathbb{F}} &\coloneqq \inf\Bigl\{\sum_{i=1}^r \|x_{1,i}\|\cdots \|x_{d,i}\| : A =  \sum_{i=1}^r x_{1,i}\otimes \dots \otimes x_{d,i}, \; x_{k,i} \in \mathbb{F}^{n_k}, \; r \in \mathbb{N} \Bigr\}.\label{def1tennrm}
\end{align}
It is straightforward to show that these may also be expressed respectively as
\begin{align}
\|A\|_{\sigma,\mathbb{F}} &= \sup \bigl\{|\langle A, u_1 \otimes \dots \otimes u_d \rangle| :  \lVert u_k \rVert = 1 \bigr\}, \label{def2tenspnrm} \\
\|A\|_{*,\mathbb{F}} &= \inf\Bigl\{\sum_{i=1}^r \lvert \lambda_i\rvert : A =  \sum_{i=1}^r \lambda_i u_{1,i}\otimes \dots \otimes u_{d,i}, \; \lVert u_{k,i} \rVert = 1, \; r \in \mathbb{N}\Bigr\}. \label{def2tennrm}
\end{align}
The Hilbert--Schmidt norm is clearly independent of the choice of base field, i.e., $A \in \mathbb{R}^{n_1 \times \dots \times n_d} \subseteq  \mathbb{C}^{n_1 \times \dots \times n_d}$ has the same Hilbert--Schmidt norm whether it is regarded as a real tensor, $A \in \mathbb{R}^{n_1 \times \dots \times n_d}$, or a complex tensor, $A \in \mathbb{C}^{n_1 \times \dots \times n_d}$.  As we will see, this is not the case for spectral and nuclear norms when $d > 2$, which is why there is a subscript $\mathbb{F}$ in their notations.  
When $\mathbb{F} =\mathbb{C}$, the absolute value in \eqref{def1tenspnrm} and \eqref{def2tenspnrm} may replaced by the real part, giving
\[
\|A\|_{\sigma,\mathbb{C}}=\sup_{x_k \ne 0} \frac{\operatorname{Re}( \langle A, x_1 \otimes \dots \otimes x_d\rangle)}{\| x_1 \| \cdots \| x_d \|} =  \sup_{\lVert u_k \rVert = 1}  \operatorname{Re}( \langle A, u_1 \otimes \dots \otimes u_d \rangle).
\]
Henceforth we will adopt the convention that whenever the discussion holds for both $\mathbb{F} = \mathbb{R}$ and $\mathbb{C}$, we will drop the subscript $\mathbb{F}$ and write
\[
\|\cdot\|_{\sigma} = \|\cdot\|_{\sigma, \mathbb{F}} \qquad \text{and} \qquad \|\cdot\|_{*} = \|\cdot\|_{*,\mathbb{F}}.
\]

By \eqref{def1tenspnrm} and \eqref{def1tennrm}, we have
\[
\lvert \langle A, B \rangle \rvert \le \lVert A \rVert_{\sigma } \lVert B \rVert_{* }.
\]
In fact they are dual norms \cite[Lemma~21]{LC2}  since
\[
\lVert A \rVert_{*}^*  =   \sup_{\lVert B \rVert_{*} \le 1 } \lvert \langle A, B \rangle \rvert 
\le \sup_{\lVert B \rVert_{*} \le 1}\lVert A \rVert_{\sigma } \lVert B \rVert_{* }  =\lVert A \rVert_{\sigma},
\]
and on the other hand, it follows from $\lvert \langle A, B \rangle \rvert \le \lVert A \rVert_*^* \lVert B \rVert_*$ that
\[
\lVert A \rVert_{\sigma}  =   \sup_{ \lVert x_k \rVert  =1} \lvert \langle A, x_{1}\otimes\dots\otimes x_{d}\rangle \rvert 
\le \sup_{ \lVert x_k \rVert  =1} \lVert A \rVert_{*}^* \lVert x_{1}\otimes\dots\otimes x_{d}\rVert_{*} = \lVert A \rVert_{*}^*.
\]
It is also easy to see that
\[
\|x_1 \otimes \dots \otimes x_d \| = \|x_1 \otimes \dots \otimes x_d \|_{\sigma} = \|x_1 \otimes \dots \otimes x_d \|_{*} = \| x_1 \| \cdots \| x_d \|.
\]
In fact, the following generalization is clear from the definitions \eqref{def1tenspnrm} and \eqref{def1tennrm}.
\begin{proposition}\label{prop:mult}
Let $A \in \mathbb{F}^{n_1 \times \dots \times n_d}$ and $x_1 \in \mathbb{F}^{m_1}, \dots, x_e \in \mathbb{F}^{m_e}$. Then
\begin{align*}
\|A \otimes x_1 \otimes \dots \otimes x_e \|_{\sigma, \mathbb{F}} &=  \|A \|_{\sigma, \mathbb{F}}  \| x_1 \| \cdots \| x_e \|,\\
\|A \otimes x_1 \otimes \dots \otimes x_e \|_{*, \mathbb{F}} &=  \|A \|_{*, \mathbb{F}}  \| x_1 \| \cdots \| x_e \|.
\end{align*}
\end{proposition}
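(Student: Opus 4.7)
The plan is to handle the two identities separately, with the spectral equality done by a direct manipulation of the supremum and the nuclear equality done in two directions, one by scaling a decomposition and the other by either duality with the spectral norm or a direct contraction.

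For the spectral identity, I would use the representation \eqref{def2tenspnrm} applied to $A\otimes x_1\otimes\dots\otimes x_e$, writing a generic unit test tensor in $\mathbb{F}^{n_1\times\dots\times n_d\times m_1\times\dots\times m_e}$ as a simple tensor $u_1\otimes\dots\otimes u_d\otimes v_1\otimes\dots\otimes v_e$ with $\lVert u_k\rVert=\lVert v_j\rVert=1$. The key observation is that the Hermitian inner product factors,
\[
\langle A\otimes x_1\otimes\dots\otimes x_e,\;u_1\otimes\dots\otimes u_d\otimes v_1\otimes\dots\otimes v_e\rangle=\langle A,u_1\otimes\dots\otimes u_d\rangle\prod_{j=1}^e\langle x_j,v_j\rangle,
\]
so taking absolute values and supremizing over the $u_k$'s and $v_j$'s independently gives the product $\lVert A\rVert_{\sigma,\mathbb{F}}\prod_{j=1}^e\sup_{\lVert v_j\rVert=1}|\langle x_j,v_j\rangle|$. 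By Cauchy--Schwarz each inner sup equals $\lVert x_j\rVert$, attained at $v_j=x_j/\lVert x_j\rVert$, giving the spectral identity.

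For the nuclear identity, the easy direction $\lVert A\otimes x_1\otimes\dots\otimes x_e\rVert_{*,\mathbb{F}}\le\lVert A\rVert_{*,\mathbb{F}}\lVert x_1\rVert\cdots\lVert x_e\rVert$ comes from taking any decomposition $A=\sum_i\lambda_i u_{1,i}\otimes\dots\otimes u_{d,i}$ with unit $u_{k,i}$ and forming
\[
A\otimes x_1\otimes\dots\otimes x_e=\sum_i\bigl(\lambda_i\lVert x_1\rVert\cdots\lVert x_e\rVert\bigr)\,u_{1,i}\otimes\dots\otimes u_{d,i}\otimes\tfrac{x_1}{\lVert x_1\rVert}\otimes\dots\otimes\tfrac{x_e}{\lVert x_e\rVert},
\]
whose coefficient $l^1$-norm is $\lVert x_1\rVert\cdots\lVert x_e\rVert\sum_i|\lambda_i|$; then infimize over decompositions of $A$.

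For the reverse inequality, the cleanest route is by duality using the spectral identity just proved. Since $\lVert\cdot\rVert_{\sigma,\mathbb{F}}$ and $\lVert\cdot\rVert_{*,\mathbb{F}}$ are dual, for any $\varepsilon>0$ one can pick $B_0\in\mathbb{F}^{n_1\times\dots\times n_d}$ with $\lVert B_0\rVert_{\sigma,\mathbb{F}}\le 1$ and $|\langle A,B_0\rangle|\ge\lVert A\rVert_{*,\mathbb{F}}-\varepsilon$. Set
\[
S\coloneqq B_0\otimes\tfrac{x_1}{\lVert x_1\rVert}\otimes\dots\otimes\tfrac{x_e}{\lVert x_e\rVert}.
\]
The spectral multiplicativity already proved yields $\lVert S\rVert_{\sigma,\mathbb{F}}=\lVert B_0\rVert_{\sigma,\mathbb{F}}\le 1$, and the factorization of the inner product gives $\langle A\otimes x_1\otimes\dots\otimes x_e,S\rangle=\langle A,B_0\rangle\lVert x_1\rVert\cdots\lVert x_e\rVert$, so
\[
\lVert A\otimes x_1\otimes\dots\otimes x_e\rVert_{*,\mathbb{F}}\ge\bigl(\lVert A\rVert_{*,\mathbb{F}}-\varepsilon\bigr)\lVert x_1\rVert\cdots\lVert x_e\rVert.
\]
Letting $\varepsilon\to 0$ closes the loop. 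The main (mild) obstacle is this reverse nuclear bound: it is not evident from just reading off a decomposition of $A\otimes x_1\otimes\dots\otimes x_e$ that the factors $x_j$ must appear intact, so one genuinely needs either the duality trick above or an equivalent contraction argument (pair against $x_j/\lVert x_j\rVert^2$ in the last $e$ slots of any nuclear decomposition and bound the resulting coefficients by Cauchy--Schwarz). Either route is short, but the duality approach is the most transparent.
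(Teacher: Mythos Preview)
Your argument is correct. The paper does not actually give a proof of this proposition; it merely states that the result ``is clear from the definitions \eqref{def1tenspnrm} and \eqref{def1tennrm}.'' Your spectral computation is exactly the direct unpacking of \eqref{def2tenspnrm} the paper has in mind, and your nuclear upper bound is likewise straight from \eqref{def2tennrm}. For the reverse nuclear inequality you invoke the duality $\lVert\cdot\rVert_{*,\mathbb{F}}^*=\lVert\cdot\rVert_{\sigma,\mathbb{F}}$, which the paper has just established in the paragraph preceding the proposition, so this is entirely in keeping with the local toolkit; the alternative contraction argument you sketch (pair the last $e$ slots against $x_j/\lVert x_j\rVert^2$ and use Cauchy--Schwarz) is the more literal ``from the definition'' route and is probably what the authors had in mind, but both are equally short. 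One trivial omission: you implicitly assume $x_j\ne 0$ when normalizing; if some $x_j=0$ both sides vanish and there is nothing to prove, so just dispose of that case first.
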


In this article, we undertake a coordinate dependent point-of-view for broader appeal --- a $d$-tensor is synonymous with a $d$-dimensional hypermatrix. Nevertheless we could also have taken a coordinate-free approach. A $d$-tensor is an element of a tensor product of $d$ vector spaces $V_1, \dots, V_d$ and choosing a basis on each of these vector spaces allows us to represent the $d$-tensor $\mathbf{A} \in V_1 \otimes \dots \otimes V_d$ as a $d$-hypermatrix $A \in \mathbb{F}^{n_1 \times \dots \times n_d}$. Strictly speaking, the $d$-hypermatrix $A$ is a coordinate representation of the $d$-tensor $\mathbf{A}$ with respect to our choice of bases; a difference choice of bases would yield a different hypermatrix for the same tensor \cite{Lim13}.

This can be extended to tensor product of $d$ norm spaces $(V_1, \|\cdot\|_1),\dots,(V_d, \|\cdot\|_d)$ or $d$ inner product spaces $(V_1,\langle \cdot, \cdot \rangle_1), \dots, (V_d,\langle \cdot, \cdot \rangle_d)$. For inner product spaces, defining an inner product on rank-one tensors by
\[
\langle u_1 \otimes \dots \otimes u_d, v_1 \otimes \dots \otimes v_d \rangle\coloneqq \langle u_1, v_1 \rangle_1 \cdots \langle u_d, v_d \rangle_d,
\] 
and extending bilinearly to the whole of $ V_1 \otimes \dots \otimes V_d$ defines an inner product on  $ V_1 \otimes \dots \otimes V_d$. For norm spaces, there are two natural ways of defining a norm on $ V_1 \otimes \dots \otimes V_d$. Let $V_1^*,\dots, V_d^*$ be the dual spaces\footnote{For norm space $(V, \|\cdot\|)$, dual space $V^* \coloneqq \{ \varphi:V \to \mathbb{F} \;\text{linear functional}\}$ has \textit{dual norm} $\|\varphi\|^{*} \coloneqq \sup_{\|v\|=1} |\varphi(v)|$.} of $V_1,\dots, V_d$. Then
\begin{align}
\|\mathbf{A}\|_{\sigma}&\coloneqq \sup \biggl\{ \frac{|\varphi_1 \otimes \dots \otimes \varphi_d (\mathbf{A}) |}{\| \varphi_1 \|_1^* \cdots \| \varphi_d \|_d^*} :  0 \ne \varphi_k \in V_k^* \biggr\},\label{eq:proj}\\
\|\mathbf{A}\|_{*} &\coloneqq \inf\Bigl\{\sum_{i=1}^r \| v_{1,i} \|_1 \cdots \| v_{d,i} \|_d : \mathbf{A} =  \sum_{i=1}^r v_{1,i}\otimes \dots \otimes v_{d,i}, \; v_{k,i} \in V_k, \; r \in \mathbb{N} \Bigr\},\label{eq:inj}
\end{align}
i.e., essentially the spectral and nuclear norm that we defined in \eqref{def1tenspnrm} and \eqref{def1tennrm}.

For the special case $d= 2$, \eqref{eq:proj} and \eqref{eq:inj} are the well-known \emph{injective} and \emph{projective norms} \cite{DF, Gro, PST, Ryan, Scha, Wong}. In operator theory, $V_1,\dots,V_d$ are usually infinite-dimensional Banach or Hilbert spaces and so one must allow $r = \infty$ in \eqref{eq:inj}. Also, the tensor product $\otimes$ has to be more carefully defined (differently for \eqref{eq:proj} and \eqref{eq:inj}) so that these norms are finite-valued on  $ V_1 \otimes  V_2$.

We are primarily interested in the higher-order case $d \ge 3$ in this article and all our spaces will be finite-dimensional to avoid such complications.

\section{Tensor nuclear norm is special}\label{sec:special}

We would like to highlight that \eqref{def1tennrm} is the definition of tensor nuclear norm as originally defined by Grothendieck \cite{Gro} and Schatten \cite{Scha}. An alternate definition of `tensor nuclear norm' as the average of nuclear norms of matrices obtained from flattenings of a tensor has gained recent popularity. While this alternate definition may be useful for various purposes, it is nevertheless not the definition commonly accepted in mathematics \cite{DF,Ryan,PST,Wong} (see also \cite{Fri83, LC2}). In particular, the nuclear norm defined in  \eqref{def1tennrm} is precisely the dual norm of the spectral norm in \eqref{def1tenspnrm}, is naturally related to the notion of tensor rank \cite{Lim13}, and has physical meaning --- for a $d$-Hermitian tensor $A \in(\mathbb{C}^{n_1 \times \dots \times n_d})^2$ representing a  density matrix,  $\|A\|_{*,\mathbb{C}} = 1$ if and only if $A$ is $d$-partite separable\footnote{This result appeared in an earlier preprint version of this article, see \url{https://arxiv.org/abs/1410.6072v1}, but has been moved to a more specialized article \cite{FL2} focusing on quantum information theory.} \cite{FL2}.
As such, a tensor nuclear norm in this article will always be the one in \eqref{def1tennrm} or its equivalent expression \eqref{def2tennrm}.

One might think that it is possible to extend  \eqref{def2tennrm} to get a definition of `Schatten $p$-norm' for any $p > 1$. Let us take $d = 3$ for illustration.  Suppose we define
\begin{equation}\label{eq:pnorm}
\nu_p(A) \coloneqq \inf\Bigl\{\left[  \sum_{i=1}%
^{r}\lvert\lambda_i\rvert^{p}\right]  ^{1/p} : A =\sum
_{i=1}^{r}\lambda_i u_i\otimes v_i\otimes w_i,\;
\lVert u_i\rVert=\lVert v_i\rVert=\lVert w_i\rVert%
=1,\;r\in\mathbb{N}\Bigr\}.
\end{equation}
Then $\nu_1 = \| \cdot \|_*$ but in fact $\nu_p$  is identically zero for all $p > 1$.
To see this, write $u \otimes v \otimes w$ as a sum of $2^n$ identical terms
\[
u \otimes v \otimes w = \tfrac{1}{2^n} u \otimes v \otimes w + \dots +  \tfrac{1}{2^n} u \otimes v \otimes w
\]
and observe that if $p>1$, then
\[
\inf_{n \in \mathbb{N}}\left[  \sum_{i=1}^{2^n} 2^{-np}\right]  ^{1/p}=\lim_{n\to\infty}2^{-n(p-1)/p}  = 0.
\]
This of course also applies to the case  $d=2$ but note that in this case we may impose orthonormality on the factors, i.e.,
\[
\nu_p( A)\coloneqq \inf\Bigl\{\left[  \sum_{i=1}%
^{r}\lvert\lambda_i\rvert^{p}\right]  ^{1/p} : A =\sum
_{i=1}^{r}\lambda_iu_i\otimes v_i, \; \langle u_i, u_j \rangle = \delta_{ij} = \langle v_i, v_j \rangle, \; r\in\mathbb{N}\Bigr\},
\]
and the result gives us precisely the matrix  Schatten $p$-norm.
This is not possible when $d > 2$. A $d$-tensor $A \in \mathbb{F}^{n_1 \times \dots \times n_d}$ is said to be \textit{orthogonally decomposable} \cite{Tong} if it has an \textit{orthogonal decomposition} given by
\[
A =  \sum_{i=1}^r \lambda_i u_{1,i}\otimes \dots \otimes u_{d,i}, \qquad \langle u_{k,i}, u_{k,j} \rangle =\delta_{ij}, \quad i, j = 1,\dots,n_k,\; k = 1,\dots,d.
\]
There is no loss of generality if we further assume that $\lambda_1 \ge \dots \ge \lambda_r > 0$.
An orthogonal decomposition does not exist when $d \ge 3$, as a simple dimension count would show. Nonetheless we would like to point out that this notion has been vastly generalized in \cite{Derk}.

The case $p =1$ is also special. In this case \eqref{eq:pnorm} reduces to  \eqref{def2tennrm} (for $d =3$), which indeed defines a norm for any $d$-tensors.
\begin{proposition}[Tensor nuclear norm]\label{prop:tnn}
The expression in \eqref{def1tennrm}, or equivalently \eqref{def2tennrm}, defines a norm on $\mathbb{F}^{n_1\times\cdots\times n_d}$. Furthermore, the infimum is attained and $\inf$ may be
replaced by $\min$ in \eqref{def1tennrm}.
\end{proposition}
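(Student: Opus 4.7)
The plan has two parts: verify the norm axioms for \eqref{def2tennrm}, and then show the infimum is attained. For the axioms, finiteness is immediate from the standard basis decomposition $A = \sum a_{i_1\cdots i_d}\, e_{i_1}\otimes\dots\otimes e_{i_d}$, which shows $\|A\|_* \le \sum |a_{i_1\cdots i_d}| < \infty$. Non-negativity and absolute homogeneity are routine (absorb the scalar into the coefficients of a given decomposition). The triangle inequality follows by concatenating near-optimal decompositions of $A$ and $B$ and passing to the infimum. The one delicate axiom, positive definiteness, I would deduce from the Hilbert--Schmidt bound $\|A\| \le \|A\|_*$: applying the triangle inequality of $\|\cdot\|$ to any decomposition $A = \sum_{i=1}^{r}\lambda_i u_{1,i}\otimes\dots\otimes u_{d,i}$ with $\|u_{k,i}\|=1$ gives $\|A\| \le \sum_{i=1}^r |\lambda_i|\,\|u_{1,i}\|\cdots\|u_{d,i}\| = \sum_{i=1}^r |\lambda_i|$; take the infimum.

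For attainment, the key step is an a priori bound on the rank $r$ via Carathéodory's theorem. Let $D$ denote the real dimension of $\mathbb{F}^{n_1\times\dots\times n_d}$, i.e.\ $D = \prod_k n_k$ if $\mathbb{F} = \mathbb{R}$ and $D = 2\prod_k n_k$ if $\mathbb{F} = \mathbb{C}$. Given any decomposition $A = \sum_{i=1}^{r}\lambda_i u_{1,i}\otimes\dots\otimes u_{d,i}$ with $\|u_{k,i}\| = 1$ and $r > D+1$, set $s \coloneqq \sum_{i=1}^r |\lambda_i|$ and $T_i \coloneqq (\lambda_i/|\lambda_i|)\, u_{1,i}\otimes\dots\otimes u_{d,i}$ (unit-norm rank-one tensors, absorbing the phases of the $\lambda_i$). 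Then $A/s = \sum_{i=1}^{r} (|\lambda_i|/s)\, T_i$ is a convex combination of points in $\mathbb{R}^D$, so Carathéodory's theorem rewrites it as a convex combination of at most $D+1$ of the $T_i$, yielding a decomposition of $A$ with at most $D+1$ rank-one terms and coefficient $\ell^1$-sum at most $s$. Hence the infimum in \eqref{def2tennrm} may be taken over $r \le D+1$.

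With $r$ bounded, a standard compactness argument finishes the proof. The feasible set
\[
F \coloneqq \Bigl\{(\lambda_i, u_{1,i},\dots,u_{d,i})_{i=1}^{D+1} : \|u_{k,i}\|=1,\; |\lambda_i|\le\|A\|_*+1,\; A = \sum_{i=1}^{D+1}\lambda_i\, u_{1,i}\otimes\dots\otimes u_{d,i}\Bigr\}
\]
is closed and bounded in a finite-dimensional space (we may pad shorter decompositions with zero $\lambda_i$), hence compact; and the objective $(\lambda_i, u_{k,i}) \mapsto \sum_{i=1}^{D+1}|\lambda_i|$ is continuous, so it attains its minimum on $F$, which equals $\|A\|_*$ by the Carathéodory reduction. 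I expect the main obstacle to be precisely this attainment step: the defining infimum ranges over decompositions of unbounded length, so a naive minimizing-sequence argument has no built-in compactness. The Carathéodory rank reduction neutralizes this, and the remaining analysis is routine; the only points needing care are using the \emph{real} dimension $D$ when $\mathbb{F} = \mathbb{C}$ and correctly absorbing the phases of the $\lambda_i$ into the unit tensors $T_i$ before invoking Carathéodory.
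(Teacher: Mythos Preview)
Your argument is correct. The norm axioms are handled cleanly (the bound $\|A\|\le\|A\|_*$ for definiteness is the right move), and the Carath\'eodory reduction followed by compactness on the fixed-length parameter space is a valid way to get attainment.

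The paper takes a different, more geometric route: it identifies $\|\cdot\|_*$ with the Minkowski functional $\nu$ of $\mathcal{C}=\operatorname{conv}(\mathcal{E})$, where $\mathcal{E}$ is the set of unit-norm rank-one tensors. Once one knows $\|A\|_*=\nu(A)$, attainment is immediate: if $\nu(A)=1$ then $A\in\mathcal{C}$, so by definition of the convex hull $A$ is already a finite convex combination of points of $\mathcal{E}$, and that combination is a nuclear decomposition. No separate compactness step is needed, and the norm axioms come for free from the gauge construction. Your approach, by contrast, verifies the axioms by hand and makes the Carath\'eodory step explicit, then appeals to compactness of a parameter space. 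The trade-off: the paper's proof is shorter and yields the useful identification of the nuclear unit ball with $\operatorname{conv}(\mathcal{E})$ as a byproduct; your proof is more self-contained (no Minkowski-functional machinery) and gives an explicit a~priori bound $r\le D+1$ on the number of terms in a nuclear decomposition, which the paper only obtains later (Proposition~\ref{prop:convcombextpts}).
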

\begin{proof}
Consider the set of all norm-one rank-one tensors, 
\[
\mathcal{E} \coloneqq \{u_1 \otimes \dots \otimes u_d \in \mathbb{F}^{n_1 \times \dots \times n_d} :  \|u_1 \| = \dots = \| u_d\|=1\}.
\]  
The Hilbert--Schmidt norm is strictly convex, i.e., for $A, B \in \mathbb{F}^{n_1\times\cdots\times n_d}$,
$\|A + B\|<2$ whenever  $A\ne B$, $\|A\|=\|B\|=1$. 
Hence in $\mathbb{F}^{n_1\times\cdots\times n_d}$ the extreme points of the unit ball are precisely the points on the unit sphere.
It follows that any rank-one tensor $A\in  \mathcal{E}$ is not a convex combination of any finite number of points in $\mathcal{E}\setminus \{A\}$.
Let $\mathcal{C}$ be the convex hull of $\mathcal{E}$.
Then $\mathcal{C}$ is a balanced convex set with $0$ as an interior point and so it must be a unit ball of some
norm $\nu$ on $\mathbb{F}^{n_1\times\cdots\times n_d}$. Clearly $\nu(A)=1$ for all
$A\in \mathcal{E}$.  So if
\[
A =\sum_{i=1}^r\lambda_i u_{1,i} \otimes \dots \otimes u_{d,i}, \quad \|u_{1,i} \otimes \dots \otimes u_{d,i}\|=1,
\]
then
\[
\sum_{i=1}^r |\lambda_i|\ge \nu(A).
\]
Hence $\|A\|_{*}\ge \nu(A)$.  We claim that $\|A\|_{*}= \nu(A)$.  Assume first that $\nu(A)=1$.
Then $A\in \{B \in \mathbb{F}^{n_1\times\cdots\times n_d} : \nu(B) = 1\} =  \mathcal{C}$.  So $A$ is a convex combination of a finite number of points in  $\mathcal{E}$, i.e.,
\[
A=\sum_{i=1}^r \lambda_i  u_{1,i}\otimes \dots \otimes u_{d,i}, \quad \| u_{1,i}\otimes \dots \otimes u_{d,i}\|=1, \; \lambda_1,\dots,\lambda_r>0, \; \sum_{i=1}^ r \lambda_i=1.
\]
By the definition of nuclear norm \eqref{def2tennrm}, $\|A\|_{*}\le 1=\nu(A)$.  So $\|A\|_{*}=1$ and the above decomposition of $A$ attains its nuclear norm. Thus if $\nu(A)=1$, the infimum in \eqref{def2tennrm} is attained.  For general $A\ne 0$, we consider $B=\frac{1}{\nu(A)} A$.
As $\|B\|_{*}=\nu(B)=1$, we have $\nu(A)=\|A\|_{*}$ and the infimum in \eqref{def2tennrm} is likewise attained.
\end{proof}

\section{Nuclear decompositions of tensors}\label{sec:decomp}

We will call the nuclear norm attaining decomposition in Proposition~\ref{prop:tnn} a nuclear decomposition for short, i.e., for $A \in \mathbb{F}^{n_1\times\cdots\times n_d}$,
\begin{equation}\label{defmindecT}
A=\sum_{i=1}^r x_{1,i} \otimes \dots \otimes x_{d,i}
\end{equation}
is a \textit{nuclear decomposition} over $\mathbb{F}$ if and only if 
\begin{equation}\label{eqmindecT}
\|A\|_{*,\mathbb{F}}=\sum_{i=1}^r  \|x_{1,i}\|\cdots  \|x_{d,i}\|,
\end{equation}
where $x_{k,i} \in \mathbb{F}^{n_k}$, $k =1,\dots,d$, $i =1,\dots,r$. We define the \textit{nuclear rank} of  $A \in \mathbb{F}^{n_1\times\cdots\times n_d}$ by
\begin{equation}\label{eq:nrank}
\operatorname{rank}_*(A) \coloneqq \min\Bigl\{r \in \mathbb{N} :  A=\sum_{i=1}^r x_{1,i} \otimes \dots \otimes x_{d,i}, \; \|A\|_{*,\mathbb{F}}=\sum_{i=1}^r  \|x_{1,i}\|\cdots  \|x_{d,i}\| \Bigr\},
\end{equation}
and we will call \eqref{defmindecT} a \textit{nuclear rank decomposition} if $r =\operatorname{rank}_*(A)$.   Alternatively, we may write the decomposition in a form that resembles the matrix \textsc{svd}, i.e.,
\begin{equation}\label{svdnd}
A =\sum_{i=1}^r\lambda_i u_{1,i} \otimes \dots \otimes u_{d,i}
\end{equation}
is a nuclear decomposition over $\mathbb{F}$ if and only if
\[
\|A\|_{*,\mathbb{F}}=\sum_{i=1}^r  \lambda_i \qquad \text{and} \qquad \lambda_1 \ge \dots \ge \lambda_r > 0, \quad \| u_{k,i} \| = 1,
\]
where $u_{k,i} \in \mathbb{F}^{n_k}$, $k=1,\dots,d$, $i = 1,\dots, r$. Unlike the matrix \textsc{svd}, $\{u_{k,1},\dots, u_{k,r}\}$ does not need to be orthonormal.

The following lemma provides a way that allows us to check, in principle, when a given decomposition is a nuclear decomposition.
\begin{lemma}\label{necsufmindec}  Let $A\in \mathbb{F}^{n_1 \times \dots \times n_d}$.  Then \eqref{defmindecT} is a nuclear decomposition over  $\mathbb{F}$ if and only if
there exists $0\ne B\in  \mathbb{F}^{n_1 \times \dots \times n_d}$ with
\begin{equation}\label{necsufmindec1}
\langle B, x_{1,i} \otimes \dots \otimes x_{d,i} \rangle =\|B\|_{\sigma,\mathbb{F}}\|x_{1,i}\|\cdots  \|x_{d,i}\| , \qquad  i = 1,\dots,r.
\end{equation}
Alternatively,  \eqref{svdnd} is a nuclear decomposition over  $\mathbb{F}$ if and only if there exists $0 \ne B\in  \mathbb{F}^{n_1 \times \dots \times n_d}$ with
\[
\langle B, u_{1,i} \otimes \dots \otimes u_{d,i} \rangle = \|B\|_{\sigma,\mathbb{F}}, \qquad  i = 1,\dots,r.
\]%
\end{lemma}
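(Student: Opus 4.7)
The plan is to exploit the duality between the spectral and nuclear norms (established earlier in the excerpt: $\lVert\cdot\rVert_\sigma$ is the dual of $\lVert\cdot\rVert_*$), together with the fact that the unit ball of the spectral norm in the finite-dimensional space $\mathbb{F}^{n_1\times\dots\times n_d}$ is compact, so the dual supremum $\lVert A\rVert_* = \sup_{\lVert B\rVert_\sigma\le 1}\lvert\langle A,B\rangle\rvert$ is attained by some $B\ne 0$. The whole argument then reduces to tracking the equality cases in a chain of inequalities built from $\lvert\langle B,x_1\otimes\dots\otimes x_d\rangle\rvert\le \lVert B\rVert_{\sigma,\mathbb{F}}\lVert x_1\rVert\cdots \lVert x_d\rVert$.

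For the forward direction, I would assume \eqref{defmindecT} is a nuclear decomposition, so $\lVert A\rVert_{*,\mathbb{F}}=\sum_{i=1}^r\lVert x_{1,i}\rVert\cdots\lVert x_{d,i}\rVert$, and pick a $B\ne 0$ with $\langle A,B\rangle=\lVert A\rVert_{*,\mathbb{F}}\lVert B\rVert_{\sigma,\mathbb{F}}$ (in the complex case, after multiplying by a unit scalar so that $\langle A,B\rangle$ is real positive). Expanding $\langle A,B\rangle$ through \eqref{defmindecT} gives
\[
\sum_{i=1}^r \langle x_{1,i}\otimes\dots\otimes x_{d,i},B\rangle \;=\; \lVert B\rVert_{\sigma,\mathbb{F}}\sum_{i=1}^r\lVert x_{1,i}\rVert\cdots\lVert x_{d,i}\rVert.
\]
Each summand on the left is bounded in absolute value by the corresponding summand on the right, and the right side is a positive real number, so equality of the sums forces each $\langle x_{1,i}\otimes\dots\otimes x_{d,i},B\rangle$ to be real, nonnegative, and to saturate its bound. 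Conjugating (which is trivial in the real case) then yields \eqref{necsufmindec1}.

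For the converse, assume such a $B$ exists. Summing \eqref{necsufmindec1} over $i$ gives
\[
\langle B,A\rangle \;=\; \sum_{i=1}^r \langle B, x_{1,i}\otimes\dots\otimes x_{d,i}\rangle \;=\; \lVert B\rVert_{\sigma,\mathbb{F}}\sum_{i=1}^r\lVert x_{1,i}\rVert\cdots\lVert x_{d,i}\rVert.
\]
By duality $\lvert\langle B,A\rangle\rvert\le \lVert B\rVert_{\sigma,\mathbb{F}}\lVert A\rVert_{*,\mathbb{F}}$, so cancelling $\lVert B\rVert_{\sigma,\mathbb{F}}>0$ yields $\sum_i\lVert x_{1,i}\rVert\cdots\lVert x_{d,i}\rVert\le\lVert A\rVert_{*,\mathbb{F}}$. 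The reverse inequality is immediate from the definition \eqref{def1tennrm}, so the decomposition is nuclear. The alternative form for \eqref{svdnd} follows by writing $x_{k,i}=\lambda_i^{1/d}u_{k,i}$ with $\lVert u_{k,i}\rVert=1$ and absorbing the norms, which reduces \eqref{necsufmindec1} to $\langle B,u_{1,i}\otimes\dots\otimes u_{d,i}\rangle=\lVert B\rVert_{\sigma,\mathbb{F}}$.

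The one subtlety, and the main thing to handle carefully, is the complex case: one must justify the initial phase adjustment of $B$ so that $\langle A,B\rangle$ is real and positive, and then conclude that \emph{each individual} inner product $\langle B, x_{1,i}\otimes\dots\otimes x_{d,i}\rangle$ is real positive. This is a standard equality-in-the-triangle-inequality argument, but it is what makes the conditions \eqref{necsufmindec1} equalities rather than just $\lvert\cdot\rvert$-equalities. Everything else is routine finite-dimensional duality, relying only on compactness of the spectral unit ball and the dual-norm relation already recorded in Section~\ref{sec:norms}.
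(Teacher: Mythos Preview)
Your proof is correct and follows essentially the same route as the paper's: both use the spectral/nuclear duality, the existence of a norming tensor $B$ (you via compactness of the spectral unit ball, the paper via a supporting-hyperplane argument), and the equality case in the termwise chain $\operatorname{Re}\langle x_{1,i}\otimes\dots\otimes x_{d,i},B\rangle\le|\langle x_{1,i}\otimes\dots\otimes x_{d,i},B\rangle|\le\|B\|_{\sigma,\mathbb{F}}\prod_k\|x_{k,i}\|$. The only cosmetic difference is that the paper handles the complex case by working with $\operatorname{Re}\langle A,B\rangle$ throughout, whereas you first rotate $B$ by a unit phase so that $\langle A,B\rangle$ is real and positive; these are equivalent.
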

\begin{proof}
Since the nuclear and spectral norms are dual norms, $\operatorname{Re} \langle A,B \rangle \le \|A\|_{*,\mathbb{F}}\|B\|_{\sigma,\mathbb{F}}$.  Suppose $\|B\|_{\sigma,\mathbb{F}}=1$ and $A\ne 0$.
Then $\operatorname{Re} \langle A,B \rangle = \|A\|_{*,\mathbb{F}}\|B\|_{\sigma,\mathbb{F}}$ if and only if the real functional $X \mapsto \operatorname{Re} \langle X,B \rangle $ is a supporting hyperplane of the ball
$\{X\in  \mathbb{F}^{n_1 \times \dots \times n_d} : \|X\|_{*,\mathbb{F}}\le \|A\|_{*,\mathbb{F}}\}$ at the point $X=A$. So $\operatorname{Re} \langle A,B \rangle = \|A\|_{*,\mathbb{F}}$ is always attained for some $B$ with  $\|B\|_{\sigma,\mathbb{F}}=1$.

Suppose \eqref{defmindecT} is a nuclear decomposition, i.e., \eqref{eqmindecT}  holds. Let $B\in \mathbb{F}^{n_1 \times \dots \times n_d}$, $\|B\|_{\sigma,\mathbb{F}}=1$ be
such that $\operatorname{Re}\langle A,B\rangle=\|A\|_{*,\mathbb{F}}$. Then
\[
\|A\|_{*,\mathbb{F}}= \operatorname{Re} \langle A,B \rangle  =\sum_{i=1}^r \operatorname{Re} \langle x_{1,i} \otimes \dots \otimes x_{d,i}, B \rangle \le\sum_{i=1}^r \prod_{k=1}^d \|x_{k,i}\|=\|A\|_{*,\mathbb{F}}.
\]
Therefore equality holds and we have \eqref{necsufmindec1}.

Suppose \eqref{necsufmindec1}  holds.  We may assume without loss of generality that $\|B\|_{\sigma,\mathbb{F}}=1$ and $\prod_{k=1}^d \|x_{k,i}\|>0$ for each $i=1,\dots,r$.
Then
\[
\|A\|_{*,\mathbb{F}}=\|A\|_{*,\mathbb{F}}\|B\|_{\sigma,\mathbb{F}}\ge \operatorname{Re} \langle A,B \rangle 
=\sum_{i=1}^r \langle x_{1,i} \otimes \dots \otimes x_{d,i}, B \rangle
=\sum_{i=1}^r \prod_{k=1}^d \|x_{k,i}\|.
\]
It follows from the minimality in \eqref{def1tennrm} that \eqref{defmindecT} is a nuclear decomposition of $A$.
\end{proof}

As an illustration of Lemma~\ref{necsufmindec}, we prove that for an orthogonally decomposable tensor, every orthogonal decomposition is a nuclear decomposition, a special case of \cite[Theorem~1.11]{Derk}.
\begin{corollary}
Let $A \in \mathbb{F}^{n_1 \times \dots \times n_d}$ be orthogonally decomposable and
\begin{equation}\label{eq:odeco}
A =  \sum_{i=1}^r \lambda_i u_{1,i}\otimes \dots \otimes u_{d,i}, \quad \langle u_{k,i}, u_{k,j} \rangle =\delta_{ij},
\end{equation}
be an orthogonal decomposition. Then
\[
\|A\| =\Bigl( \sum_{i=1}^r |\lambda_i|^2 \Bigr)^{1/2}, \qquad \|A\|_{\sigma,\mathbb{F}} = \max_{i=1,\dots,r} |\lambda_i |, \qquad \|A\|_{*,\mathbb{F}} = |\lambda_1 | + \dots +| \lambda_r|.
\]
\end{corollary}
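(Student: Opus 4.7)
The three identities decouple and can be tackled one at a time, with the spectral and nuclear norm pieces depending on each other.

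For the Hilbert--Schmidt norm, I would simply expand
\[
\|A\|^{2} = \Bigl\langle \sum_{i}\lambda_{i}\, u_{1,i}\!\otimes\!\cdots\!\otimes\!u_{d,i},\; \sum_{j}\lambda_{j}\, u_{1,j}\!\otimes\!\cdots\!\otimes\!u_{d,j}\Bigr\rangle
\]
by sesquilinearity and apply $\langle u_{k,i},u_{k,j}\rangle = \delta_{ij}$ in each mode; the cross terms vanish and what remains is $\sum_{i}|\lambda_{i}|^{2}$.

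For the spectral norm, fix unit vectors $x_{k}\in\mathbb{F}^{n_{k}}$ and write
\[
\langle A, x_{1}\!\otimes\!\cdots\!\otimes\!x_{d}\rangle \;=\; \sum_{j=1}^{r}\lambda_{j}\,\langle u_{1,j},x_{1}\rangle\prod_{k=2}^{d}\langle u_{k,j},x_{k}\rangle .
\]
The plan is to apply Cauchy--Schwarz splitting off the mode-$1$ factor, then bound the two resulting sums by Bessel's inequality applied to the orthonormal systems $\{u_{1,j}\}_{j}$ and $\{u_{2,j}\}_{j}$, using $|\langle u_{k,j},x_{k}\rangle|\le 1$ to discard the remaining factors. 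This yields $\|A\|_{\sigma,\mathbb{F}}\le\max_{j}|\lambda_{j}|$. The matching lower bound is achieved by substituting $x_{k}=u_{k,i^{*}}$ where $i^{*}$ realises the maximum: orthonormality then gives $|\langle A, u_{1,i^{*}}\!\otimes\!\cdots\!\otimes\!u_{d,i^{*}}\rangle|=|\lambda_{i^{*}}|$.

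For the nuclear norm, the inequality $\|A\|_{*,\mathbb{F}}\le\sum_{i}|\lambda_{i}|$ is immediate from the triangle inequality applied to \eqref{eq:odeco}, since each $u_{k,i}$ is a unit vector. For the reverse inequality I would invoke Lemma \ref{necsufmindec}. Without loss of generality $\lambda_{i}>0$: for complex $\lambda_{i}=|\lambda_{i}|e^{i\theta_{i}}$ one absorbs the phase into $u_{1,i}$, and multiplying a single orthonormal vector by a unit modulus scalar preserves orthonormality. The natural certificate is
\[
B \;=\; \sum_{j=1}^{r} u_{1,j}\otimes\cdots\otimes u_{d,j},
\]
which is itself orthogonally decomposable with all coefficients equal to $1$; by the spectral norm formula already proved, $\|B\|_{\sigma,\mathbb{F}}=1$, and by orthonormality $\langle B, u_{1,i}\otimes\cdots\otimes u_{d,i}\rangle = 1$ for every $i$. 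The alternative form of Lemma \ref{necsufmindec} then certifies \eqref{eq:odeco} as a nuclear decomposition, giving $\|A\|_{*,\mathbb{F}}=\sum_{i}|\lambda_{i}|$.

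The only step requiring genuine care is the spectral upper bound, where one must resist trying to apply Bessel directly to the $d$-fold product and instead peel off one mode via Cauchy--Schwarz before using that $|\langle u_{k,j},x_{k}\rangle|\le 1$ to collapse the remaining factors back to a Bessel-controlled single-mode sum. Everything else is either a direct expansion or a textbook invocation of the preceding lemma.
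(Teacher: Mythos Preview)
Your proposal is correct and follows essentially the same route as the paper: Pythagoras for the Hilbert--Schmidt norm, Cauchy--Schwarz combined with Bessel's inequality on two modes (collapsing the remaining modes via $|\langle u_{k,j},x_k\rangle|\le 1$) for the spectral upper bound, and the certificate $B=\sum_{j} u_{1,j}\otimes\cdots\otimes u_{d,j}$ together with Lemma~\ref{necsufmindec} for the nuclear norm. The only cosmetic difference is the order of operations in the spectral bound---the paper first extracts $\max_j|\lambda_j|$ and then applies Cauchy--Schwarz to modes~$1$ and~$2$, whereas you apply Cauchy--Schwarz first and extract the maximum afterwards---but both variants are equally valid.
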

\begin{proof}
The expression for Hilbert--Schmidt norm is immediate from Pythagoras theorem since $\{ u_{1,i}\otimes \dots \otimes u_{d,i}: i=1,\dots,r\}$ is orthonormal. We may assume that  $\lambda_1 \ge \dots \ge \lambda_r > 0$. Let $v_k \in \mathbb{F}^{n_k}$, $k=1,\dots,d$, be unit vectors.  Clearly, $\lvert\langle u_{k,i}, v_k\rangle\rvert\le 1$ for all $i$ and $k$. By Bessel's inequality,
$\sum_{i=1}^r  \lvert \langle u_{k,i},v_k\rangle\rvert^2\le \lvert v_k\rvert^2=1$ for $k=1,2$.  Hence
\begin{align*}
|\langle A, v_1 \otimes \dots \otimes v_d\rangle| &\le \sum_{i=1}^r  \lambda_i | \langle u_{1,i}\otimes \dots \otimes u_{d,i}, v_1 \otimes \dots \otimes v_d \rangle| \\
&= \sum_{i=1}^r  \lambda_i\prod_{k=1}^d |\langle u_{k,i},v_k \rangle| \le \lambda_1\sum_{i=1}^r  |\langle u_{1,i},v_1\rangle|| \langle u_{2,i},v_2\rangle|\\
&\le \lambda_1 \Bigl(\sum_{i=1}^r  |\langle u_{1,i},v_1\rangle |^2\Bigr)^{1/2} \Bigl(\sum_{i=1}^r  |\langle u_{2,i},v_2\rangle |^2\Bigr)^{1/2}\le \lambda_1.
\end{align*}
Choose $v_k=u_{k,i}$ for $k=1,\dots,d$ to deduce that $\|A \|_{\sigma,\mathbb{F}} =\lambda_1 = \max_{i=1,\dots,r} \lambda_i $.  Now take $B\coloneqq \sum_{i=1}^r  u_{1,i}\otimes \dots \otimes u_{d,i}$ and observe that $\|B\|_{\sigma,\mathbb{F}} = 1$ and that $\langle B, u_{1,i}\otimes \dots \otimes u_{d,i}\rangle =1$ for all $i=1,\dots,r$.
Hence by Lemma~\ref{necsufmindec}, \eqref{eq:odeco} is a nuclear decomposition and $\|A\|_* = \sum_{i=1}^r \lambda_i$.
\end{proof}%

For $\mathbb{F} = \mathbb{R}$, we establish a generalization of nuclear decomposition that holds true for any finite-dimensional norm space $V$.  The next result essentially says that  `every norm is a nuclear norm' in an appropriate sense.
\begin{proposition}\label{prop:convcombextpts}
Let $V$ be a real vector space of dimension $n$ and $\nu : V \to [0,\infty)$ be a norm. Let $\mathcal{E}$ be the set of the extreme points of the unit ball $B_{\nu}:=\{x\in
V : \nu(x)\le 1\}$. If $\nu(x)=1$, then there exists a decomposition
\begin{equation}\label{gnd}
x=\sum_{i=1}^r \lambda_i x_i,
\end{equation}
where $\lambda_1,\dots, \lambda_r >0$, $\lambda_1 + \dots +\lambda_r=1$, and $x_1,\dots,x_r\in \mathcal{E}$ are linearly independent. Furthermore, for any $x\in V$,
\begin{equation}\label{charnux}
\nu(x)=\min\Bigl\{ \sum_{i=1}^n |\lambda_i | : x = \sum_{i=1}^n \lambda_i x_i,\; x_1,\dots,x_n \in \mathcal{E} \; \text{linearly independent}\Bigr\}.
\end{equation}
\end{proposition}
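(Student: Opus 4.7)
The plan is to prove (1) first by combining a minimality argument with a duality step, then deduce (2) by normalization and zero-padding. For (1), suppose $\nu(x)=1$. Since $V$ is finite-dimensional and $B_\nu$ is compact convex, Minkowski's theorem gives $B_\nu = \operatorname{conv}(\mathcal{E})$, and Carath\'eodory's theorem yields some finite representation $x = \sum_{i=1}^r \lambda_i x_i$ with $\lambda_i > 0$, $\sum_i \lambda_i = 1$, $x_i \in \mathcal{E}$ distinct. I would select such a representation with $r$ as small as possible. The standard convex-reduction argument — any affine dependence $\sum \mu_i x_i = 0$, $\sum \mu_i = 0$ lets one replace $\lambda_i$ by $\lambda_i - t\mu_i$ for a suitable $t>0$, preserving the constraints while zeroing out at least one coefficient — then forces $x_1,\dots,x_r$ to be \emph{affinely} independent.

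The crux is promoting affine independence to \emph{linear} independence. Since $\nu(x) = 1$, finite-dimensional duality (or Hahn--Banach) supplies $\varphi \in V^*$ with $\nu^*(\varphi)=1$ and $\varphi(x) = 1$. The chain
\[
1 = \varphi(x) = \sum_{i=1}^r \lambda_i \varphi(x_i) \le \sum_{i=1}^r \lambda_i\, \nu^*(\varphi)\,\nu(x_i) = 1
\]
forces $\varphi(x_i) = 1$ for every $i$, so the $x_i$'s all lie in the affine hyperplane $H = \{y \in V : \varphi(y) = 1\}$, which misses the origin. For points in such a hyperplane, any linear relation $\sum \mu_i x_i = 0$ automatically yields $\sum \mu_i = \varphi(0) = 0$ and is thus an affine relation; combined with the affine independence from minimality, this gives linear independence of $x_1,\dots,x_r$, and in particular $r \le n$.

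For (2), the triangle inequality together with $\nu(x_i) = 1$ gives the easy bound $\nu(x) \le \sum_{i=1}^n \lvert\lambda_i\rvert$ for any admissible decomposition. For attainment: if $x = 0$ take all $\lambda_i = 0$; if $x \ne 0$ apply (1) to $y = x/\nu(x)$ to obtain $y = \sum_{i=1}^r \mu_i y_i$ with $y_i \in \mathcal{E}$ linearly independent, $\mu_i > 0$, $\sum \mu_i = 1$, and $r \le n$. I then extend $\{y_1,\dots,y_r\}$ to a linearly independent set $\{y_1,\dots,y_n\} \subset \mathcal{E}$, which is possible because $0$ is an interior point of $B_\nu = \operatorname{conv}(\mathcal{E})$, so $\operatorname{span}(\mathcal{E}) = V$ and any linearly independent subset of $\mathcal{E}$ can be extended within $\mathcal{E}$ to a basis. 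Setting $\lambda_i = \nu(x)\mu_i$ for $i \le r$ and $\lambda_i = 0$ for $i > r$ gives $x = \sum_{i=1}^n \lambda_i y_i$ with $\sum_{i=1}^n \lvert\lambda_i\rvert = \nu(x)$, realizing the minimum. The main obstacle throughout is the leap from affine to linear independence in (1): Carath\'eodory alone delivers up to $n+1$ affinely independent extreme points, and the supporting-functional step is precisely what pins them to a hyperplane off the origin, cutting the count down to $n$ linearly independent vectors.
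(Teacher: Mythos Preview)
Your proof is correct and follows the same overall architecture as the paper's, but the two diverge at the key step of upgrading a minimal convex combination to a \emph{linearly} independent one. The paper argues directly: given a nontrivial linear relation $\sum_i \beta_i x_i = 0$ among the $x_i$, if $\sum_i \beta_i \ne 0$ (say $>0$), then replacing each $\lambda_i$ by $\lambda_i - t\beta_i$ for small $t>0$ keeps the coefficients nonnegative but makes $\sum_i(\lambda_i - t\beta_i) < 1$, contradicting $\nu(x)=1$ via the triangle inequality; hence every linear relation is already affine, and minimality of $r$ then kills it. You instead invoke a supporting functional $\varphi$ with $\varphi(x)=1=\nu^*(\varphi)$ to pin all the $x_i$ onto the hyperplane $\{\varphi=1\}$, on which linear and affine dependence coincide automatically. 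The paper's route is a touch more elementary---no duality is needed, only the triangle inequality---while yours makes the geometry transparent: the $x_i$ in a minimal decomposition all lie on a single supporting face of $B_\nu$ that misses the origin, which is exactly why affine independence suffices. For part~(2) the two arguments are essentially identical.
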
  
\begin{proof}
Let $\nu(x)=1$.  By Krein--Milman, $x$ is a convex combination of the extreme points of $B_{\nu}$,
\[
x=\sum_{i=1}^r \lambda_i x_i, \quad x_1,\dots,x_r\in\mathcal{E},  \quad \lambda_1,\dots,\lambda_r > 0,\quad  \sum_{i=1}^r \lambda_i=1.
\] 
Let $r$ be minimum.  We claim that for such a minimum decomposition $x_1,\dots,x_r$ must be linearly independent.
Suppose not, then there is a non-trivial linear combination
\begin{equation}\label{eq:betas}
\sum_{i=1}^r \beta_i x_i=0.
\end{equation}
We claim that $\sum_{i=1}^r \beta_i=0$.  Suppose not. Then we may assume that $\sum_{i=1}^r \beta_i>0$ (if not, we  replace $\beta_i$ by $-\beta_i$ in \eqref{eq:betas}).
Choose $t>0$ such that  $\lambda_i-t\beta_i\ge 0$ for $i=1,\dots,r$.  Then
\[
1=\nu(x)=\nu\Bigl(\sum_{i=1}^r (\lambda_i-t\beta_i)x_i\Bigr)\le \sum_{i=1}^r (\lambda_i-t\beta_i)\nu(x_i)=\sum_{i=1}^r \lambda_i-t\beta_i=1-t\sum_{i=1}^r\beta_i<1,
\]
a contradiction. Hence $\sum_{i=1}^r \beta_i=0$. By our earlier assumption that the linear combination in \eqref{eq:betas} is nontrivial, not all $\beta_i$'s are zero; so we may choose $t > 0$ such that $\lambda_i-t\beta_i\ge 0$ for all $i=1,\dots,r$  and $\lambda_i-t\beta_i=0$ for at least one $i$. In which case the decomposition $x=\sum_{i=1}^r (\lambda_i-t\beta_i)x_i$ contains fewer than $r$ terms, contradicting the minimality of $r$.  Hence $x_1,\dots,x_r$ are linearly independent. Clearly $r \le n$.

We now prove the second part.  Since $-B_{\nu}=B_{\nu}$, it follows that $-\mathcal{E}=\mathcal{E}$.  
Since $B_{\nu}$ has nonempty interior,  $\operatorname{span}_\mathbb{R}(\mathcal{E}) = V$.  So any $x \in V$ may be written as a linear combination
\begin{equation}\label{decxM}
x = \sum_{i=1}^n \lambda_i x_i, \qquad x_1,\dots,x_n \in \mathcal{E} \; \text{linearly independent}.
\end{equation}
Since $\nu(x_i)=1$ for $i=1,\dots,n$, $\nu(x)\le \sum_{i=1}^n |\lambda_i|$, and thus the right-hand side of \eqref{charnux} is not less than $\nu(x)$.  It remains to show that there exist linearly independent $x_1,\dots,x_n\in \mathcal{E}$ such that the the decomposition \eqref{decxM} attains $\nu(x)=\sum_{i=1}^n |\lambda_i|$.  This is trivial for $x=0$  and we may assume that $x\ne 0$. Upon normalizing, we may further
assume that $\nu(x)=1$. By the earlier part, we have a convex decomposition $x=\sum_{i=1}^r \lambda_i x_i$ where $x_1,\dots,x_r\in\mathcal{E}$ and $\sum_{i=1}^r \lambda_i=1$. If $r=n$, we are done. If $r < n$, we extend $x_1,\dots,x_r$ to $x_1,\dots,x_n\in\mathcal{E}$, a basis of $V$; note that this is always possible since $\mathcal{E}$ is a spanning set.  Then $x=\sum_{i=1}^n \lambda_i x_i$ by  setting $\lambda_i \coloneqq 0$ for $i = r+1,\dots,  n$. Hence $1=\nu(x)=\sum_{i=1}^n |\lambda_i|$.
\end{proof}

For any $0 \ne x \in V$, we may apply Proposition~\ref{prop:convcombextpts} to the unit vector $x/\nu(x)$ to obtain a \textit{nuclear  decomposition} for $x$,
\begin{equation}\label{gnd1}
x=\lambda_1 x_1+ \dots +\lambda_r x_r, \qquad  \nu(x) = \lambda_1+ \dots + \lambda_r, \qquad \lambda_1 \ge \dots \ge \lambda_r >0,
\end{equation}
where $x_1,\dots,x_r$ are extreme points of $B_\nu$.
We define  \textit{nuclear rank} of $x \in V$, denoted by $\operatorname{rank}_\nu ( x)$, to be the minimum $r \in \mathbb{N}$ such that \eqref{gnd1} holds. We set $\operatorname{rank}_\nu ( x) = 0$ iff $x = 0$. A nuclear decomposition \eqref{gnd1} where $r =\operatorname{rank}_\nu ( x) $ is called a \textit{nuclear rank decomposition}.  Note that the linear independence of $x_1,\dots,x_r$ in \eqref{gnd1} is automatic if it is  a nuclear rank decomposition.
\begin{proposition}\label{prop:usc}  
Let $V$ be a real vector space of dimension $n$ and $\nu : V \to [0,\infty)$ be a norm. Suppose $\mathcal{E}$, the set of the extreme points of the unit ball $B_{\nu}$, is compact. 
Then the nuclear rank $\operatorname{rank}_\nu : V \to \mathbb{R}$ is a upper semicontinuous function, i.e., if $(x_m)_{m=1}^\infty$  is a convergent sequence in $V$ with $\operatorname{rank}_\nu  (x_m)  \le r$ for all $m \in \mathbb{N}$, then $x = \lim_{m\to \infty} x_m $ must  have  $\operatorname{rank}_\nu  (x) \le r$.
\end{proposition}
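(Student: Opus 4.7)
The plan is a direct compactness argument. For each $m$, let $s_m = \operatorname{rank}_\nu(x_m) \le r$ and fix a nuclear rank decomposition
\[
x_m = \sum_{i=1}^{s_m} \lambda_i^{(m)} x_i^{(m)}, \qquad x_i^{(m)} \in \mathcal{E}, \quad \lambda_i^{(m)} > 0, \quad \sum_{i=1}^{s_m} \lambda_i^{(m)} = \nu(x_m).
\]
Since each $s_m$ lies in the finite set $\{0,1,\dots,r\}$, by the pigeonhole principle I can first pass to a subsequence on which $s_m$ is constantly equal to some value $s \le r$. From here on all indexing runs over $i = 1,\dots,s$.

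Next I exploit the two compactness features available: continuity of $\nu$ on a finite-dimensional real vector space, and compactness of $\mathcal{E}$. Because $x_m \to x$ implies $\nu(x_m) \to \nu(x)$, the sums $\sum_i \lambda_i^{(m)}$ are bounded, and so each coefficient sequence $(\lambda_i^{(m)})_{m \ge 1}$ is bounded in $[0,\infty)$. Simultaneously, each sequence $(x_i^{(m)})_{m \ge 1}$ lies in the compact set $\mathcal{E}$. Passing to further subsequences coordinate by coordinate, I may assume $\lambda_i^{(m)} \to \lambda_i \ge 0$ and $x_i^{(m)} \to x_i \in \mathcal{E}$ for each $i$. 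Taking $m \to \infty$ in the decomposition and using continuity of $\nu$ yields
\[
x = \sum_{i=1}^{s} \lambda_i x_i, \qquad \sum_{i=1}^{s} \lambda_i \;=\; \lim_{m \to \infty}\nu(x_m) \;=\; \nu(x).
\]
Discarding indices with $\lambda_i = 0$ and reindexing gives a decomposition of $x$ into at most $s$ extreme points of $B_\nu$ with positive coefficients summing to $\nu(x)$; this is exactly the form \eqref{gnd1} of a nuclear decomposition. Consequently $\operatorname{rank}_\nu(x) \le s \le r$, as desired.

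I do not expect a genuine obstacle, only bookkeeping: one must verify that the limit inherits membership in $\mathcal{E}$ (supplied by the compactness hypothesis, which prevents extreme points from drifting off or collapsing to zero) and that the coefficient sum in the limit is still $\nu(x)$ (supplied by continuity of $\nu$). The pigeonhole reduction to fixed $s$ is the one step where the uniform bound $\operatorname{rank}_\nu(x_m) \le r$ is truly used; without it the argument would produce only a countable, possibly non-convergent, collection of terms. Note that linear independence of the limiting $x_i$'s need not be preserved, but this is irrelevant since the definition of $\operatorname{rank}_\nu$ does not require it.
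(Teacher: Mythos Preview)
Your proof is correct and follows essentially the same compactness argument as the paper. The only cosmetic difference is that the paper pads each decomposition to exactly $r$ terms by allowing some $\lambda_{m,i} = 0$ (choosing arbitrary points of $\mathcal{E}$ for the corresponding $x_{m,i}$), whereas you use pigeonhole to pass to a subsequence with constant $s_m = s$; both devices serve the same purpose of fixing the number of terms before extracting convergent subsequences.
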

\begin{proof}
 For each $m \in \mathbb{N}$, since $\operatorname{rank}_\nu  (x_m)  \le r$, $x_m$ has a nuclear decomposition $x_m=\sum_{i=1}^{r}\lambda_{m,i} x_{m,i}$ with $\sum_{i=1}^r \lambda_{m,i}= \nu(x_m)$, $\lambda_{m,1},\dots,\lambda_{m,r} \ge 0$, and $x_{m,1},\dots, x_{m,r} \in \mathcal{E}$. Since $\mathcal{E}$ is compact, by passing through subsequences $r$ times, we obtain a nuclear decomposition $x = \sum_{i=1}^r \lambda_i x_i$ with $\sum_{i=1}^r \lambda_i= \nu(x)$,  $\lambda_1,\dots,\lambda_r \ge 0$, and $x_1,\dots,x_r \in\mathcal{E}$.  Hence $\operatorname{rank}_\nu  (x) \le r$.
\end{proof}
If $V = \mathbb{R}^{n_1 \times \dots \times n_d}$ and $\nu = \|\cdot\|_{*,\mathbb{R}}$, then $\mathcal{E}=\{u_1\otimes\dots\otimes u_d : \|u_1\|=\dots=\|u_d\|=1\}$ and \eqref{gnd1} gives a nuclear decomposition in the sense it was defined in \eqref{eq:nrank}. Also, since $\mathcal{E}$ is compact, tensor nuclear rank  is upper semicontinuous. The lack of upper semicontinuity in tensor rank has been a source of many problems \cite{DSL}, particularly the best rank-$r$ approximation problem for $d$-tensors does not have a solution when $r \ge 2$ and $d \ge 3$. We note that the use of nuclear rank would  alleviate this problem.
\begin{corollary}
For any $A \in \mathbb{R}^{n_1 \times \dots \times n_d}$, the best nuclear rank-$r$ approximation problem
\[
\operatorname{argmin}\{ \| A - X \| : \operatorname{rank}_*(X) \le r \}
\]
always has a solution.
\end{corollary}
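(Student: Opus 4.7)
The plan is to deduce the corollary from Proposition~\ref{prop:usc} via the standard compactness argument for existence of best approximations. The key point is that the feasible set $S_r \coloneqq \{X \in \mathbb{R}^{n_1 \times \dots \times n_d} : \operatorname{rank}_*(X) \le r\}$ is closed, and that we can restrict the search to a bounded (hence compact) subset of $S_r$ without excluding any minimizer.

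First, I would verify that the hypothesis of Proposition~\ref{prop:usc} applies with $V = \mathbb{R}^{n_1\times\dots\times n_d}$ and $\nu = \|\cdot\|_{*,\mathbb{R}}$. As noted in the paragraph preceding the corollary, the extreme points of the nuclear-norm unit ball form the set $\mathcal{E} = \{u_1 \otimes \dots \otimes u_d : \|u_1\| = \dots = \|u_d\| = 1\}$, which is a continuous image of a product of unit spheres and thus compact. Proposition~\ref{prop:usc} then gives that $\operatorname{rank}_*$ is upper semicontinuous, so the sublevel set $S_r$ is closed in $\mathbb{R}^{n_1\times\dots\times n_d}$.

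Next, I would cut down to a compact set. Since $0 \in S_r$, any minimizer $X^*$ must satisfy $\|A - X^*\| \le \|A - 0\| = \|A\|$, which by the triangle inequality forces $\|X^*\| \le 2\|A\|$. Therefore
\[
\inf_{X \in S_r} \|A - X\| = \inf_{X \in S_r,\, \|X\| \le 2\|A\|} \|A - X\|,
\]
and the set on the right is the intersection of the closed set $S_r$ with the closed Hilbert--Schmidt ball of radius $2\|A\|$, hence compact in finite dimensions. The continuous function $X \mapsto \|A - X\|$ attains its minimum on this compact set, and this minimizer solves the original problem.

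There is no real obstacle here beyond correctly invoking upper semicontinuity to get closedness of $S_r$; the routine step of truncating to a ball containing the trivial competitor $X=0$ handles the (potential) unboundedness of $S_r$. The entire corollary is essentially a one-line consequence of Proposition~\ref{prop:usc} once one observes that nuclear rank sublevel sets, unlike the corresponding tensor rank sublevel sets, are closed.
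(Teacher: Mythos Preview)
Your proposal is correct and follows essentially the same route as the paper: invoke Proposition~\ref{prop:usc} to conclude that the sublevel set $S_r$ is closed, then use a compactness argument to get existence of a minimizer. Your treatment is in fact more careful than the paper's, which simply asserts that the distance from a point to a closed set in a metric space is always attained (true in finite-dimensional normed spaces via exactly the ball-truncation you spell out, but not in general metric spaces); one minor phrasing quibble is that your sentence ``any minimizer $X^*$ must satisfy\dots'' presupposes what you are proving, so it would be cleaner to say that any $X$ with $\|A-X\|\le\|A\|$ lies in the ball of radius $2\|A\|$, hence the two infima agree.
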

\begin{proof}
By Proposition~\ref{prop:usc}, $\mathcal{S} = \{X \in \mathbb{R}^{n_1 \times \dots \times n_d}:  \operatorname{rank}_*(X) \le r \}$ is a closed set and the result follows from the fact that in any metric space the distance between a point $A$ and a closed set $\mathcal{S}$ must be attained by some $X \in \mathcal{S}$.
\end{proof}

%
%

\section{Analogue of Comon's conjecture and Banach's theorem for nuclear norm}\label{sec:banach}

We write $\mathsf{T}^d(\mathbb{F}^n) \coloneqq  (\mathbb{F}^n)^{\otimes d} = \mathbb{F}^{n \times \dots \times n}$ for the space of cubical $d$-tensors and  $\mathsf{S}^d(\mathbb{F}^n)$ for the subspace of symmetric $d$-tensors in $\mathsf{T}^d(\mathbb{F}^n)$.  See \cite{CGLM} for definition and basic properties of symmetric tensors.
Let $A\in \mathsf{S}^{d}(\mathbb{F}^{n})$. Comon's conjecture \cite{CGLM} asserts that the rank and symmetric rank of a symmetric tensor are always equal,
\[
\min \left\{ r : A = \sum_{i=1}^{r}\lambda_{i}u_{1,i}\otimes \dots \otimes u_{d,i}\right\} \overset{?}{=} 
\min \left\{ r : A = \sum_{i=1}^{r}\lambda_{i}u_{i}^{\otimes d}\right\}.
\]
Banach's theorem \cite{Ban38,Fri13} on the other hand shows that the analogous assertion for spectral norm is true over both $\mathbb{R}$ and $\mathbb{C}$,
\begin{equation}\label{banachthm}
\sup_{x_1,\dots,x_d\ne0}\frac{\lvert \langle A, x_1 \otimes \dots \otimes x_d)\rvert}{\lVert x_1\rVert \cdots \lVert x_d\rVert} =
\sup_{x\ne0}\frac{\lvert \langle A, x^{\otimes d} \rangle \rvert}{\lVert x\rVert^d} .
\end{equation}
Here we show that the analogous assertion for nuclear norm is also true over both $\mathbb{R}$ and $\mathbb{C}$,
\begin{equation}\label{banachnuc1}
\inf\left\{  \sum_{i=1}^{r}\lvert\lambda_{i} \rvert :A=\sum_{i=1}^{r}\lambda_{i}u_{1,i}\otimes \dots \otimes u_{d,i}\right\}
= \inf\left\{  \sum_{i=1}^{r}\lvert\lambda_{i} \rvert :A=\sum_{i=1}^{r}\lambda_{i}u_{i}^{\otimes d}\right\}.
\end{equation}
We will first prove a slight variation of \eqref{banachnuc1} over $\mathbb{R}$ below.  Note that \eqref{banachnuc1} follows from \eqref{banachnuc}. If $d$ is odd in \eqref{banachnuc1}, we may drop the $\varepsilon_i$'s. 
\begin{theorem}\label{thm:bnuclear}
Let $A\in\mathsf{S}^d(\mathbb{R}^n)$. Then
\begin{equation}\label{banachnuc}
\|A\|_{*,\mathbb{R}}=\min\left\{\sum_{i=1}^r \|x_i\|^d : A=\sum_{i=1}^r \varepsilon_i x^{\otimes d}_i, \; \varepsilon_i\in\{-1,1\}\right\}.
\end{equation}
The infimum is taken over all  possible symmetric rank-one decompositions of $A$ with $r \in \mathbb{N}$ and is attained (therefore denoted by minimum).
\end{theorem}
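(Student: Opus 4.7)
My plan is to prove this by duality rather than by trying to directly convert a general nuclear decomposition into a symmetric one; any direct symmetrization via polarization would lose a factor of order $d^{d}/d!$ and cannot be sharp. Write $\nu(A)$ for the right-hand side of \eqref{banachnuc}. The inequality $\|A\|_{*,\mathbb{R}} \le \nu(A)$ is immediate from the definitions, since every symmetric decomposition $A=\sum_i \varepsilon_i x_i^{\otimes d}$ is in particular a rank-one decomposition with coefficient sum $\sum_i \|x_i\|^d$ that competes in the infimum defining $\|A\|_{*,\mathbb{R}}$. The content of the theorem is the reverse inequality together with attainment.

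First I would verify, by exactly the argument of Proposition~\ref{prop:tnn} applied on the subspace $\mathsf{S}^d(\mathbb{R}^n)$, that $\nu$ is a norm whose infimum is attained. The set $\mathcal{E}_{\mathrm{sym}} = \{\varepsilon\,u^{\otimes d} : \varepsilon \in \{\pm 1\},\, u\in\mathbb{R}^n,\, \|u\|=1\}$ is compact and balanced, its convex hull has non-empty interior in $\mathsf{S}^d(\mathbb{R}^n)$ because $\{u^{\otimes d} : \|u\|=1\}$ spans $\mathsf{S}^d(\mathbb{R}^n)$, and by strict convexity of the Hilbert--Schmidt norm each element of $\mathcal{E}_{\mathrm{sym}}$ is an extreme point of this hull. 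Hence $\mathcal{E}_{\mathrm{sym}}$ is exactly the extreme-point set of the $\nu$-unit ball, and the infimum defining $\nu(A)$ is attained. From this description the dual norm on $\mathsf{S}^d(\mathbb{R}^n)$ is
\[
\nu^{*}(B) \;=\; \sup_{\|u\|=1} \bigl|\langle u^{\otimes d}, B\rangle\bigr|,
\]
which by Banach's theorem \eqref{banachthm} equals $\|B\|_{\sigma,\mathbb{R}}$ whenever $B$ is symmetric.

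I would finish by biduality. Since $\mathsf{S}^d(\mathbb{R}^n)$ is finite-dimensional, $\nu = \nu^{**}$, so $\nu(A) = \sup\{\langle A, B\rangle : B \in \mathsf{S}^d(\mathbb{R}^n),\, \|B\|_{\sigma,\mathbb{R}} \le 1\}$. Meanwhile the standard duality between nuclear and spectral norms on $\mathsf{T}^d(\mathbb{R}^n)$ gives $\|A\|_{*,\mathbb{R}} = \sup\{\langle A, C\rangle : C \in \mathsf{T}^d(\mathbb{R}^n),\, \|C\|_{\sigma,\mathbb{R}} \le 1\}$, and for symmetric $A$ this supremum may be restricted to symmetric $C$: replacing $C$ by $\operatorname{Sym}(C)$ leaves $\langle A, C\rangle$ unchanged (because $A=\operatorname{Sym}(A)$ and $\operatorname{Sym}$ is self-adjoint with respect to the Hilbert--Schmidt pairing) and can only decrease the spectral norm, since $\operatorname{Sym}$ is an average over $S_d$ of index-permutation maps and each such map is an isometry of $\|\cdot\|_{\sigma,\mathbb{R}}$. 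Comparing the two characterizations yields $\nu(A) = \|A\|_{*,\mathbb{R}}$, and the first step supplies the minimum. The main obstacle is essentially psychological: one must resist attacking the symmetric decomposition directly and instead pass to the dual side, where Banach's theorem \eqref{banachthm} does the real work; once that move is made, every remaining step is a routine duality computation.
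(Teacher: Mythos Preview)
Your proposal is correct and follows essentially the same strategy as the paper: define the symmetric nuclear norm $\nu$ on $\mathsf{S}^d(\mathbb{R}^n)$ via the convex hull of $\{\pm u^{\otimes d}:\|u\|=1\}$, identify $\nu^*$ with $\|\cdot\|_{\sigma,\mathbb{R}}$ using Banach's theorem \eqref{banachthm}, and finish by duality. The only difference is in the closing step: the paper argues by contradiction, using only the trivial inequality $\nu_1^*(A)=\sup_{B\in\mathsf{S}^d,\ \|B\|_{*,\mathbb{R}}\le 1}\langle A,B\rangle \le \|A\|_{\sigma,\mathbb{R}}=\nu^*(A)$ to force $\nu_1=\nu$, whereas you invoke the symmetrization operator $\operatorname{Sym}$ (self-adjoint, an average of spectral-norm isometries) to restrict the dual characterization of $\|A\|_{*,\mathbb{R}}$ to symmetric test tensors directly. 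Both devices are short and equivalent; the paper's route avoids introducing $\operatorname{Sym}$, while yours makes the equality of the two suprema explicit rather than passing through a contradiction.
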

\begin{proof}
Let $\mathcal{C}\coloneqq \operatorname{conv}(\mathcal{E})\subseteq \mathsf{T}^d(\mathbb{R}^n)$ be the convex hull of all vectors of the form
\[
\mathcal{E}\coloneqq \{\pm x^{\otimes d} : x\in \mathbb{R}^n,\; \|x\|=1\}.
\]
As $x^{\otimes d} +(-x^{\otimes d})=0$, $\mathcal{C}$ is a symmetric set in $\mathsf{S}^d(\mathbb{R}^n)$.  Since any symmetric tensor is a linear combination of symmetric rank-one terms $x^{\otimes d}$, $\mathcal{C}$ has nonempty interior in $\mathsf{S}^d(\mathbb{R}^n)$.  Hence $\mathcal{C}$ is the unit ball of some norm $\nu:\mathsf{S}^d(\mathbb{R}^n)\to [0,\infty)$.   Note that $\nu(x^{\otimes d})\le 1$
for $\|x\|= 1$.  We claim that each point of $\mathcal{E}$ is an extreme point of $\mathcal{C}$.  Indeed, consider the unit ball of the Hilbert--Schmidt norm $\{ A \in \mathsf{S}^d(\mathbb{R}^n) : \|A\|\le 1\}$. Note that $\lVert\pm x^{\otimes d}\rVert=1$ for $\|x\|=1$, and as $\|\cdot\|$ is a strictly convex function, no point on $\mathcal{E}$ is a convex combination of other points of $\mathcal{E}$.
Hence $\nu(\pm x^{\otimes d})=\|x\|^d$ for $\|x\|=1$.  The homogeneity of $\nu$ implies that $\nu(\pm x^{\otimes d})=\|x\|^d$.

Suppose $A=\sum_{i=1}^r \alpha_ix^{\otimes d}_i$.  Then the triangle inequality for $\nu$ and the above equality yields $\nu(A)\le \sum_{i=1}^r |\alpha_i|\|x_i\|^d$.
By scaling the norm of $x_i$ appropriately, we may assume without loss of generality that $\alpha_i\in\{-1,1\}$ for $i = 1,\dots,r$.  Hence
\[
\nu(A)\le \inf\left\{\sum_{i=1}^r \|x_i\|^d : A=\sum_{i=1}^r \varepsilon_i x^{\otimes d}_i, \; \varepsilon_i\in\{-1,1\}\right\}.
\] 
We claim that the infimum is attained.  It is enough to consider the case $\nu(A)=1$. So $A\in \mathcal{C}$ and $A$ is a convex combination of the extreme points of $\mathcal{C}$, i.e., 
\begin{equation}\label{basineq}
A=\sum_{i=1}^r t_i\varepsilon_ix^{\otimes d}_i,\qquad\sum_{i=1}^r t_i=1,
\end{equation}
where  $t_i \ge 0$, $\|x_i\|=1$, $\varepsilon_i=\pm 1$, for all $i=1,\dots,r$.  Since $\dim_\mathbb{R} \mathsf{S}^d(\mathbb{R}^n)= \binom{n+d-1}{d}$, 
Caratheodory's theorem implies that $r\le 1+ \binom{n+d-1}{d}$.  The triangle inequality gives
\begin{equation}\label{basineq1}
1=\nu(A)\le \sum_{i=1}^r t_i\nu(x_i^{\otimes d})= \sum_{i=1}^r t_i=1.
\end{equation}
We deduce from \eqref{basineq} and \eqref{basineq1} that  $\nu(A)$ is given by the right-hand side of \eqref{banachnuc}.

Let $\nu^*$ be the dual norm of $\nu$ in $\mathsf{S}^d(\mathbb{R}^n)$.  By definition, 
\[
\nu^*(A)=\max_{B\in \mathsf{S}^d(\mathbb{R}^n),\; \nu(B)\le 1} \langle A,B \rangle =\max_{B\in \mathcal{E}} \langle A,B \rangle=\max_{\|x\|=1} | \langle A,x^{\otimes d} \rangle|.
\]
Since Banach's theorem \eqref{banachthm} may be written in the form $\|A\|_{\sigma,\mathbb{R}}=\max_{\|x\|=1} | \langle A,x^{\otimes d} \rangle|$, we get
\begin{equation}\label{eq:nustar}
\nu^*(A)=\|A\|_{\sigma,\mathbb{R}}.
\end{equation}
From the definition of nuclear norm \eqref{def1tennrm} and the fact that $\nu(A)$ is given by the right-hand of  \eqref{banachnuc}
we deduce that $\|A\|_{*,\mathbb{R}}\le \nu(A)$ for all $A\in\mathsf{S}^d(\mathbb{R}^n)$.

Let $\nu_1 : \mathsf{S}^d(\mathbb{R}^n)\to [0,\infty)$ be the nuclear norm  $\| \cdot \|_{*,\mathbb{R}}$ on
$\mathsf{T}^d(\mathbb{R}^n)$ restricted to  $\mathsf{S}^d(\mathbb{R}^n)$.  So $\nu_1(A)=\|A\|_{*,\mathbb{R}}$ for $A\in\mathsf{S}^d(\mathbb{R}^n)$.  We claim that $\nu=\nu_1$.  Suppose not.
Then the $\nu_1$ unit ball $\mathcal{C}_1\coloneqq \{A : \nu_1(A)\le 1\}$ must strictly contain the $\nu$ unit ball, i.e.,
$\mathcal{C} \subsetneq \mathcal{C}_1$.
Let $\nu_1^* : \mathsf{S}^d(\mathbb{R}^n)\to [0,\infty)$ be the dual norm of $\nu_1$. Let $\mathcal{C}^*$ and $\mathcal{C}_1^*$
be the unit balls of $\nu^*$ and $\nu_1^*$ respectively.  Then $\mathcal{C} \subsetneq \mathcal{C}_1$ implies that $\mathcal{C}_1^*\subsetneq \mathcal{C}^*$.
So there exists  $A\in\mathsf{S}^d(\mathbb{R}^n)$ such that $\nu_1^*(A) >\nu^*(A)$.  Hence
\[
\nu^*(A) < \nu_1^*(A)=\max_{B\in\mathsf{S}^d(\mathbb{R}^n),\; \|B\|_{*,\mathbb{R}}\le 1} \langle A,B \rangle\le \max_{B\in \mathsf{T}^d(\mathbb{R}^n),\; \|B\|_{*,\mathbb{R}}\le 1} \langle A,B \rangle=\|A\|_{\sigma,\mathbb{R}},
\]
which contradicts \eqref{eq:nustar}.
\end{proof}

The complex case may be deduced from the real case as follows. Note that the $\varepsilon_i$'s in \eqref{banachnuc} are unnecessary regardless of the order $d$ since $\mathbb{C}$ contains all $d$th roots of unity.
\begin{corollary}\label{cor:bnuclear}
Let $A\in\mathsf{S}^d(\mathbb{C}^n)$. Then
\[
\|A\|_{*,\mathbb{C}}=\min\left\{\sum_{j=1}^r \|x_j\|^d : A=\sum_{j=1}^r x^{\otimes d}_j\right\}.
\]
The infimum is taken over all  possible symmetric rank-one decompositions of $A$ with $r \in \mathbb{N}$ and is attained (therefore denoted by minimum).
\end{corollary}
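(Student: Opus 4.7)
My plan is to mimic the proof of Theorem~\ref{thm:bnuclear} over $\mathbb{C}$, exploiting the fact that every $\zeta \in \mathbb{C}$ with $|\zeta|=1$ satisfies $\zeta = \omega^d$ for some $\omega \in \mathbb{C}$ with $|\omega|=1$, so that $\zeta\, x^{\otimes d} = (\omega x)^{\otimes d}$. Consequently, the set
\[
\mathcal{E}_{\mathbb{C}} := \{x^{\otimes d} : x \in \mathbb{C}^n,\; \|x\| = 1\}
\]
is already closed under multiplication by unit complex scalars, and this is precisely what makes the sign factors $\varepsilon_i$ from \eqref{banachnuc} unnecessary over $\mathbb{C}$.

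First I would set $\mathcal{C} := \operatorname{conv}(\mathcal{E}_{\mathbb{C}}) \subseteq \mathsf{S}^d(\mathbb{C}^n)$, observe that $\mathcal{C}$ is balanced (since $\mathcal{E}_{\mathbb{C}}$ is) and has nonempty interior (since symmetric powers $x^{\otimes d}$ span $\mathsf{S}^d(\mathbb{C}^n)$ by polarization), and conclude that $\mathcal{C}$ is the unit ball of a norm $\nu$ on $\mathsf{S}^d(\mathbb{C}^n)$. Strict convexity of the Hilbert--Schmidt norm on its unit sphere makes each point of $\mathcal{E}_{\mathbb{C}}$ extreme in $\mathcal{C}$, yielding $\nu(x^{\otimes d}) = \|x\|^d$ for every $x \in \mathbb{C}^n$.

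A verbatim translation of the Caratheodory-plus-triangle-inequality step of Theorem~\ref{thm:bnuclear} then gives
\[
\nu(A) = \min\Bigl\{\sum_{j=1}^r \|y_j\|^d : A = \sum_{j=1}^r y_j^{\otimes d}\Bigr\},
\]
with the minimum attained. The only new ingredient is the passage from an attained convex representation $A = \sum_j t_j \zeta_j x_j^{\otimes d}$, with $t_j \ge 0$, $|\zeta_j| = \|x_j\| = 1$, $\sum_j t_j = 1$, to the pure symmetric rank-one form $A = \sum_j y_j^{\otimes d}$ via $y_j := (t_j \zeta_j)^{1/d}\, x_j$; this is exactly the place where the absence of sign constraints over $\mathbb{C}$ is used.

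Finally I would close the loop by duality. Banach's theorem \eqref{banachthm} over $\mathbb{C}$, together with the balanced property of $\mathcal{E}_{\mathbb{C}}$, yields
\[
\nu^*(A) = \sup_{B \in \mathcal{E}_{\mathbb{C}}} |\langle A, B\rangle| = \sup_{\|x\|=1} |\langle A, x^{\otimes d}\rangle| = \|A\|_{\sigma,\mathbb{C}}.
\]
Writing $\nu_1$ for the restriction of $\|\cdot\|_{*,\mathbb{C}}$ to $\mathsf{S}^d(\mathbb{C}^n)$, the formula above for $\nu$ and the definition of nuclear norm force $\nu_1 \le \nu$, so $\nu^* \le \nu_1^*$; restricting the supremum defining $\nu_1^*$ to symmetric tensors inside $\mathsf{T}^d(\mathbb{C}^n)$ and applying the global duality $\|\cdot\|_{*,\mathbb{C}}^* = \|\cdot\|_{\sigma,\mathbb{C}}$ gives $\nu_1^* \le \nu^*$. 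Hence $\nu = \nu_1$ in finite dimension, which is exactly the claim. The main subtlety I anticipate is the bookkeeping for the complex dual norm using the modulus of the Hermitian pairing; however, the balanced property of $\mathcal{E}_{\mathbb{C}}$ lets one rotate any pairing to the positive real axis, so the modulus plays no role once the supremum is taken over $\mathcal{E}_{\mathbb{C}}$, and the argument then proceeds in parallel with the real case.
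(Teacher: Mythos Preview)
Your proposal is correct and follows the same duality-and-Banach scheme as the paper, but you package it slightly differently. The paper does not work directly over $\mathbb{C}$; instead it realifies, identifying $\mathsf{T}^d(\mathbb{C}^n)$ with $\mathsf{T}^d(\mathbb{R}^n)\times\mathsf{T}^d(\mathbb{R}^n)$ via $B=X+iY$, equipping that product with the real inner product $\operatorname{Re}\langle\cdot,\cdot\rangle$, and then literally repeating the proof of Theorem~\ref{thm:bnuclear} on the real subspace corresponding to $\mathsf{S}^d(\mathbb{C}^n)$, invoking the complex version of Banach's theorem at the appropriate step. Your route stays in the complex vector space throughout, using that $\mathcal{E}_{\mathbb{C}}$ is already balanced (via $d$th roots of unimodular scalars) so that $\operatorname{conv}(\mathcal{E}_{\mathbb{C}})$ is the unit ball of a genuine complex norm; this is exactly why the $\varepsilon_i$'s disappear, as you note. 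The two arguments are interchangeable: the paper's realification buys the ability to quote the real Minkowski-functional and real duality facts verbatim, while your version is more direct and makes explicit the one place the complex structure matters. Your handling of the dual-norm bookkeeping (modulus versus real part, and the chain $\nu^*\le\nu_1^*\le\|\cdot\|_{\sigma,\mathbb{C}}=\nu^*$) is fine and matches the paper's logic.
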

\begin{proof}
We identify $\mathsf{T}^d(\mathbb{C}^n)$ with $\mathsf{T}^d(\mathbb{R}^n)\times \mathsf{T}^d(\mathbb{R}^n)$, i.e., we write $B\in \mathsf{T}^d(\mathbb{C}^n)$ as $B=X+iY$ 
where $X,Y\in \mathsf{T}^d(\mathbb{R}^n)$ and identify $B$ with $(X,Y)$.  On $\mathsf{T}^d(\mathbb{R}^n) \times \mathsf{T}^d(\mathbb{R}^n)$, we define a real inner product 
\[
\langle (X,Y),(W,Z) \rangle=\langle X,W\rangle +\langle Y,Z \rangle =\operatorname{Re}\langle X+iY ,W+iZ\rangle,
\]
under which the Hilbert--Schmidt norm on $\mathsf{T}^d(\mathbb{C}^n)$ is the same as the Hilbert--Schmidt norm on  $\mathsf{T}^d(\mathbb{R}^n) \times \mathsf{T}^d(\mathbb{R}^n)$.
The spectral norm on $\mathsf{T}^d(\mathbb{C}^n)$ defined in \eqref{def1tenspnrm} translates to a spectral norm on the real space $\mathsf{T}^d(\mathbb{R}^n) \times \mathsf{T}^d(\mathbb{R}^n)$.  Furthermore its dual norm on $\mathsf{T}^d(\mathbb{R}^n) \times \mathsf{T}^d(\mathbb{R}^n)$ is precisely the nuclear norm on $\mathsf{T}^d(\mathbb{C}^n)$ as defined in \eqref{def1tennrm}.
This follows from the observation that the extreme points of the nuclear norm unit ball in $\mathsf{T}^d(\mathbb{C}^n)$ is exactly
\[
\mathcal{E}=\{x_{1} \otimes \dots \otimes x_{d} : x_1,\dots,x_d\in \mathbb{C}^{n}, \; \lVert x_1 \rVert = \dots = \lVert x_d \rVert =1\}.
\] 
So $\mathsf{S}^d(\mathbb{C}^n)$ may be viewed as a real subspace of $\mathsf{S}^d(\mathbb{R}^n)\times \mathsf{S}^d(\mathbb{R}^n)$. We may repeat the arguments as in the real case and use Banach's theorem  \eqref{banachthm} for complex-valued symmetric tensors.
\end{proof}

An immediate consequence of Theorem~\ref{thm:bnuclear} and Corollary~\ref{cor:bnuclear}  is the existence of a \textit{symmetric nuclear decomposition} for symmetric $d$-tensors.
\begin{corollary}[Symmetric nuclear decomposition]\label{cor:snd}
Let $A\in\mathsf{S}^d(\mathbb{F}^n)$. Then there exists a decomposition
\[
A = \sum_{i=1}^{r}\lambda_{i}u_{i}^{\otimes d}
\]
with finite $r \in \mathbb{N}$,$r\le 1+\binom{n+d-1}{d}$, and $\lVert u_1 \rVert = \dots =\lVert u_r \rVert = 1$ such that
\[
\|A\|_{*,\mathbb{F}} =\lvert\lambda_1 \rvert + \dots + \lvert\lambda_r \rvert.
\]
\end{corollary}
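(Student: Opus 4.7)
The plan is essentially to repackage what has already been proved in Theorem~\ref{thm:bnuclear} and Corollary~\ref{cor:bnuclear}, since both results already deliver a symmetric rank-one decomposition whose coefficients sum to $\|A\|_{*,\mathbb{F}}$. What remains is to normalize the summands so that the unit-norm condition on the $u_i$ holds, to absorb signs into the $\lambda_i$'s in the real case, and to trace through the Caratheodory bound that was implicit in the proof of Theorem~\ref{thm:bnuclear}.

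First I would split into the two base fields. For $\mathbb{F} = \mathbb{R}$, Theorem~\ref{thm:bnuclear} guarantees that the minimum in \eqref{banachnuc} is attained, so there exist $x_1,\dots,x_r \in \mathbb{R}^n$ and $\varepsilon_i \in \{-1,1\}$ with $A = \sum_{i=1}^r \varepsilon_i x_i^{\otimes d}$ and $\|A\|_{*,\mathbb{R}} = \sum_{i=1}^r \|x_i\|^d$. Setting $u_i \coloneqq x_i/\|x_i\|$ (discarding any zero terms) and $\lambda_i \coloneqq \varepsilon_i \|x_i\|^d$ produces the desired decomposition, since $\varepsilon_i x_i^{\otimes d} = \lambda_i u_i^{\otimes d}$ and $|\lambda_i| = \|x_i\|^d$. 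For $\mathbb{F} = \mathbb{C}$, the same argument applies using Corollary~\ref{cor:bnuclear}, writing $x_j = \|x_j\| u_j$ with $\|u_j\| = 1$ and $\lambda_j \coloneqq \|x_j\|^d$ (no signs required, since the complex scalars already absorb any phase).

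For the cardinality bound, I would reuse the Caratheodory step exactly as it appears in the proof of Theorem~\ref{thm:bnuclear}: the tensor $A/\nu(A)$ lies in the convex hull $\mathcal{C}$ of the extreme set $\mathcal{E} \subseteq \mathsf{S}^d(\mathbb{F}^n)$, and since $\dim_{\mathbb{R}} \mathsf{S}^d(\mathbb{R}^n) = \binom{n+d-1}{d}$, Caratheodory's theorem yields a representation with at most $1 + \binom{n+d-1}{d}$ extreme terms; in the complex case one applies the same count to the real structure on $\mathsf{S}^d(\mathbb{C}^n)$ used in the proof of Corollary~\ref{cor:bnuclear}, noting that the relevant estimate from Theorem~\ref{thm:bnuclear} transfers under the real identification. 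Combining this with the normalization step completes the proof.

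I do not anticipate a genuine obstacle here: the substantive content (attainment, the symmetric-versus-general coincidence of the infimum, and the dimension-count bound) has already been established. The only minor care needed is to check that the normalization $u_i = x_i/\|x_i\|$ does not require any of the $x_i$ to be nonzero beyond what is automatic from minimality, and that in the complex case the passage from $\sum \|x_j\|^d$ to $\sum |\lambda_j|$ is literal rather than incurring an additional constant.
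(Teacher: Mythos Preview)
Your proposal is correct and matches the paper's approach exactly: the paper gives no separate proof of this corollary, presenting it simply as an immediate consequence of Theorem~\ref{thm:bnuclear} and Corollary~\ref{cor:bnuclear}, and your argument is precisely the natural unpacking of that claim (normalize the $x_i$, absorb signs into the $\lambda_i$, and reuse the Caratheodory count from the proof of Theorem~\ref{thm:bnuclear}). The only point deserving care is the Caratheodory bound over $\mathbb{C}$, since $\dim_{\mathbb{R}}\mathsf{S}^d(\mathbb{C}^n)=2\binom{n+d-1}{d}$ rather than $\binom{n+d-1}{d}$, so a direct transfer of the real-case count gives $r\le 1+2\binom{n+d-1}{d}$; but this is a wrinkle in the stated bound itself rather than a defect in your method.
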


As in \cite{Fri13} we may extend Theorem~\ref{thm:bnuclear} and Corollary~\ref{cor:bnuclear} to partially symmetric tensors.  Let $d_1,\dots,d_m \in \mathbb{N}$ and $d = d_1 + \dots + d_m$.
A $d$-tensor $A \in \mathsf{S}^{d_1}(\mathbb{F}^{n_1}) \otimes \dots \otimes \mathsf{S}^{d_m}(\mathbb{F}^{n_m})$ is called a $(d_1,\dots,d_m)$-symmetric  tensor.
The following analogue of Banach's theorem \eqref{banachthm} for such tensors was established in \cite{Fri13}:
\[
\|A\|_{\sigma,\mathbb{F}}=\max_{\| x_i\| = 1} |\langle A,x_1^{\otimes d_1} \otimes \dots \otimes x_m^{\otimes d_m}\rangle |
\]
for all $A\in\mathsf{S}^{d_1}(\mathbb{F}^{n_1}) \otimes \dots \otimes \mathsf{S}^{d_m}(\mathbb{F}^{n_m})$. Using this and the same arguments used to establish Theorem~\ref{thm:bnuclear} and Corollary~\ref{cor:bnuclear}, we may obtain the following. Note that the $\varepsilon_i$'s in \eqref{banachnucps} may be dropped in all cases except when $\mathbb{F}=\mathbb{R}$ and $d_1,\dots,d_m$ are all even integers.
\begin{corollary}
Let $A\in\mathsf{S}^{d_1}(\mathbb{F}^{n_1}) \otimes \dots \otimes \mathsf{S}^{d_m}(\mathbb{F}^{n_m})$. Then
\begin{equation}\label{banachnucps}
\|A\|_{*,\mathbb{F}}=\min \biggl\{\sum_{i=1}^r \|x_{1,i}\|^{d_1} \cdots \|x_{m,i}\|^{d_m} :
A=\sum_{i=1}^r \varepsilon_j x_{1,i}^{\otimes d_1} \otimes \dots \otimes x_{m,i}^{\otimes d_m},\; \varepsilon_i\in\{-1,1\}\biggr\}.
\end{equation}
\end{corollary}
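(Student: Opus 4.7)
The plan is to follow the same three-move strategy used for Theorem~\ref{thm:bnuclear} and Corollary~\ref{cor:bnuclear}, swapping symmetric rank-one tensors $x^{\otimes d}$ for the partially symmetric rank-one tensors $x_1^{\otimes d_1}\otimes\dots\otimes x_m^{\otimes d_m}$, and invoking the partially symmetric version of Banach's theorem stated just above the corollary.

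First I would treat the real case. Let $V \coloneqq \mathsf{S}^{d_1}(\mathbb{R}^{n_1}) \otimes \dots \otimes \mathsf{S}^{d_m}(\mathbb{R}^{n_m})$ and set
\[
\mathcal{E} \coloneqq \{\varepsilon\, x_1^{\otimes d_1}\otimes\dots\otimes x_m^{\otimes d_m} : \varepsilon\in\{-1,1\},\; x_k\in\mathbb{R}^{n_k},\;\|x_k\|=1\}\subseteq V.
\]
Because partially symmetric rank-one tensors span $V$, the symmetric convex hull $\mathcal{C}\coloneqq\operatorname{conv}(\mathcal{E})$ has nonempty interior and is the unit ball of a norm $\nu:V\to[0,\infty)$. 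The strict convexity of the Hilbert--Schmidt norm, combined with $\|\varepsilon\, x_1^{\otimes d_1}\otimes\dots\otimes x_m^{\otimes d_m}\|=1$ whenever $\|x_k\|=1$, shows that every element of $\mathcal{E}$ is an extreme point of $\mathcal{C}$, and hence $\nu(\varepsilon\, x_1^{\otimes d_1}\otimes\dots\otimes x_m^{\otimes d_m})=\|x_1\|^{d_1}\cdots\|x_m\|^{d_m}$ after homogeneous rescaling. The triangle inequality then gives $\nu(A)\le\sum_{i=1}^r\|x_{1,i}\|^{d_1}\cdots\|x_{m,i}\|^{d_m}$ for any expansion of the stated form, and Carath\'eodory's theorem applied in $V$ (of dimension $\prod_{k=1}^m\binom{n_k+d_k-1}{d_k}$) shows that the infimum is actually attained by a finite sum, justifying the $\min$.

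Second, I would identify $\nu$ with the restriction of the ambient nuclear norm to $V$. Compute the dual norm directly from $\mathcal{E}$:
\[
\nu^*(A) = \max_{B\in\mathcal{E}}\langle A,B\rangle = \max_{\|x_k\|=1}\lvert\langle A, x_1^{\otimes d_1}\otimes\dots\otimes x_m^{\otimes d_m}\rangle\rvert,
\]
which by the partially symmetric Banach theorem recalled above the corollary equals $\|A\|_{\sigma,\mathbb{R}}$. Letting $\nu_1$ denote the restriction of $\|\cdot\|_{*,\mathbb{R}}$ from $\mathsf{T}^{d}(\mathbb{R}^{n_1}\times\dots)$ to $V$, one has $\nu_1\le\nu$ by the definition of nuclear norm. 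If $\nu_1\ne\nu$ then their unit balls satisfy $\mathcal{C}\subsetneq\mathcal{C}_1$, so $\mathcal{C}_1^*\subsetneq\mathcal{C}^*$, producing some $A\in V$ with $\nu_1^*(A)>\nu^*(A)=\|A\|_{\sigma,\mathbb{R}}$. But $\nu_1^*(A)$ is the supremum of $\langle A,B\rangle$ over $B\in V$ with $\|B\|_{*,\mathbb{R}}\le 1$, which is bounded by the analogous supremum over all of $\mathsf{T}^d(\mathbb{R}^{n_1}\times\dots)$, namely $\|A\|_{\sigma,\mathbb{R}}$, a contradiction. Hence $\nu=\nu_1=\|\cdot\|_{*,\mathbb{R}}$ on $V$, proving \eqref{banachnucps} over $\mathbb{R}$.

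For the complex case, I would use the realification trick from the proof of Corollary~\ref{cor:bnuclear}: view $\mathsf{T}^{d_1,\dots,d_m}(\mathbb{C}^{n_1},\dots,\mathbb{C}^{n_m})$ as a real inner product space via $B=X+iY\mapsto(X,Y)$, transport the spectral and nuclear norms to this realification, and repeat the argument with $\mathcal{E}_{\mathbb{C}}\coloneqq\{x_1^{\otimes d_1}\otimes\dots\otimes x_m^{\otimes d_m}:x_k\in\mathbb{C}^{n_k},\;\|x_k\|=1\}$ (no signs needed since $\mathbb{C}$ absorbs $d_k$-th roots of unity into $x_k$). The remark about dropping the $\varepsilon_i$'s is then straightforward: in the real case, if some $d_k$ is odd then $-x_k^{\otimes d_k}=(-x_k)^{\otimes d_k}$ absorbs the sign, while if every $d_k$ is even then $\varepsilon_i x_1^{\otimes d_1}\otimes\dots\otimes x_m^{\otimes d_m}$ genuinely depends on $\varepsilon_i$, so the signs must be retained. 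The main obstacle is bookkeeping: one needs to be careful that the partially symmetric Banach theorem applies to complex-valued partially symmetric tensors in the realified setup, but this is precisely the content of the statement cited from~\cite{Fri13}, so no new ideas beyond Section~\ref{sec:banach} are required.
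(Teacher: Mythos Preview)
Your proposal is correct and follows essentially the same approach as the paper, which simply states that the corollary follows by repeating the arguments of Theorem~\ref{thm:bnuclear} and Corollary~\ref{cor:bnuclear} with the partially symmetric Banach theorem from~\cite{Fri13} in place of \eqref{banachthm}. You have faithfully carried out that recipe, including the extreme-point/convex-hull construction, the dual-norm computation, the unit-ball containment contradiction, and the realification for the complex case.
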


\section{Base field dependence}\label{sec:base}

It is well-known \cite{Bry, DSL} that tensor rank is dependent on the choice of base fields when the order of the tensor $d \ge 3$. Take any  linearly independent  $x, y\in\mathbb{R}^{n}$ and let
$ z= x+i y \in\mathbb{C}^{n}$. If we define
\[
A \coloneqq x\otimes x\otimes x- x\otimes y%
\otimes y+ y\otimes x\otimes y+ y%
\otimes y\otimes x
=\frac{1}{2}(z\otimes\bar{z}\otimes\bar{z}%
+\bar{z}\otimes z\otimes z),
\]
then $\operatorname{rank}_{\mathbb{C}}(A) =2 < 3 =\operatorname{rank}_{\mathbb{R}}(A)$. We  show that the same is true for spectral and nuclear norms of $d$ tensors when $d \ge 3$.

\begin{lemma}\label{3tenexamnorm2}
Let $e_1, e_2 \in \mathbb{R}^2$ be the standard basis vectors. Define $B \in \mathbb{R}^{2 \times 2 \times 2} \subseteq \mathbb{C}^{2 \times 2 \times 2}$ by
\begin{align}\label{exam2}
B=\frac{1}{2}(e_1\otimes  e_1\otimes  e_2+e_1\otimes  e_2\otimes  e_1+e_2\otimes  e_1\otimes  e_1-e_2\otimes  e_2\otimes  e_2).
\end{align}
Then \eqref{exam2} is a nuclear decomposition over $\mathbb{R}$,  and
\[
\|B\|_{\sigma,\mathbb{R}}=\frac{1}{2}, \qquad \|B\|_{\sigma,\mathbb{C}}=\frac{1}{\sqrt{2}}, \qquad \|B\|_{*,\mathbb{R}}= 2, \qquad \|B\|_{*,\mathbb{C}} =\sqrt{2}.
\]
Furthermore, $B\in \mathsf{S}^3(\mathbb{R}^2)  \subseteq \mathsf{S}^3(\mathbb{C}^2)$ has a symmetric nuclear decomposition over $\mathbb{R}$ given by
\begin{equation}\label{RmindecS2}
B= \frac{2}{3}\biggl(\biggl[\frac{\sqrt{3}}{2}e_1+\frac{1}{2}e_2\biggr]^{\otimes 3}+ \biggl[-\frac{\sqrt{3}}{2}e_1+\frac{1}{2}e_2\biggr]^{\otimes 3} + (-e_2)^{\otimes 3} \biggr),
\end{equation}
and a symmetric nuclear decomposition over $\mathbb{C}$ given by
\begin{equation}\label{CmindecS2}
B=\frac{1}{\sqrt{2}}\biggl(\biggl[-\frac{1}{\sqrt{2}}e_2 +\frac{i}{\sqrt{2}} e_1\biggr]^{\otimes 3} +\biggl[-\frac{1}{\sqrt{2}}e_2 - \frac{i}{\sqrt{2}}e_1\biggr]^{\otimes 3}\biggr).
\end{equation}%
\end{lemma}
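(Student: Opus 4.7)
The plan is to compute the two spectral norms first, then deduce matching lower bounds on the two nuclear norms via the duality inequality $\|B\|^{2}=|\langle B,B\rangle|\le \|B\|_{*}\|B\|_{\sigma}$, and finally exhibit each displayed decomposition as attaining the corresponding upper bound. Since $B$ is symmetric, Banach's theorem \eqref{banachthm} reduces each spectral norm to the single-vector problem $\max_{\|x\|=1}|\langle B,x^{\otimes 3}\rangle|$, with $\langle B,x^{\otimes 3}\rangle=\tfrac{1}{2}(3x_{1}^{2}x_{2}-x_{2}^{3})$.

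For $\|B\|_{\sigma,\mathbb{R}}$, I would parametrize $x=(\cos\theta,\sin\theta)$ and use the triple-angle identity $3\cos^{2}\theta\sin\theta-\sin^{3}\theta=\sin(3\theta)$, giving $\langle B,x^{\otimes 3}\rangle=\tfrac{1}{2}\sin(3\theta)$, with maximum $1/2$. For $\|B\|_{\sigma,\mathbb{C}}$, setting $|z_{1}|=r$, $|z_{2}|=s$ with $r^{2}+s^{2}=1$, the triangle inequality yields $|3z_{1}^{2}z_{2}-z_{2}^{3}|\le 3r^{2}s+s^{3}=s(3-2s^{2})$, which peaks at $s=1/\sqrt{2}$ with value $\sqrt{2}$; equality is attained by aligning the phases of $3z_{1}^{2}z_{2}$ and $-z_{2}^{3}$ (for instance $z_{1}=i/\sqrt{2}$, $z_{2}=1/\sqrt{2}$), so $\|B\|_{\sigma,\mathbb{C}}=1/\sqrt{2}$.

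A direct count gives $\|B\|^{2}=4(1/2)^{2}=1$, so the duality inequality forces $\|B\|_{*,\mathbb{R}}\ge 2$ and $\|B\|_{*,\mathbb{C}}\ge\sqrt{2}$. To match these, I would observe that \eqref{exam2} is itself a sum of four rank-one terms each of Hilbert--Schmidt norm $1/2$, with total cost $2$, so $\|B\|_{*,\mathbb{R}}=2$ and \eqref{exam2} is a nuclear decomposition over $\mathbb{R}$. For \eqref{RmindecS2}, I would verify the identity by expanding $\tfrac{2}{3}(\langle a,x\rangle^{3}+\langle b,x\rangle^{3}+\langle -e_{2},x\rangle^{3})$ with $a=\tfrac{\sqrt{3}}{2}e_{1}+\tfrac{1}{2}e_{2}$, $b=-\tfrac{\sqrt{3}}{2}e_{1}+\tfrac{1}{2}e_{2}$, and matching the result against $\tfrac{1}{2}(3x_{1}^{2}x_{2}-x_{2}^{3})$; its cost is $3\cdot\tfrac{2}{3}=2=\|B\|_{*,\mathbb{R}}$, certifying it as a symmetric nuclear decomposition over $\mathbb{R}$. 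For \eqref{CmindecS2} with $u=\tfrac{i}{\sqrt{2}}e_{1}-\tfrac{1}{\sqrt{2}}e_{2}$, I would check $B=\tfrac{1}{\sqrt{2}}(u^{\otimes 3}+\bar u^{\otimes 3})=\sqrt{2}\,\operatorname{Re}(u^{\otimes 3})$ entrywise, using that the entries with an odd number of factors $u_{1}=i/\sqrt{2}$ contribute $0$ under $\operatorname{Re}$ while the remaining entries reproduce $B$; its cost is $2/\sqrt{2}=\sqrt{2}=\|B\|_{*,\mathbb{C}}$, so it is a symmetric nuclear decomposition over $\mathbb{C}$.

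The main obstacle is the complex spectral-norm calculation, since the unit sphere in $\mathbb{C}^{2}$ is four-dimensional (versus one-dimensional over $\mathbb{R}$) and the cubic depends on both moduli and phases. The saving feature is that the triangle-inequality bound depends only on the moduli, while attainability of the bound depends only on phase alignment, so the problem decouples into a one-variable calculus step and a trivial phase-matching check.
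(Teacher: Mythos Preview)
Your proposal is correct. The spectral-norm calculations match the paper's in substance; your use of the triple-angle identity $3\cos^2\theta\sin\theta-\sin^3\theta=\sin(3\theta)$ is a tidy variant of the paper's direct calculus on $x_2(3-4x_2^2)$, and your modulus/phase decoupling for the complex case is exactly the paper's argument with the same optimizing vector.

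The genuine difference is in how the nuclear-norm lower bounds are obtained. The paper uses the dual-certificate criterion of Lemma~\ref{necsufmindec}: setting $Y=2B$, it checks that $\langle Y,\,\cdot\,\rangle$ attains $\|Y\|_{\sigma}$ on each rank-one term of the candidate decomposition, which simultaneously certifies that the decomposition is nuclear and yields $\|B\|_{*}$. You instead use the self-pairing inequality $\|B\|^2=\langle B,B\rangle\le\|B\|_{\sigma}\|B\|_{*}$ together with $\|B\|^2=1$ to get $\|B\|_{*,\mathbb{R}}\ge 2$ and $\|B\|_{*,\mathbb{C}}\ge\sqrt 2$, then match these against the costs of the explicit decompositions. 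This is more elementary (no need for Lemma~\ref{necsufmindec}) and happens to work here precisely because the dual certificate \emph{is} $B$ itself; the paper's route, on the other hand, is the one that generalizes, since in Lemma~\ref{3tenexamnorm1} the certifying tensor is not $C$ but again $Y=2B$. You also explicitly verify \eqref{RmindecS2} via the cubic-form expansion, which the paper's proof of this lemma does not spell out.
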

\begin{proof}
Since $B\in \mathsf{S}^3(\mathbb{R}^2)$, we may rely on  \eqref{banachthm} and \eqref{banachnuc} in Section~\ref{sec:banach} to calculate its spectral and nuclear norms over  $\mathbb{R}$ and $\mathbb{C}$. Set  $Y=2B$ for convenience.

Let $x=(x_1,x_2)^\mathsf{T}$ with  $|x_1|^2 + |x_2|^2=1$.  Then  $g(x_1,x_2)\coloneqq \langle Y, x^{\otimes 3}\rangle = 3x_1^2x_2 - x_2^3=x_2(3x_1^2-x_2^2)$.  Suppose first that $x_1,x_2\in\mathbb{R}$.  Then $x_1^2=1-x_2^2$ and the maximum of $g(x_1,x_2)=x_2(3-4x_2^2)$ over  $x_2 \in[0,1]$ is attained at $x_2=1/2$, $x_1 = \sqrt{3}/2$. Hence $\|Y\|_{\sigma,\mathbb{R}}=1$ and $\|B\|_{\sigma,\mathbb{R}}=1/2$.

Assume now that $x_1,x_2\in\mathbb{B}$.  Clearly, $|g(x_1,x_2)|\le |x_2|(3|x_1|^2+|x_2|^2)$.  Choose $x_2=-t$, $x_1=is$ 
where $s,t\ge 0$ and $s^2+t^2=1$.  Then the maximum of $g(x_1,x_2)=h(s,t)=t(3s^2+t^2)=t(3-2t^2)$ over $t\in [0,1]$ is $\sqrt{2}$, attained at $t=1/\sqrt{2} = s$.
Hence $\|B\|_{\sigma,\mathbb{C}}=1/\sqrt{2}$ and $\|Y\|_{\sigma,\mathbb{C}}=\sqrt{2}$.

That \eqref{exam2} is a nuclear decomposition over $\mathbb{R}$ and $\|B\|_{*,\mathbb{R}} = 2$ follows from Lemma~\ref{necsufmindec}  and the observation
\begin{equation}\label{Yident}
\langle Y, e_1\otimes  e_1\otimes  e_2\rangle = \langle Y, e_1\otimes  e_2\otimes  e_1\rangle =\langle Y, e_2\otimes  e_1\otimes  e_1\rangle =\langle Y, (-e_2)^{\otimes 3}\rangle =1=\|Y\|_{\sigma,\mathbb{R}}.
\end{equation}
That \eqref{RmindecS2} is a symmetric nuclear decomposition over $\mathbb{C}$ follows from Lemma~\ref{necsufmindec} and the observation
\[
\biggl\langle Y,\biggl[\frac{1}{\sqrt{2}}(-e_2+ie_1)\biggr]^{\otimes 3}\biggr\rangle =\biggl\langle Y,\biggl[\frac{1}{\sqrt{2}}(-e_2-ie_1)\biggr]^{\otimes 3}\biggr\rangle =\sqrt{2} = \| Y \|_{\sigma,\mathbb{C}}.
\]
This also shows that $\|B\|_{*,\mathbb{C}} = \sqrt{2}$.
\end{proof}
\begin{lemma}\label{3tenexamnorm1}
 Let $e_1, e_2 \in \mathbb{R}^2$ be the standard basis vectors. Define $C \in \mathbb{R}^{2 \times 2 \times 2} \subseteq \mathbb{C}^{2 \times 2 \times 2}$ by
\begin{equation}\label{exam1}
C=\frac{1}{\sqrt{3}}(e_1\otimes  e_1\otimes  e_2+e_1\otimes  e_2\otimes  e_1+e_2\otimes  e_1\otimes  e_1).
\end{equation}
Then \eqref{exam1} is a nuclear decomposition over $\mathbb{R}$, and
\begin{equation}\label{exam1specnucnrm}
\|C\|_{\sigma,\mathbb{R}}=\|C\|_{\sigma,\mathbb{C}}=\frac{2}{3},  \qquad \|C\|_{*,\mathbb{R}}=\sqrt{3}, \qquad \|C\|_{*,\mathbb{C}}=\frac{3}{2}.
\end{equation}
Furthermore, $C\in \mathsf{S}^3(\mathbb{R}^2) \subseteq \mathsf{S}^3(\mathbb{C}^2) $ has a symmetric nuclear decomposition over $\mathbb{R}$ given by
\begin{equation}\label{S1realsymmindec}
C=\frac{4}{3\sqrt{3}}\bigg(\biggl[\frac{\sqrt{3}}{2}e_1+\frac{1}{2}e_2\biggr]^{\otimes 3} + \biggl[-\frac{\sqrt{3}}{2}e_1+\frac{1}{2}e_2\biggr]^{\otimes 3} +\frac{1}{4}(-e_2)^{\otimes 3}\bigg),
\end{equation}
and a symmetric nuclear decomposition over $\mathbb{C}$ given by
\begin{multline}\label{compminsymS1}
C=\frac{3}{8}\biggl(\biggl[\sqrt{\frac{2}{3}}e_1+\frac{1}{\sqrt{3}}e_2\biggr]^{\otimes 3}+\biggl[-\sqrt{\frac{2}{3}}e_1+\frac{1}{\sqrt{3}}e_2\biggr]^{\otimes 3}\\
+\biggl[i\sqrt{\frac{2}{3}}e_1-\frac{1}{\sqrt{3}}e_2\biggr]^{\otimes 3}+\biggl[-i\sqrt{\frac{2}{3}}e_1-\frac{1}{\sqrt{3}}e_2\biggr]^{\otimes 3}\biggr).
\end{multline}
\end{lemma}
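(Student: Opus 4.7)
The plan is to mirror the proof of Lemma~\ref{3tenexamnorm2} almost line for line, exploiting the fact that $C\in\mathsf{S}^3(\mathbb{R}^2)$. Banach's theorem \eqref{banachthm} reduces each spectral norm to a one-parameter maximization of the symmetric trilinear form $\langle C,x^{\otimes 3}\rangle=\sqrt{3}\,x_1^2 x_2$ on the unit sphere. Over $\mathbb{R}$, I would substitute $x_1^2=1-x_2^2$ and optimize $\sqrt{3}(1-x_2^2)x_2$ on $[-1,1]$; the critical point $x_2=1/\sqrt{3}$ delivers $\|C\|_{\sigma,\mathbb{R}}=2/3$. Over $\mathbb{C}$ only the moduli of $x_1,x_2$ enter the absolute value, so the same bound is attained at the same real extremizer, yielding $\|C\|_{\sigma,\mathbb{C}}=2/3$.

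For the nuclear norm over $\mathbb{R}$ I plan to reuse the witness that was essentially constructed by the previous lemma. Set $B'\coloneqq e_1\otimes e_1\otimes e_2+e_1\otimes e_2\otimes e_1+e_2\otimes e_1\otimes e_1-e_2\otimes e_2\otimes e_2$; from Lemma~\ref{3tenexamnorm2} we already have $\|B'\|_{\sigma,\mathbb{R}}=1$, and by direct inspection $\langle B',e_i\otimes e_j\otimes e_k\rangle=1$ for each of the three index triples $(1,1,2),(1,2,1),(2,1,1)$ appearing in \eqref{exam1}. Lemma~\ref{necsufmindec} then certifies \eqref{exam1} as a nuclear decomposition over $\mathbb{R}$, and summing its three coefficients gives $\|C\|_{*,\mathbb{R}}=3/\sqrt{3}=\sqrt{3}$. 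The same $B'$ will also certify \eqref{S1realsymmindec}: evaluating the cubic $\langle B',x^{\otimes 3}\rangle=3x_1^2 x_2-x_2^3$ at $\pm(\sqrt{3}/2)e_1+(1/2)e_2$ gives $3\cdot\tfrac34\cdot\tfrac12-(1/2)^3=1$, and at $-e_2$ gives $-(-1)^3=1$, so a second application of Lemma~\ref{necsufmindec} combined with the arithmetic check $\tfrac{4}{3\sqrt{3}}(1+1+\tfrac14)=\sqrt{3}$ finishes the real symmetric part.

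For the complex case I would take $B\coloneqq (3/2)C$, which has $\|B\|_{\sigma,\mathbb{C}}=1$. A direct expansion gives $\langle B,w^{\otimes 3}\rangle=\tfrac{3\sqrt{3}}{2}\,\overline{w_1^2 w_2}$ for any $w=w_1 e_1+w_2 e_2\in\mathbb{C}^2$. A short calculation for each of the four unit vectors in \eqref{compminsymS1} shows $w_1^2 w_2=2/(3\sqrt{3})$: the real sign flip on $w_1$ drops out under squaring, and for the two vectors containing $i$ the square of $i$ cancels the sign flip on $w_2$, producing the same real value. Hence $\langle B,w^{\otimes 3}\rangle=1$ in all four cases, so Lemma~\ref{necsufmindec} certifies \eqref{compminsymS1} as a nuclear decomposition over $\mathbb{C}$, giving $\|C\|_{*,\mathbb{C}}=4\cdot\tfrac38=\tfrac32$.

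There is no substantive obstacle; the arguments are mechanical once the right dual witnesses are chosen. The one piece of foresight required is the recognition that the tensor $B'$ produced in Lemma~\ref{3tenexamnorm2} doubles as the certificate for both real decompositions of $C$, while $C$ itself (rescaled) is its own certificate in the complex case.
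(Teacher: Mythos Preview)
Your proposal is correct and follows essentially the same route as the paper: Banach's theorem for the spectral norms (with the observation that only $|x_1|,|x_2|$ matter, which the paper phrases as ``all entries of $X$ are nonnegative''), the tensor $Y=2B$ from Lemma~\ref{3tenexamnorm2} as the dual certificate for both real decompositions via Lemma~\ref{necsufmindec}, and $C$ itself (you rescale it to $\tfrac{3}{2}C$, the paper leaves it unscaled) as the certificate for \eqref{compminsymS1}. The only differences are notational.
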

\begin{proof}
Since $C$ is a symmetric tensor, we may rely on  \eqref{banachthm} and \eqref{banachnuc} in Section~\ref{sec:banach} to calculate its spectral and nuclear norms over  $\mathbb{R}$ and $\mathbb{C}$.
Set $X=\sqrt{3}C$ for convenience.

Let $x=(x_1,x_2)^\mathsf{T}$.  Then $f(x_1,x_2)\coloneqq \frac{1}{3}\langle X, x^{\otimes 3}\rangle =x_1^2x_2$.  Clearly $\|X\|_{\sigma,\mathbb{R}}=\|X\|_{\sigma,\mathbb{C}}$ since all entries of $X$ are nonnegative.   For the maximum of $\lvert f(x) \rvert$ when $\lVert x \rVert = 1$, we may restrict to $x_1,x_2 \ge 0$, $x_1^2+x_2^2=1$.  Since the maximum of $f(x_1,x_2)=x_1^2\sqrt{1-x_1^2}$ over $x_1 \in [0,1]$ occurs at $x_1^2=2/3$, $x_2=1/\sqrt{3}$, we get the first two equalities in \eqref{exam1specnucnrm}.

By Lemma~\ref{necsufmindec} and \eqref{Yident} in the proof of Lemma~\ref{3tenexamnorm2}, \eqref{exam1} is a nuclear decomposition over $\mathbb{R}$. Hence $\|C\|_{*,\mathbb{R}}= \sqrt{3}$.

By Corollary~\ref{cor:snd}, $C$ has symmetric nuclear decompositions over both $\mathbb{R}$ and $\mathbb{C}$. That \eqref{S1realsymmindec} is a symmetric nuclear decomposition over $\mathbb{R}$ follows from Lemma~\ref{necsufmindec} and the observation that
\[
\langle Y, (-e_2)^{\otimes 3}\rangle = \biggl\langle Y, \biggl[\frac{\sqrt{3}}{2}e_1+\frac{1}{2}e_2\biggr]^{\otimes 3}\biggr\rangle = \biggl\langle Y, \biggl[-\frac{\sqrt{3}}{2}e_1+\frac{1}{2}e_2\biggr]^{\otimes 3}\biggr\rangle =1=\|Y\|_{\sigma,\mathbb{R}},
\]
where $Y$ is as defined in the proof of Lemma~\ref{3tenexamnorm2}.
Likewise, \eqref{compminsymS1} is a symmetric nuclear decomposition over $\mathbb{C}$ by Lemma~\ref{necsufmindec} and the observation that
\begin{align*}
\biggl\langle C, \biggl[\sqrt{\frac{2}{3}}e_1+\frac{1}{\sqrt{3}}e_2\biggr]^{\otimes 3} \biggr\rangle &= \biggl\langle C, \biggl[-\sqrt{\frac{2}{3}}e_1+\frac{1}{\sqrt{3}}e_2\biggr]^{\otimes 3}\biggr\rangle \\
&= \biggl\langle C, \biggl[i\sqrt{\frac{2}{3}}e_1-\frac{1}{\sqrt{3}}e_2\biggr]^{\otimes 3}\biggr\rangle = \biggl\langle C, \biggl[-i\sqrt{\frac{2}{3}}e_1-\frac{1}{\sqrt{3}}e_2\biggr]^{\otimes 3}\biggr\rangle =\|C\|_{\sigma,\mathbb{C}}.
\end{align*}
Since \eqref{compminsymS1} is a symmetric nuclear decomposition over $\mathbb{C}$, we obtain $\|C\|_{*,\mathbb{C}}=3/2$.
\end{proof}

%
%

Let $x=(x_1,\dots,x_n)^\mathsf{T}\in\mathbb{C}^n$.  Denote by $|x|\coloneqq (|x_1|,\dots,|x_n|)^\mathsf{T}$.  Then $x$ is called a nonnegative vector, denoted as $x\ge 0$, if $x=|x|$.
We will also use this notation for tensors in $\mathbb{C}^{n_1\times \dots \times n_d}$.  
\begin{lemma}\label{specnormnonneg}  Let $A\in \mathbb{C}^{n_1\times \dots \times n_d}$.
Then 
\[\|A\|_{\sigma, \mathbb{C}}\le \||A|\|_{\sigma, \mathbb{C}}, \quad \||A|\|_{\sigma,\mathbb{C}}=\||A|\|_{\sigma,\mathbb{R}}.\]
\end{lemma}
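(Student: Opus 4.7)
The plan is a direct triangle-inequality argument, exploiting the fact that the Hermitian inner product \eqref{eq:hip} pairs $A$ against the complex conjugate of a rank-one factor, and that entrywise absolute values preserve the $l^2$-norm on $\mathbb{C}^{n_k}$.

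\textbf{Step 1 (pointwise bound).} For any unit vectors $u_k \in \mathbb{C}^{n_k}$, apply the triangle inequality entrywise:
\[
\lvert \langle A, u_1\otimes\cdots\otimes u_d\rangle\rvert
= \Bigl\lvert \sum_{i_1,\dots,i_d} a_{i_1\cdots i_d}\,\overline{u_{1,i_1}}\cdots\overline{u_{d,i_d}}\Bigr\rvert
\le \sum_{i_1,\dots,i_d} \lvert a_{i_1\cdots i_d}\rvert\,\lvert u_{1,i_1}\rvert\cdots \lvert u_{d,i_d}\rvert
= \langle \lvert A\rvert,\, \lvert u_1\rvert \otimes \cdots \otimes \lvert u_d\rvert\rangle,
\]
where the last equality uses that $\lvert u_k\rvert$ is real (so $\overline{\lvert u_k\rvert}=\lvert u_k\rvert$) and $\lvert A\rvert\ge 0$. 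Note that $\lVert \lvert u_k\rvert\rVert = \lVert u_k\rVert = 1$, so each $\lvert u_k\rvert \in \mathbb{R}^{n_k} \subseteq \mathbb{C}^{n_k}$ is an admissible unit vector for both the real and the complex spectral norm of $\lvert A\rvert$.

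\textbf{Step 2 (first inequality).} Taking the supremum over complex unit vectors $u_k$ on the left of the display in Step~1 and using that the right-hand side is a special case of the defining supremum for $\lVert \lvert A\rvert\rVert_{\sigma,\mathbb{C}}$ in \eqref{def2tenspnrm}, we obtain $\lVert A\rVert_{\sigma,\mathbb{C}}\le \lVert \lvert A\rvert\rVert_{\sigma,\mathbb{C}}$.

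\textbf{Step 3 (equality for nonnegative tensors).} The bound $\lVert \lvert A\rvert\rVert_{\sigma,\mathbb{R}} \le \lVert \lvert A\rvert\rVert_{\sigma,\mathbb{C}}$ is immediate from \eqref{def2tenspnrm} since the complex supremum ranges over a larger set. For the reverse inequality, apply Step~1 with $\lvert A\rvert$ in place of $A$: for any complex unit vectors $u_k$,
\[
\lvert \langle \lvert A\rvert, u_1\otimes\cdots\otimes u_d\rangle\rvert \le \langle \lvert A\rvert,\, \lvert u_1\rvert \otimes \cdots \otimes \lvert u_d\rvert\rangle \le \lVert \lvert A\rvert\rVert_{\sigma,\mathbb{R}},
\]
the last inequality because $\lvert u_k\rvert$ are real unit vectors. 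Taking the supremum over complex unit vectors yields $\lVert \lvert A\rvert\rVert_{\sigma,\mathbb{C}}\le \lVert \lvert A\rvert\rVert_{\sigma,\mathbb{R}}$, hence equality.

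There is essentially no obstacle here beyond bookkeeping with the complex-conjugate convention in \eqref{eq:hip}; both claims reduce to the scalar triangle inequality once one observes that the map $u_k \mapsto \lvert u_k\rvert$ preserves the $l^2$-norm and moves us from $\mathbb{C}^{n_k}$ into the nonnegative orthant of $\mathbb{R}^{n_k}$, where the pairing against a nonnegative tensor becomes a sum of nonnegative terms.
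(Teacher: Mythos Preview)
Your proof is correct and follows essentially the same approach as the paper's: the paper uses the triangle inequality to obtain $|\langle A, x_1\otimes\cdots\otimes x_d\rangle|\le \langle |A|,|x_1|\otimes\cdots\otimes|x_d|\rangle$, notes that the Euclidean norm is absolute ($\|x\|=\||x|\|$), and concludes both claims from the definitions. You have simply made the three logical steps (pointwise bound, first inequality, and the two directions of the equality) more explicit.
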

\begin{proof}  The triangle inequality yields  
\[
|\langle A, x_1 \otimes \dots \otimes x_d \rangle|\le \langle |A|, |x_1| \otimes \dots \otimes |x_d| \rangle.
\]
Recall that the Euclidean norm on $\mathbb{C}^n$ is an absolute norm, i.e., $\|x\|=\||x|\|$.  The definitions of $\|\cdot\|_{\sigma, \mathbb{C}}$ and $\|\cdot\|_{\sigma,\mathbb{R}}$
and the above inequality yields the result.
\end{proof}

A plausible nuclear norm analogue of the inequality $\|A\|_{\sigma, \mathbb{C}}\le \||A|\|_{\sigma, \mathbb{C}}$ is $\|A\|_{*, \mathbb{C}}\le \||A|\|_{*, \mathbb{C}}$. 
It is easy to show that this inequality holds in special cases (e.g.\ if $A$ is a hermitian positive semidefinite matrix) but it is false in general. For example, let
\[
A=\begin{bmatrix}1/\sqrt{2}& 1/\sqrt{2}\\ -1/\sqrt{2}&1/\sqrt{2}\end{bmatrix}.
\]
Then $\|A\|_*=2>\sqrt{2}=\||A|\|_*$.

\section{Nuclear $(p,q)$-norm of a matrix}\label{sec:matrix}

In this section, we study the special case where $d = 2$. Let $\| \cdot \|_p$ denote the $l^p$-norm on $\mathbb{R}^n$, i.e.,
\[
\|x\|_p=\left(\sum_{i=1}^n |x_i|^p\right)^{1/p}, \qquad \|x\|_\infty = \max \{ |x_1|,\dots,|x_n|\}.
\]
Recall that the dual norm  $\|\cdot\|_p^*=\|\cdot\|_{p^*}$ where $p^* \coloneqq p/(p-1)$, i.e., $1/p + 1/p^*=1$. 

The nuclear $(p,q)$-norm of a matrix $A \in \mathbb{R}^{m \times n}$ is
\begin{equation}\label{eq:npq}
\|A\|_{*,p,q} = \inf\Bigl\{\sum_{i=1}^r \lvert \lambda_i\rvert : A =  \sum_{i=1}^r \lambda_i u_{i}\otimes  v_{i}, \; \lVert u_{i} \rVert_p =\lVert v_{i} \rVert_q = 1, \; r \in \mathbb{N}\Bigr\}
\end{equation}
for any $p, q\in[1,\infty]$.  The spectral $(p,q)$-norm on $\mathbb{R}^{m \times n}$ is
\[
\|A\|_{\sigma,p,q} =  \max_{x, y \ne 0} \frac{y^\mathsf{T} A x}{\| x\|_p\| y\|_q} =   \max_{\| x\|_p = \| y\|_q = 1} y^\mathsf{T} A x
\]
for any $p, q\in[1,\infty]$.  The operator $(p,q)$-norm on $\mathbb{R}^{m \times n}$ is
\[
\|A\|_{p,q} =  \max_{x \ne 0} \frac{\|A x\|_q}{\| x\|_p} = \max_{\| x\|_p=1} \|A x\|_q
\]
for any $p, q\in[1,\infty]$. 
When $p = q$, we write
\[
\| \cdot \|_{p,p} = \| \cdot \|_{p}, \qquad \| \cdot \|_{\sigma, p,p} = \| \cdot \|_{\sigma, p}, \qquad \| \cdot \|_{*, p,p} = \| \cdot \|_{*, p},
\]
and call them the operator, spectral, nuclear $p$-norm respectively. The case $p = 2$ gives the usual spectral and nuclear norms.

It is well-known that  the operator $(p,q)$-norm and the spectral $(p,q)$-norm are identical:
\[
\|A\|_{\sigma,p,q}  = \|A\|_{p,q} \qquad\text{for all} \; A \in \mathbb{R}^{m \times n},
\]
and henceforth we will use the operator $(p,q)$-norm since it is the better known one.  It follows from $\|A x\|_q= \max_{\| y\|_{q^*}=1}  y^\mathsf{T} A x$ and $ y^\mathsf{T} A x = x^\mathsf{T} A^\mathsf{T} y$ that
\begin{equation}\label{eqnrmTdual}
\|A^\mathsf{T}\|_{q^*,p^*}=\|A\|_{p,q}.
\end{equation}

Equivalently, \eqref{eq:npq} may be written
\begin{equation}\label{eq:npq1}
\|A\|_{*,p,q} \coloneqq \min\Bigr\{\sum_{i=1}^r \|x_i\|_p\| y_i\|_{q} : A=\sum_{i=1}^r x_i\otimes y_i, \; r \in \mathbb{N}\Bigr\},
\end{equation}
or as the norm whose unit ball is the convex hull of  all ranks-one matrices $x\otimes y$, where $\|x\|_p\| y\|_q\le 1$. It is trivial to deduce from \eqref{eq:npq1} an analogue of \eqref{eqnrmTdual},
\begin{equation}\label{eqnucnrmTdual} 
\|A^\mathsf{T}\|_{*,q,p}=\|A\|_{*,p,q}.
\end{equation}

\begin{theorem}\label{dualnucopnorm}
The dual norm of the   operator $(p,q)$-norm is the nuclear $(q^*,p)$-norm on $\mathbb{R}^{m \times n}$, i.e.,
\[
\|A\|_{p,q}^*  = \|A\|_{*,q^*,p}
\]
for all $A \in \mathbb{R}^{m \times n}$ and all $p,q \in [1,\infty]$.
\end{theorem}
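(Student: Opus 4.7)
The plan is to establish $\|\cdot\|_{p,q}^{*} = \|\cdot\|_{*,q^{*},p}$ by first proving the apparently reversed identity $(\|\cdot\|_{*,q^{*},p})^{*} = \|\cdot\|_{p,q}$ and then invoking reflexivity of finite-dimensional normed spaces. To prepare, let $\mathcal{E} \coloneqq \{xy^{\mathsf{T}} : \|x\|_{q^{*}}=\|y\|_{p}=1\}\subset\mathbb{R}^{m\times n}$. Repeating the convex-geometry argument of Proposition~\ref{prop:tnn} with the factor norms $\|\cdot\|_{q^{*}}$ and $\|\cdot\|_{p}$ replacing $l^{2}$, one sees that $\mathcal{E}$ is symmetric and $\operatorname{conv}(\mathcal{E})$ is a bounded balanced convex set with nonempty interior, hence is the closed unit ball of a norm whose Minkowski gauge coincides with \eqref{eq:npq1}. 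In particular $\|\cdot\|_{*,q^{*},p}$ is a genuine norm and the infimum is attained.

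Next, I would compute $(\|\cdot\|_{*,q^{*},p})^{*}$ by matching inequalities. For the upper bound, given $B\in\mathbb{R}^{m\times n}$ and any decomposition $A=\sum_{i=1}^{r}\lambda_{i}u_{i}v_{i}^{\mathsf{T}}$ with $\|u_{i}\|_{q^{*}}=\|v_{i}\|_{p}=1$, H\"older's inequality $|u^{\mathsf{T}}w|\le \|u\|_{q^{*}}\|w\|_{q}$ combined with the operator bound $\|Bv_{i}\|_{q}\le \|B\|_{p,q}\|v_{i}\|_{p}$ yields
\[
|\langle A,B\rangle| \le \sum_{i=1}^{r}|\lambda_{i}|\,|u_{i}^{\mathsf{T}}Bv_{i}| \le \sum_{i=1}^{r}|\lambda_{i}|\,\|Bv_{i}\|_{q} \le \|B\|_{p,q}\sum_{i=1}^{r}|\lambda_{i}|,
\]
so taking the infimum over decompositions gives $|\langle A,B\rangle|\le \|A\|_{*,q^{*},p}\|B\|_{p,q}$ and hence $(\|\cdot\|_{*,q^{*},p})^{*}(B)\le \|B\|_{p,q}$. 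For the matching lower bound, given $\varepsilon>0$ I would choose $y\in\mathbb{R}^{n}$ with $\|y\|_{p}=1$ and $\|By\|_{q}\ge \|B\|_{p,q}-\varepsilon$, and then, using the duality $\|\cdot\|_{q^{*}}^{*}=\|\cdot\|_{q}$, pick $x\in\mathbb{R}^{m}$ with $\|x\|_{q^{*}}=1$ and $x^{\mathsf{T}}(By)\ge \|By\|_{q}-\varepsilon$. The rank-one $A\coloneqq xy^{\mathsf{T}}$ lies in $\mathcal{E}$, so $\|A\|_{*,q^{*},p}\le 1$, while $\langle A,B\rangle=x^{\mathsf{T}}By\ge \|B\|_{p,q}-2\varepsilon$; letting $\varepsilon\to 0$ gives $(\|\cdot\|_{*,q^{*},p})^{*}(B)\ge \|B\|_{p,q}$.

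Combining the two inequalities gives $(\|\cdot\|_{*,q^{*},p})^{*}=\|\cdot\|_{p,q}$, and finite-dimensional reflexivity then yields $\|\cdot\|_{p,q}^{*}=(\|\cdot\|_{*,q^{*},p})^{**}=\|\cdot\|_{*,q^{*},p}$. The only step requiring any thought is the identification of $\|\cdot\|_{*,q^{*},p}$ with the Minkowski gauge of $\operatorname{conv}(\mathcal{E})$, but this is essentially a verbatim transcription of Proposition~\ref{prop:tnn}; the two remaining H\"older estimates are entirely routine.
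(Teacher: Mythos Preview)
Your proposal is correct and follows essentially the same route as the paper: both identify the unit ball of $\|\cdot\|_{*,q^{*},p}$ with $\operatorname{conv}(\mathcal{E})$, prove $(\|\cdot\|_{*,q^{*},p})^{*}=\|\cdot\|_{p,q}$, and then invoke finite-dimensional reflexivity. The only difference is cosmetic---the paper compresses your two H\"older inequalities into a single chain of equalities by noting that the maximum of the linear functional $B\mapsto\operatorname{tr}(B^{\mathsf{T}}A)$ over $\operatorname{conv}(\mathcal{E})$ is already attained on $\mathcal{E}$.
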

\begin{proof}
As in the proof of Corollary~\ref{cor:bnuclear}, the unit ball of the $(q^*,p)$-nuclear norm $\| \cdot \|_{*,q^*,p}$ on $\mathbb{R}^{m\times n}$ is the convex hull of
$\mathcal{E}=\{xy^\mathsf{T} :  \lVert x \rVert _{q^*}= \lVert y \rVert _{p}=1\}$.
Hence
\begin{align*}
\|A\|_{*,q^*,p}^* &=\max_{\|B\|_{*,q^*,p}\le 1}\operatorname{tr} (B^\mathsf{T}A)=\max_{xy^\mathsf{T}\in\mathcal{E}} \operatorname{tr} (yx^\mathsf{T}A)\\
&= \max_{\| x\|_{q^*}=\| y\|_p=1} x^\mathsf{T}A y = \max_{\| y\|_p= 1} \lVert A y\rVert_{q}=\|A\|_{p,q}.  \qedhere
\end{align*}
\end{proof}

It is well-known that the operator $(p,q)$-norm is NP-hard in many instances \cite{HO10, Ste05} notably:
\begin{enumerate}[\upshape (i)]
\item\label{i} $\| \cdot \|_{p,q}$ is NP-hard if $1\le q<p\le \infty$.  
\item  $\| \cdot \|_p$  is NP-hard if $p\ne 1,2,\infty$.
\end{enumerate}
The exceptional cases \cite{Ste05} are also well-known:
\begin{enumerate}[\upshape (i)]\setcounter{enumi}{2}
\item $\| \cdot \|_{p}$  is polynomial-time computable if $p = 1,2,\infty$. 
\item\label{iv} $\| \cdot \|_{p,q}$  is polynomial-time computable if $p=1$ and $1\le q\le \infty$, or  if $q = \infty$ and $1\le p \le \infty$.
\end{enumerate}
By \cite{FL1}, the computational complexity of norms and their dual norms are polynomial-time interreducible. So we obtain the following from Theorem~\ref{dualnucopnorm}.
\begin{enumerate}[\upshape (i)]\setcounter{enumi}{4}
\item $\| \cdot \|_{*,p,q}$ is NP-hard if $1\le p^*< q\le \infty$.  
\item $\| \cdot \|_{*,p^*,p}$  is NP-hard if $p\ne 1,2,\infty$.
\item $\| \cdot \|_{*,p^*,p}$  is polynomial-time computable if $p = 1,2,\infty$. 
\item\label{iv'} $\| \cdot \|_{*,p,q}$  is polynomial-time computable if $p=1$ and $1\le q\le \infty$, or  if $q = 1$ and $1\le p \le \infty$.
\end{enumerate}
In \eqref{iv} and \eqref{iv'}, we assume that the values of $p$ and $q$ are rational. 

In fact, as further special cases of \eqref{iv'}, the nuclear $(1,p)$-norms and $(p,1)$-norms have closed-form expressions, a consequence of the well-known
closed-form expressions for the operator $(1,p)$-norms and $(p,\infty)$-norms.
\begin{proposition}\label{comp1pnorm}    Let $e_1,\dots,e_n$ be the standard basis vectors in $\mathbb{R}^n$.  Let $A\in \mathbb{R}^{m\times n}$ and write
\[
A = [A_{\bullet 1},\dots,A_{\bullet n}] = \begin{bmatrix} A_{1\bullet}^\mathsf{T}\\ \vdots \\ A_{m\bullet}^\mathsf{T} \end{bmatrix},
\]
$A_{\bullet 1},\dots,A_{\bullet n} \in \mathbb{R}^m$ are the column vectors and $A_{1\bullet},\dots,A_{m\bullet} \in \mathbb{R}^n$ are the row vectors of $A$. Then
\begin{align}
\|A\|_{1,p}&=\max_{j=1,\dots,n} \|Ae_{j}\|_p = \max \{\|A_{\bullet 1}\|_p,\dots,\|A_{\bullet n}\|_p\}, \label{1pform1}\\
\|A\|_{p,\infty}&=\max_{i=1,\dots,m}\|A^\mathsf{T}e_{i}\|_{p^*}= \max \{\|A_{1\bullet}\|_{p^*},\dots,\|A_{m\bullet}\|_{p^*}\},\label{1pform2}\\
\|A\|_{*,1,p}&=\sum_{i=1}^m  \|A^\mathsf{T}e_{i}\|_{p} = \|A_{1\bullet}\|_{p} + \dots + \|A_{m\bullet}\|_{p}, \label{1pnuform1}\\
\|A\|_{*,p,1}&=\sum_{j=1} ^n\|Ae_{j}\|_{p} = \|A_{\bullet 1}\|_p + \dots + \|A_{\bullet n}\|_p, \label{1pnuform2}
\end{align}
for all $p \in[1,\infty]$.
\end{proposition}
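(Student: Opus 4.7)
The plan is to prove the four identities sequentially, exploiting the transpose dualities \eqref{eqnrmTdual} and \eqref{eqnucnrmTdual} to derive each second identity in a pair from the first.

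First I would handle the operator $(1,p)$-norm identity \eqref{1pform1}. Given $\lVert x\rVert_1 = 1$, I write $x = \sum_j x_j e_j$ and apply the triangle inequality to obtain
\[
\lVert Ax\rVert_p \le \sum_{j=1}^n \lvert x_j\rvert\, \lVert A e_j\rVert_p \le \max_{j=1,\dots,n}\lVert A e_j\rVert_p,
\]
since $\sum_j\lvert x_j\rvert = 1$. Matching equality holds at $x = e_{j^\ast}$ for the maximizing index $j^\ast$. Then \eqref{1pform2} follows by \eqref{eqnrmTdual}: $\lVert A\rVert_{p,\infty} = \lVert A^\mathsf{T}\rVert_{1,p^\ast}$, and applying \eqref{1pform1} to $A^\mathsf{T}$ yields $\max_i\lVert A^\mathsf{T} e_i\rVert_{p^\ast} = \max_i\lVert A_{i\bullet}\rVert_{p^\ast}$.

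For the nuclear $(1,p)$-norm identity \eqref{1pnuform1} I would prove two matching inequalities. The upper bound comes from the explicit row decomposition $A = \sum_{i=1}^m e_i \otimes A_{i\bullet}$; since $\lVert e_i\rVert_1 = 1$, formula \eqref{eq:npq1} gives $\lVert A\rVert_{\ast,1,p} \le \sum_i \lVert A_{i\bullet}\rVert_p$. The matching lower bound is the crux. Given any competing decomposition $A = \sum_{k=1}^r x_k \otimes y_k$, reading off the $i$-th row yields $A_{i\bullet} = \sum_k (x_k)_i\, y_k$, so by the triangle inequality
\[
\lVert A_{i\bullet}\rVert_p \le \sum_{k=1}^r \lvert (x_k)_i\rvert\, \lVert y_k\rVert_p.
\]
Summing over $i$ and swapping the order of summation, while using $\sum_i \lvert (x_k)_i\rvert = \lVert x_k\rVert_1$, produces
\[
\sum_{i=1}^m \lVert A_{i\bullet}\rVert_p \le \sum_{k=1}^r \lVert x_k\rVert_1\,\lVert y_k\rVert_p,
\]
and taking the infimum over all decompositions gives $\sum_i\lVert A_{i\bullet}\rVert_p \le \lVert A\rVert_{\ast,1,p}$. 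Finally \eqref{1pnuform2} follows from \eqref{1pnuform1} applied to $A^\mathsf{T}$ together with \eqref{eqnucnrmTdual}, which turns the sum over rows into a sum over columns.

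No step is genuinely hard; the only point requiring care is the summation swap in the nuclear-norm lower bound, which is what cleanly turns a bound on each individual row into a bound on the total $l^1$-cost of an arbitrary decomposition. An alternative route for the nuclear identities would use Theorem~\ref{dualnucopnorm} to express $\lVert A\rVert_{\ast,1,p}$ as the supremum of $\langle A, B\rangle$ over $\lVert B\rVert_{p,\infty}\le 1$ and then build an optimal $B$ row-by-row via Hölder duality in $\mathbb{R}^n$, but the direct argument above is shorter and self-contained.
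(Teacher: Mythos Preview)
Your argument is correct. For the operator-norm identities \eqref{1pform1} and \eqref{1pform2} your triangle-inequality argument is essentially the same as the paper's (the paper phrases it as maximizing a convex function over the $l^1$-ball, hence at the extreme points $\pm e_j$, which is the same computation).

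For the nuclear-norm identities your route genuinely differs from the paper's. The paper proves \eqref{1pnuform2} first, not by working with decompositions at all, but by invoking Theorem~\ref{dualnucopnorm}: since $\|\cdot\|_{*,p,1}$ is the dual of $\|\cdot\|_{1,p^*}$, it computes
\[
\|A\|_{1,p^*}^* = \max_{\|B\|_{1,p^*}\le 1}\operatorname{tr}(B^\mathsf{T}A) = \max_{\|Be_j\|_{p^*}\le 1}\sum_{j=1}^n (Be_j)^\mathsf{T}(Ae_j) = \sum_{j=1}^n \|Ae_j\|_p,
\]
optimizing each column of $B$ independently via H\"older duality; then \eqref{1pnuform1} follows from \eqref{eqnucnrmTdual}. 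This is exactly the ``alternative route'' you mention at the end. Your direct primal argument---exhibiting the row decomposition $A=\sum_i e_i\otimes A_{i\bullet}$ for the upper bound and bounding any competing decomposition from below via the row-wise triangle inequality and the summation swap---is more elementary and self-contained, since it does not rely on Theorem~\ref{dualnucopnorm}. The paper's approach, on the other hand, showcases the duality theorem and makes the connection between \eqref{1pform1} and \eqref{1pnuform2} transparent. Both are short; yours has the small bonus of explicitly identifying the optimal nuclear decomposition.
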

\begin{proof}  Note that  $\mathcal{C} =\{x \in \mathbb{R}^n : \|x\|_1 \le1\}$ is the convex hull of $\{\pm e_{j} : j = 1,\dots,n\}$.
As $x \mapsto \|A x\|_p$ is a convex function on $\mathcal{C}$, we deduce that $\|A\|_{1,p}=\max_{x\in \mathcal{C}}\|A x\|_p=\max_{j=1,\dots, n} \lVert\pm Ae_{j}\rVert$.  Hence \eqref{1pform1} holds.  \eqref{1pform2} then follows from \eqref{eqnrmTdual} and \eqref{1pform1}. Now observe that
\[
\|A\|_{1,p^*}^*=\max_{\|B\|_{1,p^*}\le 1}\operatorname{tr} (B^\mathsf{T}A) =
\max_{\|Be_{j}\|_{p^*}\le 1}\left|\sum_{j=1}^n (B e_{j})^\mathsf{T}(Ae_{j})\right|=\sum_{j=1}^n \|Ae_{j}\|_{p}.
\]
Using Theorem~\ref{dualnucopnorm}, we obtain \eqref{1pnuform2}.  \eqref{1pnuform1} then follows from \eqref{eqnucnrmTdual}  and \eqref{1pnuform2}.
\end{proof}

The operator $(\infty,1)$-norm is NP-hard to compute by \eqref{i} but it has a well-known expression \eqref{eq:infty1} that arises in many applications. We will describe its dual norm,  the nuclear $\infty$-norm. In the following, we let
\begin{align*}
\mathbb{E}^n &\coloneqq \{\varepsilon = (\varepsilon_1,\dots,\varepsilon_n)^\mathsf{T} \in \mathbb{R}^n : \varepsilon_i=\pm1, \; i=1,\dots,n\},\\
\mathbb{E}^m \otimes \mathbb{E}^n &\coloneqq \{ E = (\varepsilon_{ij} )\in \mathbb{R}^{m\times n} : \varepsilon_{ij} = \pm 1,\; i=1,\dots,m, \; j =1,\dots,n,\; \operatorname{rank}(E) = 1\}.
\end{align*}
Note that $\# \mathbb{E}^n = 2^n$ and $\#  \mathbb{E}^m \otimes \mathbb{E}^n =  2^{m+n-1}$.
\begin{lemma}\label{lem:intyonenorm} 
Let $A\in\mathbb{R}^{m\times n}$. Then
\begin{equation}\label{eq:inftyp}
\|A\|_{\infty,p}=\max_{\varepsilon\in\mathbb{E}^n}\|A\varepsilon\|_p.
\end{equation}
In particular,
\begin{equation}\label{eq:infty1}
\lVert A\rVert_{\infty,1}=\max_{\varepsilon_{1},\dots,\varepsilon_{m},\delta_{1}%
,\dots,\delta_{n}\in\{-1,+1\}}\sum_{i=1}^{m}\sum_{j=1}^{n}a_{ij}%
\varepsilon_{i}\delta_{j},
\end{equation}
and its dual norm is
\begin{equation}\label{eq:*infty}
\|A\|_{*,\infty} =\min\Bigl\{ \sum_{i=1}^{mn} |\lambda_i | : A = \sum_{i=1}^{mn} \lambda_i E_i,\; E_1,\dots, E_{mn} \in\mathbb{E}^m \otimes \mathbb{E}^n \; \text{linearly independent}\Bigr\}.
\end{equation}
\end{lemma}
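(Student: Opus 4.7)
The plan is to treat the three identities in turn, the first two being quick and the third being the substantive one, to which I apply Proposition~\ref{prop:convcombextpts}.

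For \eqref{eq:inftyp}, I would observe that the closed unit ball $\{x\in\mathbb{R}^n:\|x\|_\infty\le 1\}=[-1,1]^n$ is a polytope whose vertex set is exactly $\mathbb{E}^n$. Since $x\mapsto\|Ax\|_p$ is convex on $\mathbb{R}^n$ (a norm composed with a linear map), its maximum over $[-1,1]^n$ is attained at a vertex, giving \eqref{eq:inftyp}.

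For \eqref{eq:infty1}, I would apply \eqref{eq:inftyp} with $p=1$ together with the analogous identity $\|y\|_1=\max_{\delta\in\mathbb{E}^m}\delta^{\mathsf{T}}y$ (valid for the same convexity reason, since $\{y:\|y\|_\infty\le 1\}=[-1,1]^m$ has vertex set $\mathbb{E}^m$ and $y\mapsto\delta^\mathsf{T}y$ is linear) to rewrite
\[
\|A\|_{\infty,1}=\max_{\varepsilon\in\mathbb{E}^n}\|A\varepsilon\|_1=\max_{\varepsilon\in\mathbb{E}^n,\;\delta\in\mathbb{E}^m}\delta^{\mathsf{T}}A\varepsilon=\max_{\varepsilon,\delta}\sum_{i,j}a_{ij}\delta_i\varepsilon_j,
\]
which is \eqref{eq:infty1} after a harmless relabeling of indices.

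The main part is \eqref{eq:*infty}. I would apply Proposition~\ref{prop:convcombextpts} to $V=\mathbb{R}^{m\times n}$, whose real dimension $mn$ matches the upper index in the sum, with $\nu=\|\cdot\|_{*,\infty}$. By the definition \eqref{eq:npq1} of the nuclear $(p,q)$-norm with $p=q=\infty$, the unit ball $B_\nu$ is the convex hull of the compact set $\mathcal{R}:=\{xy^{\mathsf{T}}:\|x\|_\infty=\|y\|_\infty=1\}$. The substantive step is to identify the extreme-point set $\mathcal{E}$ of $B_\nu$ as precisely $\mathbb{E}^m\otimes\mathbb{E}^n$; once this is done, formula \eqref{charnux} of Proposition~\ref{prop:convcombextpts} yields \eqref{eq:*infty} verbatim.

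To pin down $\mathcal{E}$, I would argue two inclusions. First, any $E=\varepsilon\delta^{\mathsf{T}}\in\mathbb{E}^m\otimes\mathbb{E}^n$ lies in $\mathcal{R}\subseteq B_\nu$ and is simultaneously a vertex of the cube $[-1,1]^{mn}\supseteq B_\nu$, so it is extreme in $B_\nu$. Second, if $xy^{\mathsf{T}}\in\mathcal{R}$ has $|x_i|<1$ for some $i$ (or symmetrically $|y_j|<1$), I would pick $t>0$ small enough that $\|x\pm te_i\|_\infty=1$; then
\[
xy^{\mathsf{T}}=\tfrac{1}{2}(x+te_i)y^{\mathsf{T}}+\tfrac{1}{2}(x-te_i)y^{\mathsf{T}}
\]
is a nontrivial convex combination of two distinct points of $\mathcal{R}\subseteq B_\nu$, so $xy^{\mathsf{T}}$ is not extreme. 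Since extreme points of the convex hull of a compact set must lie in that set, this forces $\mathcal{E}=\mathbb{E}^m\otimes\mathbb{E}^n$. The main obstacle is this extreme-point characterization; everything else is bookkeeping.
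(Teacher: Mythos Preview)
Your proof is correct and follows the same overall plan as the paper: the convexity/vertex argument for \eqref{eq:inftyp} and \eqref{eq:infty1} is identical, and for \eqref{eq:*infty} both you and the paper reduce to Proposition~\ref{prop:convcombextpts}.

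The one point of difference is how the unit ball of $\|\cdot\|_{*,\infty}$ is described. The paper first invokes Theorem~\ref{dualnucopnorm} to identify $\|\cdot\|_{*,\infty}$ as the dual of $\|\cdot\|_{\infty,1}$, so that (via \eqref{eq:infty1} and bipolarity) its unit ball is $\operatorname{conv}(\mathbb{E}^m\otimes\mathbb{E}^n)$; it then cites Proposition~\ref{prop:convcombextpts} without spelling out the extreme-point set. You instead work directly from the definition of $\|\cdot\|_{*,\infty}$ as the nuclear $(\infty,\infty)$-norm, show its unit ball is $\operatorname{conv}(\mathcal{R})$, and give a clean explicit argument that $\mathcal{E}=\mathbb{E}^m\otimes\mathbb{E}^n$ (the cube-vertex containment for one inclusion, the coordinate perturbation for the other). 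Your route avoids the duality theorem and fills in the extreme-point identification that the paper leaves implicit; the paper's route is shorter but relies on the reader supplying that step. One small omission: the lemma also asserts that $\|\cdot\|_{*,\infty}$ \emph{is} the dual of $\|\cdot\|_{\infty,1}$, which you do not address; a one-line citation of Theorem~\ref{dualnucopnorm} would close that.
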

\begin{proof}
Observe that the convex hull of $\mathbb{E}^n$ is precisely the unit cube, i.e.,
\[
\operatorname{conv}(\mathbb{E}^n) = \{ x \in  \mathbb{R}^n : \|x\|_{\infty} \le 1\},
\]
giving us \eqref{eq:inftyp}. For $x\in \mathbb{R}^m$,  note  that $\|x\|_1=\max_{\varepsilon\in\mathbb{E}^m}\varepsilon^\mathsf{T}x$ and thus
\[
\|A\|_{\infty,1}=\max_{\delta\in \mathbb{E}^n} \|A\delta\|_1=\max_{\varepsilon\in \mathbb{E}^m,\; \delta\in \mathbb{E}^n,} \varepsilon^\mathsf{T}A\delta,
\]
giving us \eqref{eq:infty1}. It follows from Theorem~\ref{dualnucopnorm} that $\|\cdot\|_{\infty,1}^*=\|\cdot\|_{*,\infty,\infty}=\|\cdot\|_{*,\infty}$ and \eqref{eq:*infty} follows from Proposition~\ref{prop:convcombextpts}.
\end{proof} 

We have thus far restricted our discussions over $\mathbb{R}$. We may use similar arguments to show that \eqref{eqnrmTdual}, \eqref{eqnucnrmTdual}, Theorem~\ref{dualnucopnorm}, and  Proposition~\ref{comp1pnorm} all remain true over $\mathbb{C}$. In addition, \eqref{eqnrmTdual} and \eqref{eqnucnrmTdual} also hold if we have $A^*$ in place of $A^\mathsf{T}$.

Nevertheless for $A\in\mathbb{R}^{m\times n}$, the values of its operator $(p,q)$-norm over $\mathbb{R}$ and over $\mathbb{C}$ may be different; likewise for its nuclear $(p,q)$-norm. In fact, a classical result \cite{Tay} states that  $\|A\|_{p,q,\mathbb{C}}=\|A\|_{p,q,\mathbb{R}}$ for all $A\in \mathbb{R}^{m\times n}$  if and only if $p\le q$. We deduce the following analogue for nuclear $(p,q)$-norm using Theorem~\ref{dualnucopnorm}.
\begin{corollary}
$\|A\|_{*,p,q,\mathbb{C}}=\|A\|_{*,p,q,\mathbb{R}}$ for all $A\in \mathbb{R}^{m\times n}$  if and only if $q\le p^*$.
\end{corollary}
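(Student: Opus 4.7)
The plan is to reduce the statement to Taylor's classical theorem (already cited in the excerpt) via the duality in Theorem~\ref{dualnucopnorm}, exploiting the fact that duality is an involution on the space of norms on the real vector space $\mathbb{R}^{m\times n}$.

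First, I would rewrite Theorem~\ref{dualnucopnorm} in the form $\|\cdot\|_{*,p,q,\mathbb{F}}^* = \|\cdot\|_{q,p^*,\mathbb{F}}$, valid over both $\mathbb{F}=\mathbb{R}$ and $\mathbb{F}=\mathbb{C}$ on the respective ambient spaces $\mathbb{F}^{m\times n}$. The key technical step is to show that, when restricted to the real subspace $\mathbb{R}^{m\times n}$, both $\|\cdot\|_{*,p,q,\mathbb{R}}$ and $\|\cdot\|_{*,p,q,\mathbb{C}}$ are genuine norms on $\mathbb{R}^{m\times n}$ whose duals (taken in the real Euclidean pairing $\langle A,B\rangle = \operatorname{tr}(B^\mathsf{T}A)$) are $\|\cdot\|_{q,p^*,\mathbb{R}}$ and $\|\cdot\|_{q,p^*,\mathbb{C}}$ respectively, again restricted to $\mathbb{R}^{m\times n}$.

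The real case is immediate from Theorem~\ref{dualnucopnorm}. For the complex case, I would fix a real matrix $B$ and compute
\[
\sup\bigl\{\operatorname{tr}(B^\mathsf{T}A) : A\in\mathbb{R}^{m\times n},\ \|A\|_{*,p,q,\mathbb{C}}\le 1\bigr\}.
\]
The inequality $\le \|B\|_{q,p^*,\mathbb{C}}$ is immediate from the complex duality since $\mathbb{R}^{m\times n}\subseteq \mathbb{C}^{m\times n}$. For the reverse inequality, I would start from $\|B\|_{q,p^*,\mathbb{C}}=\sup\{\operatorname{Re}\operatorname{tr}(B^\mathsf{T}A):A\in\mathbb{C}^{m\times n},\ \|A\|_{*,p,q,\mathbb{C}}\le 1\}$ (using $B^*=B^\mathsf{T}$ for real $B$), decompose $A=A_1+iA_2$ with $A_1,A_2$ real, and note $\operatorname{Re}\operatorname{tr}(B^\mathsf{T}A)=\operatorname{tr}(B^\mathsf{T}A_1)$. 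The remaining point is the conjugation-invariance $\|\bar A\|_{*,p,q,\mathbb{C}}=\|A\|_{*,p,q,\mathbb{C}}$, which follows from conjugating a nuclear decomposition $A=\sum\lambda_i u_i v_i^\mathsf{T}$ term by term (the $l^p$-norm is conjugation-invariant); hence $\|A_1\|_{*,p,q,\mathbb{C}}=\tfrac12\|A+\bar A\|_{*,p,q,\mathbb{C}}\le \|A\|_{*,p,q,\mathbb{C}}$, and the sup over real $A$ already attains $\|B\|_{q,p^*,\mathbb{C}}$.

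With the duality established on $\mathbb{R}^{m\times n}$, the finish is clean: since the map $\nu \mapsto \nu^*$ is a bijection on the cone of norms on the finite-dimensional real space $\mathbb{R}^{m\times n}$, one has $\|\cdot\|_{*,p,q,\mathbb{C}}=\|\cdot\|_{*,p,q,\mathbb{R}}$ on $\mathbb{R}^{m\times n}$ if and only if their duals coincide, i.e.\ $\|\cdot\|_{q,p^*,\mathbb{C}}=\|\cdot\|_{q,p^*,\mathbb{R}}$ on $\mathbb{R}^{m\times n}$. Taylor's theorem \cite{Tay}, applied with exponents $(q,p^*)$ in place of $(p,q)$, then gives the criterion $q\le p^*$.

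I expect the main obstacle to be step two: carefully justifying that complex feasibility in the definition of $\|A\|_{*,p,q,\mathbb{C}}$ gives no advantage in the duality pairing with a real $B$. Conjugation-invariance of $\|\cdot\|_{*,p,q,\mathbb{C}}$ handles this cleanly, but it must be stated explicitly since the conjugation-invariance of the nuclear $(p,q)$-norm is not recorded as a separate lemma in the excerpt.
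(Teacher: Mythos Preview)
Your proposal is correct and follows the paper's intended route: reduce to Taylor's theorem via the duality $\|\cdot\|_{*,p,q}=\|\cdot\|_{q,p^*}^*$ from Theorem~\ref{dualnucopnorm} (valid over both $\mathbb{R}$ and $\mathbb{C}$), so that the nuclear $(p,q)$-norms agree on $\mathbb{R}^{m\times n}$ iff the operator $(q,p^*)$-norms do, which by Taylor happens iff $q\le p^*$. Your conjugation-invariance argument in step two is a useful fleshing-out of a point the paper leaves implicit, namely that the real dual of the restriction of $\|\cdot\|_{*,p,q,\mathbb{C}}$ to $\mathbb{R}^{m\times n}$ coincides with the restriction of $\|\cdot\|_{q,p^*,\mathbb{C}}$; the paper simply asserts the corollary ``using Theorem~\ref{dualnucopnorm}'' without isolating this step.
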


\section{Tensor nuclear norm is NP-hard}\label{sec:np}

The computational complexity of a norm and  that of its dual norm are polynomial-time interreducible \cite{FL1}. If a norm is polynomial-time computable, then so is its dual; if a norm is NP-hard to compute, then so is its dual. Consequently, computing the nuclear norm of a $3$-tensor over $\mathbb{R}$ is NP-hard since computing the spectral norm of a $3$-tensor over $\mathbb{R}$ is NP-hard \cite{HL13}. In fact, it is easy to extend to higher orders by simply invoking Proposition~\ref{prop:mult}.
\begin{theorem}\label{thm:np1}
The spectral and nuclear norms of $d$-tensors over $\mathbb{R}$ are NP-hard for any $d \ge 3$.
\end{theorem}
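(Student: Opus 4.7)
The plan is to combine three ingredients: the known NP-hardness of the spectral norm of real $3$-tensors from \cite{HL13}, the polynomial-time interreducibility of a norm with its dual from \cite{FL1}, and Proposition~\ref{prop:mult} to lift from order $3$ to arbitrary order $d \ge 3$.

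For the base case $d=3$, the spectral norm result is exactly the theorem of Hillar--Lim in \cite{HL13}. To pass from spectral to nuclear norm at $d=3$, I would invoke the fact established in Section~\ref{sec:norms} that $\lVert \cdot \rVert_{\sigma,\mathbb{R}}$ and $\lVert \cdot \rVert_{*,\mathbb{R}}$ are dual norms on $\mathbb{R}^{n_1 \times n_2 \times n_3}$, together with \cite{FL1}: computing a norm and computing its dual are polynomial-time interreducible. Thus NP-hardness of $\lVert \cdot \rVert_{\sigma,\mathbb{R}}$ transfers to $\lVert \cdot \rVert_{*,\mathbb{R}}$ at order $3$.

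For the inductive step, given an arbitrary $3$-tensor $A \in \mathbb{R}^{n_1 \times n_2 \times n_3}$, I would construct the $d$-tensor
\[
\widetilde{A} \coloneqq A \otimes e_1 \otimes \dots \otimes e_1 \in \mathbb{R}^{n_1 \times n_2 \times n_3 \times n_4 \times \dots \times n_d},
\]
where each $e_1$ is a standard unit vector in $\mathbb{R}^{n_k}$ for $k = 4,\dots,d$ (any $n_k \ge 1$ will do, so the input size blows up by at most an additive constant). Since $\lVert e_1 \rVert = 1$ for each padded factor, Proposition~\ref{prop:mult} yields
\[
\lVert \widetilde{A} \rVert_{\sigma,\mathbb{R}} = \lVert A \rVert_{\sigma,\mathbb{R}}, \qquad \lVert \widetilde{A} \rVert_{*,\mathbb{R}} = \lVert A \rVert_{*,\mathbb{R}}.
\]
The map $A \mapsto \widetilde{A}$ is evidently computable in polynomial time, so any polynomial-time algorithm for the spectral (resp.\ nuclear) norm of $d$-tensors would yield one for $3$-tensors, contradicting the base case.

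There is no substantive obstacle here; all the work has been done elsewhere. The only small point worth flagging is that one should either use Proposition~\ref{prop:mult} separately for the spectral and nuclear norms in the inductive step, or alternatively establish the spectral-norm hardness at order $d$ first (via the padding) and then invoke duality via \cite{FL1} one more time at order $d$. Either route gives the stated conclusion for both norms simultaneously and for every $d \ge 3$.
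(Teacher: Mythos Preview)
Your proposal is correct and follows exactly the paper's approach: the paragraph immediately preceding Theorem~\ref{thm:np1} outlines precisely this argument---NP-hardness of $\lVert\cdot\rVert_{\sigma,\mathbb{R}}$ for $3$-tensors from \cite{HL13}, transfer to $\lVert\cdot\rVert_{*,\mathbb{R}}$ via the duality result of \cite{FL1}, and extension to $d\ge 3$ by invoking Proposition~\ref{prop:mult}. Your explicit padding $\widetilde{A}=A\otimes e_1\otimes\dots\otimes e_1$ just spells out what the paper leaves implicit.
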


In this section, we will extend the NP-hardness of tensor spectral and nuclear norms to $\mathbb{C}$.  In addition, we will show that even the \emph{weak} membership problem is NP-hard, a \emph{stronger} claim than the membership problem being NP-hard (Theorem~\ref{thm:np1} refers to the membership problem). In the study of various tensor problems, it is sometimes the case that imposing certain special properties on the tensors makes the problems more tractable. Examples of such properties include: (i) even order, (ii) symmetric or Hermitian, (iii) positive semidefinite, (iv) nonnegative valued (we will define these formally later). We will show that computing the spectral or nuclear norm for tensors having all of the aforementioned properties remains an NP-hard problem.

Let $G=(V,E)$ be an undirected graph with vertex set $V\coloneqq \{1,\dots,n\}$ and edge set $E \coloneqq \bigl\{ \{i_k,j_k\} : k=1,\dots,m\bigr\}$.
Let $\kappa(G)$ be the clique number of $G$, i.e., the size of the largest clique in $G$, well-known to be NP-hard to compute \cite{Ka}.  Let $M_G$ be the adjacency matrix of $G$, i.e., $m_{ij}=1=m_{ji}$ if $\{i,j\}\in E$ and is zero otherwise.
Motzkin and Straus \cite{MS65} showed that
\begin{equation}\label{modMSthm}
\frac{\kappa(G)-1}{\kappa(G)}=\max_{x\in\Delta^n} x^\mathsf{T} M_G x,
\end{equation} 
where  $\Delta^n \coloneqq \{ x \in \mathbb{R}^n : x \ge 0, \; \|x \|_1 = 1\}$ is the probability simplex.
Equality is attained in \eqref{modMSthm} when $x$ is uniformly distributed on the largest clique.

We transform \eqref{modMSthm} into a problem involving $4$-tensors.  Let $x=y^{\circ 2}$, 
i.e., $x=(y_1^2,\dots,y_n^2)^\mathsf{T}$.   Then\footnote{By convention, we sum once over each edge; e.g.\ if $E =\{ \{1,2\}\}$, then $\sum_{\{i,j\} \in E} a_{ij} = a_{12}$, not $a_{12} + a_{21}$.}
\begin{equation}\label{4tensconv}
 x^\mathsf{T} M_G x = 2\sum_{\{i,j\}\in E} y_i^2y_j^2.
\end{equation}
For integers $1\le s< t\le n$, let $A_{st} = \bigl(a_{ijkl}^{(s,t)} \bigr)_{i,j,k,l=1}^{n} \in \mathbb{C}^{n \times n \times n \times n}$ be defined by
\[
a_{ijkl}^{(s,t)}=
\begin{cases}
1/2 & i = s,\; j = t,\; k = s,\; l =t,\\
1/2 & i = t,\; j = s,\; k = t,\; l =s,\\
1/2 & i = s,\; j = t,\; k = t,\; l =s,\\
1/2 & i = t,\; j = s,\; k = s,\; l =t,\\
0 & \text{otherwise}.
\end{cases}
\]
Observe that $A_{st}$ is \emph{not} a symmetric tensor but we have
\begin{equation}\label{eq:ast}
\langle A_{st},y\otimes y\otimes  y\otimes y\rangle = 2y_s^2 y_t^2.
\end{equation}

\begin{definition}
Let $A = (a_{ijkl})_{i,j,k,l=1}^{m,n,m,n}\in \mathbb{C}^{m \times n \times m \times n}$ be a  $4$-tensor.
We call it \textit{bisymmetric} if 
\[
a_{ijkl}=a_{klij} \quad \textrm{for all }i,k=1,\dots, m,\; j,l = 1, \dots, n,
\]
and \textit{bi-Hermitian} if 
\[
a_{ijkl}=\bar{a}_{klij} \quad \text{for all }i,k = 1, \dots, m, \; j,l = 1, \dots, n.  
\]
A bi-Hermitian tensor is said to be \textit{bi-positive semidefinite} if
\[
\sum_{i,j,k,l=1}^{m,n,m,n} a_{ijkl} x_{ij} \bar{x}_{kl} \ge 0 \quad \text{for all}\;X = (x_{ij}) \in \mathbb{C}^{m \times n}.
\]
\end{definition}
We may regard a $4$-tensor $A = (a_{ijkl})_{i,j,k,l=1}^{m,n,m,n}\in \mathbb{C}^{m \times n \times m \times n}$ as a matrix $M(A)\coloneqq [a_{(i,j), (k,l)}]\in \mathbb{C}^{mn\times mn}$, where $a_{(i,j),(k,l)} \coloneqq a_{ijkl}$. Then $A$ is bisymmetric, bi-Hermitian, or bi-positive semidefinite if and only if $M(A)$ is symmetric, Hermitian, or positive semidefinite.

Clearly bi-Hermitian and bisymmetric are the same notion over $\mathbb{R}$.
If $m = n$, a bisymmetric $4$-tensor is not necessarily a symmetric $4$-tensor although the converse is trivially true. However, if $m = n$, a real  bi-positive semidefinite  tensor $A \in \mathbb{R}^{n \times n \times n \times n}$ is clearly a positive semidefinite tensor in the usual sense, i.e.,
\[
\sum_{i,j,k,l=1}^{n,n,n,n} a_{ijkl} x_{i} x_{j} x_{k} x_{l} \ge 0 \quad \text{for all}\; x \in \mathbb{R}^n.
\]

\begin{lemma}
The tensor $A_{st} \in \mathbb{C}^{n \times n \times n \times n}$ is bi-Hermitian, bisymmetric, bi-positive semidefinite, and has all entries nonnegative. 
\end{lemma}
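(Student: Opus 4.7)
My plan is to verify each of the four claimed properties directly from the explicit definition of $A_{st}$, since all four are small finite checks once the nonzero entries are enumerated.

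First, nonnegativity is immediate: every entry $a^{(s,t)}_{ijkl}$ is either $0$ or $1/2$, so $A_{st}$ has nonnegative real entries, which in particular means $A_{st}=\overline{A_{st}}$ and bi-Hermiticity reduces to bisymmetry.

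For bisymmetry, I would observe that the four index tuples on which $A_{st}$ is nonzero, namely
\[
(s,t,s,t),\quad (t,s,t,s),\quad (s,t,t,s),\quad (t,s,s,t),
\]
are closed under the involution $(i,j,k,l)\mapsto (k,l,i,j)$: the first two tuples are fixed by this swap, while the last two are exchanged. Since all four nonzero values equal $1/2$, this gives $a^{(s,t)}_{ijkl}=a^{(s,t)}_{klij}$ on every tuple, i.e. $A_{st}$ is bisymmetric (and hence bi-Hermitian).

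The main part of the lemma is bi-positive semidefiniteness, and here I would just compute directly. For any $X=(x_{ij})\in\mathbb{C}^{n\times n}$, only the four nonzero entries contribute, yielding
\[
\sum_{i,j,k,l=1}^{n} a^{(s,t)}_{ijkl}\,x_{ij}\bar x_{kl}
=\tfrac{1}{2}\bigl(x_{st}\bar x_{st}+x_{ts}\bar x_{ts}+x_{st}\bar x_{ts}+x_{ts}\bar x_{st}\bigr)
=\tfrac{1}{2}\lvert x_{st}+x_{ts}\rvert^{2}\ge 0.
\]
Equivalently, the flattening $M(A_{st})\in\mathbb{C}^{n^{2}\times n^{2}}$ is supported on the $2\times 2$ principal submatrix indexed by the pairs $(s,t)$ and $(t,s)$, where it equals $\tfrac{1}{2}\begin{bmatrix}1&1\\1&1\end{bmatrix}=\tfrac{1}{2}(e_{(s,t)}+e_{(t,s)})(e_{(s,t)}+e_{(t,s)})^{*}$, a rank-one PSD matrix. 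Either viewpoint concludes the proof; I do not anticipate any obstacle since the argument is just bookkeeping on four index tuples.
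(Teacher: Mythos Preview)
Your proposal is correct and follows essentially the same route as the paper: the paper dismisses nonnegativity, bisymmetry, and bi-Hermiticity as immediate from the definition, and for bi-positive semidefiniteness computes exactly the same identity $\sum a^{(s,t)}_{ijkl}x_{ij}\bar x_{kl}=\tfrac{1}{2}(x_{st}+x_{ts})(\bar x_{st}+\bar x_{ts})\ge 0$. Your additional remark about the flattening $M(A_{st})$ being a rank-one PSD matrix is a nice gloss but not needed.
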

\begin{proof}
It follows from the way it is defined that $A_{st}$ is bi-Hermitian, bisymmetric, and nonnegative valued. It is positive semidefinite because
\[
\sum_{i,j,s,t=1}^{n,n,n,n} a_{ijkl}^{(s,t)} x_{ij} \bar{x}_{kl} = \frac{1}{2} (x_{st} + x_{ts})(\bar{x}_{st} + \bar{x}_{ts}) \ge 0 
\]
for all $X = (x_{ij}) \in \mathbb{C}^{n \times n}$.
\end{proof}

$M(A_{st})$ is evidently a nonnegative definite, rank-one matrix with trace one.  Those familiar with quantum information theory may note that $M(A_{st})$ represents a bipartite  density matrix \cite{FL2}. For any graph $G= (V, E)$, we define
\begin{equation}\label{defCG}
A_G\coloneqq \sum_{\{s,t\}\in E}A_{st} \in \mathbb{C}^{n \times n \times n \times n}.
\end{equation}
Then  $A_G$ is bi-Hermitian, bisymmetric, bi-positive semidefinite, and has all entries nonnegative.
Summing \eqref{eq:ast} over $\{s, t\} \in E$ gives
\begin{equation}\label{eq:ms3}
\langle A_{G},y\otimes y\otimes  y\otimes y\rangle = x^\mathsf{T} M_Gx,
\end{equation}
where $x = y^{\circ 2}$. Hence
\begin{equation}\label{eq:ms4}
\max_{\lVert y \rVert = 1} \langle A_{G},y\otimes y\otimes  y\otimes y\rangle = \max_{x\in\Delta^n}  x^\mathsf{T} M_G x = \frac{\kappa(G)-1}{\kappa(G)} .
\end{equation}
\begin{theorem}\label{genban}
Let $G$ be a simple undirected graph on $n$ vertices with $m$ edges. Let $A_G$ be defined as in \eqref{defCG}.
Then 
\begin{equation}\label{inftynormcTy+}
\|A_G\|_{\sigma, \mathbb{C}} \coloneqq \max_{0 \ne x,y,u,v\in\mathbb{C}^n} \frac{|\langle A_G, x\otimes y \otimes u\otimes v \rangle |}{\|x\|\|y\|\|u\|\|v\|}
= \max_{0 \ne y\in\mathbb{R}^n_+} \frac{\langle A_G, y\otimes y \otimes y\otimes y\rangle}{\|y\|^4} .
\end{equation}
Furthermore, we have
\begin{equation}\label{norminfeq}
\frac{\kappa(G)-1}{\kappa(G)} = \|A_G\|_{\sigma,\mathbb{C}}=\|A_G\|_{\sigma,\mathbb{R}}.
\end{equation}
\end{theorem}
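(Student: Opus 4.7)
The plan is to prove both equalities by combining three ingredients: (i) a reduction to a nonnegative real optimization via Lemma~\ref{specnormnonneg}, (ii) a Cauchy--Schwarz step exploiting the bi-positive semidefiniteness of $A_G$, and (iii) the Motzkin--Straus identity \eqref{modMSthm}. Since $A_G$ has nonnegative entries, $|A_G| = A_G$, so Lemma~\ref{specnormnonneg} yields $\|A_G\|_{\sigma,\mathbb{C}} = \|A_G\|_{\sigma,\mathbb{R}}$ and allows me to restrict to $x,y,u,v \in \mathbb{R}^n_+$ throughout. The easy lower bound in \eqref{norminfeq} and \eqref{inftynormcTy+} comes from plugging $x = y = u = v = y^*$, where $y^*$ is uniform on a maximum clique: \eqref{eq:ms4} then gives $\langle A_G, y^{*\otimes 4}\rangle = (\kappa(G)-1)/\kappa(G)$, hence $\|A_G\|_{\sigma,\mathbb{R}} \ge (\kappa(G)-1)/\kappa(G)$.

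For the matching upper bound, I would first reduce a general quadrilinear form to the ``diagonal'' form $\langle A_G, x\otimes y\otimes x\otimes y\rangle$. The flattening $M(A_G) \in \mathbb{C}^{n^2\times n^2}$ is positive semidefinite (because $A_G$ is bi-positive semidefinite), so the PSD Cauchy--Schwarz inequality applied to $x\otimes y$ and $u\otimes v$ gives
\[
\langle A_G, x\otimes y\otimes u\otimes v\rangle \;\le\; \sqrt{\langle A_G, x\otimes y\otimes x\otimes y\rangle}\,\sqrt{\langle A_G, u\otimes v\otimes u\otimes v\rangle},
\]
reducing the upper bound to controlling $\langle A_G, x\otimes y\otimes x\otimes y\rangle$ over unit nonnegative $x,y$. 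Using the four nonzero entries of each $A_{st}$, one expands
\[
\langle A_G, x\otimes y\otimes x\otimes y\rangle \;=\; \tfrac{1}{2}\sum_{\{s,t\}\in E}(x_s y_t + x_t y_s)^2 \;=\; \tfrac{1}{2}\bigl(a^\mathsf{T} M_G b + c^\mathsf{T} M_G c\bigr),
\]
where $a \coloneqq x^{\circ 2}$, $b \coloneqq y^{\circ 2}$, $c \coloneqq x\circ y$, and $a, b \in \Delta^n$ since $\|x\| = \|y\| = 1$.

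The crux is to bound the sum $a^\mathsf{T} M_G b + c^\mathsf{T} M_G c$ by a single Motzkin--Straus expression. Componentwise AM--GM gives $c_i c_j = \sqrt{a_i a_j b_i b_j} \le (a_i a_j + b_i b_j)/2$; since $M_G$ has nonnegative entries, multiplying by $M_G(i,j)$ and summing yields $c^\mathsf{T} M_G c \le \tfrac{1}{2}(a^\mathsf{T} M_G a + b^\mathsf{T} M_G b)$. Plugging this in,
\[
\langle A_G, x\otimes y\otimes x\otimes y\rangle \;\le\; \tfrac{1}{4}(a+b)^\mathsf{T} M_G (a+b) \;=\; p^\mathsf{T} M_G p, \qquad p \coloneqq (a+b)/2 \in \Delta^n,
\]
and \eqref{modMSthm} bounds the right side by $(\kappa(G)-1)/\kappa(G)$. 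Combined with the lower bound, this proves \eqref{norminfeq}, and \eqref{inftynormcTy+} follows since both sides equal $(\kappa(G)-1)/\kappa(G)$. I expect this AM--GM step to be the main obstacle: the term $a^\mathsf{T} M_G b$ alone can equal $1$ (e.g.\ $a = e_i$, $b = e_j$ for an edge $\{i,j\}$), so Motzkin--Straus cannot be applied to it in isolation---the key observation is that $c^\mathsf{T} M_G c$ is small in exactly that regime, and AM--GM packages the two terms into a single convex combination $p \in \Delta^n$ on which Motzkin--Straus applies.
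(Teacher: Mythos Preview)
Your proposal is correct and follows essentially the same route as the paper's proof: reduce to nonnegative real vectors via $|A_G|=A_G$, use positive semidefiniteness of $M(A_G)$ with Cauchy--Schwarz to reduce the quadrilinear form to $\langle A_G,x\otimes y\otimes x\otimes y\rangle$, then bound this edge-by-edge and invoke Motzkin--Straus. The only cosmetic difference is in the last reduction: the paper applies Cauchy--Schwarz directly to each edge, $(x_sy_t+x_ty_s)^2\le(x_s^2+y_s^2)(x_t^2+y_t^2)$, and introduces a single vector $a$ with $a_s^2=(x_s^2+y_s^2)/2$ (which is exactly your $p_s$); your expansion into $\tfrac12\bigl(a^\mathsf{T}M_Gb+c^\mathsf{T}M_Gc\bigr)$ followed by AM--GM on $c_sc_t$ is the same inequality rewritten, and both yield $\langle A_G,x\otimes y\otimes x\otimes y\rangle\le p^\mathsf{T}M_Gp$.
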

If $A_G$ were a symmetric $4$-tensor as opposed to merely bisymmetric, then we may apply Banach's theorem \eqref{banachthm} to deduce that the maximum is attained at $x=y=u=v$ and thus \eqref{inftynormcTy+} would follow.  However $A_G$ is not symmetric and we may not invoke Banach's theorem.  Instead we will rely on the following lemma, which may be of independent interest.
\begin{lemma}\label{partsymlem}  Let $A=(a_{ijkl})\in \mathbb{C}^{m \times n \times m \times n}$. If $M(A) \in \mathbb{C}^{mn \times mn}$ is Hermitian positive semidefinite, then
\[
\|A\|_{\sigma , \mathbb{C}}=\max_{0 \ne x\in \mathbb{C}^m, \; 0 \ne y \in \mathbb{C}^n} \frac{\langle A, x\otimes y \otimes \bar{x}\otimes \bar{y} \rangle}{\|x\|^2\|y\|^2}.
\]
\end{lemma}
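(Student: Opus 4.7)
My plan is to translate both sides into a statement about the positive semidefinite matrix $M(A)$ acting on vectorizations of rank-one matrices in $\mathbb{C}^{m\times n}$, and then finish with a Cauchy--Schwarz step.

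First I would set up the identification: to each pair $(x,y)\in \mathbb{C}^m\times \mathbb{C}^n$ associate the rank-one matrix $X = xy^\mathsf{T}\in \mathbb{C}^{m\times n}$, whose vectorization $\operatorname{vec}(X)\in \mathbb{C}^{mn}$ has $\operatorname{vec}(X)_{(i,j)} = x_iy_j$ and Frobenius norm $\|x\|\,\|y\|$. A direct computation using $\langle A,B\rangle = \sum a_{ijkl}\overline{b_{ijkl}}$ gives
\[
\langle A, x\otimes y\otimes \bar x\otimes \bar y\rangle = \sum_{ijkl} a_{ijkl}\bar x_i\bar y_j x_k y_l = \operatorname{vec}(X)^* M(A)\,\operatorname{vec}(X),
\]
which is real and nonnegative because $M(A)$ is Hermitian positive semidefinite. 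For general $u\in\mathbb{C}^m$, $v\in\mathbb{C}^n$, setting $W = \bar u\bar v^\mathsf{T}$ (still rank one, with $\|W\|_F = \|u\|\|v\|$) yields
\[
\langle A, x\otimes y\otimes u\otimes v\rangle = \sum a_{ijkl}\bar x_i\bar y_j\bar u_k\bar v_l = \operatorname{vec}(X)^* M(A)\,\operatorname{vec}(W).
\]
Since complex conjugation is a bijection on the set of rank-one unit matrices, this rewrites the spectral norm as
\[
\|A\|_{\sigma,\mathbb{C}} = \sup\bigl\{|\operatorname{vec}(X)^* M(A)\,\operatorname{vec}(Y)| : X,Y\in \mathbb{C}^{m\times n}\text{ rank one, }\|X\|_F = \|Y\|_F = 1\bigr\}.
\]

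The second step is a Cauchy--Schwarz argument. Because $M(A)$ is Hermitian positive semidefinite, $(Z,W)\mapsto \operatorname{vec}(Z)^*M(A)\operatorname{vec}(W)$ is a positive semidefinite sesquilinear form on $\mathbb{C}^{mn}$, so
\[
|\operatorname{vec}(X)^* M(A)\,\operatorname{vec}(Y)|^2 \le \bigl(\operatorname{vec}(X)^*M(A)\operatorname{vec}(X)\bigr)\bigl(\operatorname{vec}(Y)^*M(A)\operatorname{vec}(Y)\bigr).
\]
Taking the supremum over rank-one unit $X,Y$ on both sides yields $\|A\|_{\sigma,\mathbb{C}}^2 \le \beta^2$, where $\beta := \sup_{X}\operatorname{vec}(X)^*M(A)\operatorname{vec}(X)$ is the supremum over rank-one unit $X$. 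The reverse inequality $\beta \le \|A\|_{\sigma,\mathbb{C}}$ is immediate by restricting the two-variable supremum to $Y=X$. Combining these, $\|A\|_{\sigma,\mathbb{C}} = \beta$, and unpacking $X = xy^\mathsf{T}$ together with a homogeneity rescaling to drop the unit normalization gives exactly the claimed formula.

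The main obstacle is purely bookkeeping: the conjugation built into the Hermitian inner product has to be tracked carefully so that the $\bar x,\bar y$ on the right-hand side line up with the adjoint in $\operatorname{vec}(X)^*$ and allow the positive semidefiniteness of $M(A)$ to be invoked on the correct sesquilinear form. Once that alignment is in place, Cauchy--Schwarz does all the real work and the proof reduces to two lines.
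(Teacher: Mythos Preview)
Your proof is correct and is essentially the same as the paper's: both translate the spectral norm into $\lvert \operatorname{vec}(X)^* M(A)\operatorname{vec}(W)\rvert$ over rank-one $X,W$ and then apply Cauchy--Schwarz for the positive semidefinite sesquilinear form induced by $M(A)$. The paper's write-up is a bit more compressed (it uses $\lvert \bar w^\mathsf{T} M z\rvert \le \max(\bar z^\mathsf{T} M z,\bar w^\mathsf{T} M w)$ directly rather than squaring and taking suprema), but the substance and the key idea are identical.
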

\begin{proof}
Let $M = M(A)$. Then $M$ is a Hermitian positive semidefinite matrix. 
Cauchy--Schwarz applied to the sesquilinear form $\bar{w}^\mathsf{T} M z$ gives 
\[
|\bar{w}^\mathsf{T} Mz |\le\sqrt{\bar{z}^\mathsf{T} M z}\sqrt{\bar{w}^\mathsf{T} M w}\le \max(\bar{z}^\mathsf{T} M z ,\bar{w}^\mathsf{T} M w ).
\]
Let $ z = \operatorname{vec}( x\otimes y)$ and  $w = \operatorname{vec}(  \bar{u}\otimes \bar{v}) \in \mathbb{C}^{mn}$ and observe that
\[
|\langle A, x\otimes y \otimes u \otimes v \rangle| =  |\bar{w}^\mathsf{T} M z| \le  \max(\langle A, x\otimes y \otimes \bar{x}\otimes \bar{y} \rangle,\langle A,  \bar{u}\otimes \bar{v}  \otimes u\otimes v \rangle),
\]
from which the required equality follows upon taking $\max$ over unit vectors.
\end{proof}

\begin{proof}[Proof of Theorem~\ref{genban}]
We apply Lemma~\ref{partsymlem} to $A_G$ and note that we may take our maximum over $\mathbb{R}^n_+$  since $A_G$ is nonnegative valued.
\[
\|A_G \|_{\sigma,\mathbb{C}} =\max_{0 \ne x,y,u,v\in\mathbb{R}^n_+} \frac{\langle A_G, x\otimes y \otimes u\otimes v \rangle}{\|x\|\|y\|\|u\|\|v\|}
=\max_{0 \ne x, y \in \mathbb{R}^n_+} \frac{\langle A_G, x\otimes y \otimes x \otimes y \rangle}{\|x\|^2\|y\|^2}.
\]  
Since $2\langle A_{st}, x\otimes y \otimes x \otimes y \rangle =(x_sy_t+x_ty_s)^2$, we may use Cauchy--Schwarz to see that
\[
(x_sy_t+x_ty_s)^2\le 4 \frac{(x_s^2+y_s^2)}{2} \times \frac{(x_t^2+y_t^2)}{2}.
\]
If we do a change-of-variables $a_s=\sqrt{(x_s^2+y_s^2)/2}$ for $s=1,\dots, n$, we obtain
\[
\langle A_{st}, x\otimes y \otimes x \otimes y \rangle \le 2a_s^2 a_t^2=\langle A_{st}, a\otimes a \otimes a \otimes a \rangle.
\]
Upon summing over $\{s,t\} \in E$, we get
\[
\langle A_G, x\otimes y \otimes x \otimes y \rangle \le \langle A_G, a\otimes a \otimes a \otimes a \rangle,
\]
where the left-hand side follows from \eqref{defCG} and the right-hand side follows from \eqref{4tensconv} and \eqref{eq:ms3}.
The last inequality gives us \eqref{inftynormcTy+} easily. We then get \eqref{norminfeq} from \eqref{eq:ms4} and \eqref{inftynormcTy+}.
\end{proof}

In the following, we let  $\mathbb{Q}_\mathbb{F}$ be the field of rational numbers $\mathbb{Q}$ if $\mathbb{F} = \mathbb{R}$ and the field of Gaussian rational numbers $ \mathbb{Q}[i] \coloneqq \{ a + bi : a, b \in \mathbb{Q}\}$ if $\mathbb{F} = \mathbb{C}$. As is customary, we will restrict our problem inputs to $\mathbb{Q}_\mathbb{F}$ to ensure that they may be specified in finitely many bits. We refer the reader to \cite[Definitions~2.1 and 4.1]{FL1} for the formal definitions of the weak membership problem and the approximation problem.

Computing the clique number of a graph is an NP-hard problem \cite{Ka} and so the identity  \eqref{norminfeq}  implies that the computing the spectral norm of $A_G$ is NP-hard over both $\mathbb{R}$ and $\mathbb{C}$.  Since the clique numberh is an integer, it is also NP-hard to approximate the spectral norm to arbitrary accuracy.
\begin{theorem}\label{NPhardapprx4tensor}
Let $\delta > 0$ be rational and $A \in \mathbb{Q}_\mathbb{F}^{n \times n \times n \times n}$ be bi-Hermitian, bi-positive semidefinite, and nonnegative-valued. Computing an approximation $\omega(A) \in \mathbb{Q}$ such that
\[
\lVert A \rVert_{\sigma, \mathbb{F}} - \delta < \omega(A) < \lVert A \rVert_{\sigma, \mathbb{F}} + \delta
\]
is an NP-hard problem for both $\mathbb{F} = \mathbb{R}$ and $\mathbb{C}$.
\end{theorem}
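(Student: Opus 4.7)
The plan is a polynomial-time reduction from the NP-hard clique number problem \cite{Ka}. Given a simple undirected graph $G$ on $n$ vertices, I would construct the $4$-tensor $A_G \in \mathbb{Q}_\mathbb{F}^{n \times n \times n \times n}$ defined in \eqref{defCG}; its entries lie in $\{0, 1/2\}$, and the preceding discussion shows that $A_G$ is bi-Hermitian, bi-positive semidefinite, and nonnegative valued over both $\mathbb{F} = \mathbb{R}$ and $\mathbb{F} = \mathbb{C}$. By Theorem~\ref{genban},
\[
\lVert A_G \rVert_{\sigma,\mathbb{F}} = \frac{\kappa(G) - 1}{\kappa(G)},
\]
so the exact value of the spectral norm determines $\kappa(G)$.

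The key observation is that $\kappa(G) \in \{1, 2, \dots, n\}$ forces $\lVert A_G\rVert_{\sigma,\mathbb{F}}$ into the discrete set $\{(k-1)/k : 1 \le k \le n\}$, whose consecutive elements differ by $1/(k(k+1)) \ge 1/(n(n+1))$. Thus any $\omega \in \mathbb{Q}$ within distance $1/(2n(n+1))$ of $\lVert A_G\rVert_{\sigma,\mathbb{F}}$ identifies $\kappa(G)$ unambiguously by rounding to the nearest element of this set. For a prescribed rational $\delta > 0$, I would choose a positive integer $N$ with $N > 2\delta n(n+1)$, encoded in $O(\log n + \log(1/\delta))$ bits, and feed the scaled tensor $N A_G$ to the hypothetical $\delta$-approximation algorithm; scaling by $N > 0$ preserves bi-Hermiticity, bi-positive semidefiniteness, nonnegativity, and rationality, and multiplies the spectral norm by $N$, so the discrete values of $\lVert N A_G\rVert_{\sigma,\mathbb{F}}$ are now separated by at least $2\delta$, and a $\delta$-approximation $\omega$ still recovers $\kappa(G)$ by rounding.

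The conceptually delicate step, namely establishing $\lVert A_G\rVert_{\sigma,\mathbb{C}} = \lVert A_G\rVert_{\sigma,\mathbb{R}} = (\kappa(G) - 1)/\kappa(G)$ via the sesquilinear Cauchy--Schwarz argument in Lemma~\ref{partsymlem} that reduces the complex optimum to a real one, has already been handled by Theorem~\ref{genban}. The main obstacle that remains is essentially bookkeeping: verifying that the construction of $A_G$, the rescaling by $N$, and the final rounding all run in time polynomial in the bit length of $(G, \delta)$. Once that is in place, the NP-hardness of the clique number transfers verbatim to give NP-hardness of the stated approximation problem over both $\mathbb{R}$ and $\mathbb{C}$.
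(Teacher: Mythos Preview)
Your proposal is correct and follows essentially the same approach as the paper: reduce from the clique number via Theorem~\ref{genban}, then use the discreteness of $(\kappa(G)-1)/\kappa(G)$ to argue that a sufficiently good approximation recovers $\kappa(G)$. The paper states this last step only tersely (``since the clique number is an integer, it is also NP-hard to approximate the spectral norm to arbitrary accuracy''), whereas you make the separation bound $1/(k(k+1)) \ge 1/(n(n+1))$ and the rescaling by $N$ explicit, but the underlying argument is the same.
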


For any $\delta>0$ and any convex set with nonempty interior $K \subseteq \mathbb{F}^n$,  we define
\[
S(K,\delta) \coloneqq \bigcup_{x\in K} B(x,\delta)\quad \text{and}\quad S(K,-\delta) \coloneqq \{x\in K: B(x,\delta)\subseteq K\},
\]
where $B(x,\delta)$ is the $\delta$-ball centered at $x$ with respect to the Hilbert--Schmidt norm in $\mathbb{F}^n$.
Using \cite[Theorem~4.2]{FL1}, we deduce the NP-hardness of the weak membership problem from Theorem~\ref{NPhardapprx4tensor}.
\begin{corollary}\label{cor:hardCR}
Let $K$ be the spectral norm unit ball in $\mathbb{F}^{n\times n \times n \times n}$ and  $0<\delta \in \mathbb{Q}$.
Given $A \in \mathbb{Q}_\mathbb{F}^{n \times n \times n \times n}$ that is bi-Hermitian, bi-positive semidefinite, and nonnegative-valued, deciding whether $A\in S(K,\delta)$ or $x\notin S(K,-\delta)$ is an NP-hard problem  for both $\mathbb{F} = \mathbb{R}$ and $\mathbb{C}$.
\end{corollary}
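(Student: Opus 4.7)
The plan is to reduce the $\delta$-approximation problem established in Theorem~\ref{NPhardapprx4tensor} to the weak membership problem for $K$, restricted throughout to the class of bi-Hermitian, bi-positive semidefinite, nonnegative-valued $4$-tensors with entries in $\mathbb{Q}_\mathbb{F}$. This is exactly the reduction formalized in \cite[Theorem~4.2]{FL1}, so the real task is to verify that the reduction can be carried out without ever leaving the restricted class.

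The key structural observation is that this restricted class is a convex cone, and in particular is closed under multiplication by any positive rational scalar. Hence for admissible $A$ and any $t\in\mathbb{Q}_{>0}$, the rescaled tensor $A/t$ remains admissible and satisfies $\lVert A/t\rVert_{\sigma,\mathbb{F}}=\lVert A\rVert_{\sigma,\mathbb{F}}/t$. Whether $A/t\in S(K,\delta')$ or $A/t\notin S(K,-\delta')$ therefore translates directly into an upper or lower bound on $\lVert A\rVert_{\sigma,\mathbb{F}}$.

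I would implement the reduction by binary search. A trivial, polynomial-time computable upper bound such as $\lVert A\rVert_{\sigma,\mathbb{F}}\le \lVert A\rVert$ confines $\lVert A\rVert_{\sigma,\mathbb{F}}$ to a rational interval of polynomial bit-length. A hypothetical polynomial-time weak membership oracle, queried on successive rescalings $A/t$ at the bisection points, yields after $O(\log(\lVert A\rVert/\delta))$ iterations a rational approximation $\omega(A)$ with $\lvert \omega(A)-\lVert A\rVert_{\sigma,\mathbb{F}}\rvert<\delta$, provided the oracle's tolerance $\delta'$ is chosen polynomially smaller than $\delta/\lVert A\rVert$. Since every queried tensor is of the form $A/t$ with $t\in\mathbb{Q}_{>0}$, all oracle calls are made on admissible inputs. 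Combining with Theorem~\ref{NPhardapprx4tensor} forces weak membership to be NP-hard.

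The only step that requires care beyond quoting \cite[Theorem~4.2]{FL1} is the bookkeeping of precision parameters: one must calibrate the inner tolerance $\delta'$ used in $S(K,\pm\delta')$ to the outer tolerance $\delta$ in the approximation problem so that a fixed polynomial number of bisections suffices, and verify that each rescaling $A/t$ has encoding length polynomial in the inputs. Both are routine, and this is the main obstacle only in a technical sense; the conceptual content is entirely in the scale-invariance of the restricted tensor class together with the black-box reduction of \cite{FL1}.
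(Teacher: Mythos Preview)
Your proposal is correct and follows the same approach as the paper, which simply invokes \cite[Theorem~4.2]{FL1} to pass from Theorem~\ref{NPhardapprx4tensor} to the weak membership problem. You supply the one detail the paper leaves implicit, namely that the restricted class of bi-Hermitian, bi-positive semidefinite, nonnegative-valued tensors is a convex cone closed under positive rational scaling, so that every query $A/t$ in the binary-search reduction remains admissible.
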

It then follows from \cite[Theorem~3.1]{FL1} and the duality of spectral and nuclear norm that Corollary~\ref{cor:hardCR} also holds true for nuclear norm of $4$-tensors.
\begin{corollary}\label{cor:hardCR1}
Let $K$ be the nuclear norm unit ball in $\mathbb{F}^{n\times n \times n \times n}$ and  $0<\delta \in \mathbb{Q}$.
Given $A \in \mathbb{Q}_\mathbb{F}^{n \times n \times n \times n}$ that is bi-Hermitian, bi-positive semidefinite, and nonnegative-valued, deciding whether $A\in S(K,\delta)$ or $x\notin S(K,-\delta)$ is an NP-hard problem  for both $\mathbb{F} = \mathbb{R}$ and $\mathbb{C}$.
\end{corollary}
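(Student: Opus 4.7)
The plan is a direct reduction from Corollary \ref{cor:hardCR} via duality. First, recall from Section~\ref{sec:norms} that on $\mathbb{F}^{n\times n\times n\times n}$ the spectral norm $\|\cdot\|_{\sigma,\mathbb{F}}$ and the nuclear norm $\|\cdot\|_{*,\mathbb{F}}$ are mutually dual with respect to the Hilbert--Schmidt inner product, so the closed unit balls $K_\sigma$ and $K_*$ are polar duals of one another. Theorem~3.1 of \cite{FL1} asserts that for a centrally symmetric convex body, the weak membership problem for the body and the weak membership problem for its polar dual are polynomial-time interreducible. Applied to the pair $(K_\sigma, K_*)$, this yields a polynomial-time reduction from weak membership in $K_\sigma$ to weak membership in $K_*$, and Corollary~\ref{cor:hardCR} then transfers the NP-hardness from $K_\sigma$ to $K_*$.

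The one subtlety is that Corollary~\ref{cor:hardCR} is a \emph{structured} hardness result: the input $A$ is promised to lie in the cone $\mathcal{K}$ of bi-Hermitian, bi-positive semidefinite, nonnegative-valued tensors, and we must verify that this structural promise is preserved by the reduction of \cite[Theorem~3.1]{FL1}. The key observation is that $\mathcal{K}$ is self-dual with respect to $\langle\cdot,\cdot\rangle$: under the identification $A \mapsto M(A) \in \mathbb{F}^{n^2 \times n^2}$, bi-Hermiticity cuts out the real Hermitian subspace and bi-positive semidefiniteness cuts out the Hermitian PSD cone (self-dual inside the Hermitian matrices), while entrywise nonnegativity is self-dual under the Frobenius inner product. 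Consequently the polar of $K_\sigma \cap \mathcal{K}$, taken inside $\mathcal{K}$, equals $K_* \cap \mathcal{K}$, and the duality reduction restricts cleanly to inputs from $\mathcal{K}$.

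The main point to justify carefully is this last claim: that the oracle reduction of \cite[Theorem~3.1]{FL1}, when fed bi-Hermitian, bi-positive semidefinite, nonnegative-valued tensors, only needs to query weak membership in $K_\sigma$ at tensors in the same class. I expect this to follow either directly from the statement of \cite[Theorem~3.1]{FL1} specialized to self-dual cones, or by projecting the intermediate test tensors onto $\mathcal{K}$ (a polynomial-time operation: taking the Hermitian part of $M(\cdot)$, replacing it by its positive part, and zeroing out negative entries). Either way, the NP-hardness of weak membership in $K_\sigma \cap \mathcal{K}$ from Corollary~\ref{cor:hardCR} transfers to NP-hardness of weak membership in $K_* \cap \mathcal{K}$, which is precisely the content of Corollary~\ref{cor:hardCR1}.
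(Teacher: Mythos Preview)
Your core reduction---invoke the duality of $\|\cdot\|_{\sigma,\mathbb{F}}$ and $\|\cdot\|_{*,\mathbb{F}}$ together with \cite[Theorem~3.1]{FL1}---is precisely the paper's argument; the paper's ``proof'' is that one sentence and nothing more. You then go further and ask whether the duality reduction preserves the restriction to bi-Hermitian, bi-positive semidefinite, nonnegative-valued inputs, a point the paper simply does not address.

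Your proposed resolution of that concern, however, contains a genuine error: the cone $\mathcal{K}$ is \emph{not} self-dual. Under $A\mapsto M(A)$ it becomes the doubly nonnegative cone $\mathrm{PSD}\cap\{\text{entrywise}\ge 0\}$ inside the Hermitian matrices. Each of the two factors is indeed self-dual, but the dual of an intersection of closed convex cones is the closed \emph{sum} of the duals, not the intersection; thus $\mathcal{K}^*$ is (the closure of) $\mathrm{PSD}+\{\text{entrywise}\ge 0\}$, which strictly contains $\mathcal{K}$ already in the $2\times 2$ case (for instance $\left[\begin{smallmatrix}1&2\\2&1\end{smallmatrix}\right]$ lies in the sum but not in the intersection). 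Consequently the assertion that ``the polar of $K_\sigma\cap\mathcal{K}$ taken inside $\mathcal{K}$ equals $K_*\cap\mathcal{K}$'' does not follow from self-duality. Your fallback---project the intermediate oracle queries onto $\mathcal{K}$---is also not a fix: the ellipsoid-style machinery behind \cite[Theorem~3.1]{FL1} depends on the precise test points issued, and replacing a query by its projection onto $\mathcal{K}$ gives no control over how the oracle's answer on the projected point relates to the answer needed on the original point. In short, at the level of the citation your argument and the paper's coincide; the additional justification you supply for the structured version introduces an error rather than closing the gap.
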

Using \cite[Theorem~4.2]{FL1} a second time, we may deduce the nuclear norm analogue of Theorem~\ref{NPhardapprx4tensor}.
\begin{corollary}\label{cor:hardCR2}
Let $\delta > 0$ be rational and $A \in \mathbb{Q}_\mathbb{F}^{n \times n \times n \times n}$ be bi-Hermitian, bi-positive semidefinite, and nonnegative-valued. Computing an approximation $\omega(A) \in \mathbb{Q}$ such that
\[
\lVert A \rVert_{\sigma, \mathbb{F}} - \delta < \omega(A) < \lVert A \rVert_{\sigma, \mathbb{F}} + \delta
\]
is an NP-hard problem for both $\mathbb{F} = \mathbb{R}$ and $\mathbb{C}$.
\end{corollary}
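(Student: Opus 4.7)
The statement asserts NP-hardness of approximating the spectral norm $\|A\|_{\sigma,\mathbb{F}}$ to within additive rational tolerance $\delta$, for $A$ in the structured class $\mathcal{S} \subseteq \mathbb{Q}_{\mathbb{F}}^{n\times n\times n\times n}$ of bi-Hermitian, bi-positive semidefinite, nonnegative-valued $4$-tensors. The plan is to derive it from Corollary~\ref{cor:hardCR}, which already supplies the NP-hardness of weak membership in the unit ball of $\|\cdot\|_{\sigma,\mathbb{F}}$ on $\mathcal{S}$. This produces a clean loop Theorem~\ref{NPhardapprx4tensor} $\Rightarrow$ Corollary~\ref{cor:hardCR} $\Rightarrow$ Corollary~\ref{cor:hardCR1} $\Rightarrow$ the present corollary, in which each arrow is a polynomial-time reduction and one ends back at an approximation statement.

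The central tool is \cite[Theorem~4.2]{FL1}, which polynomial-time equates, for any norm $\nu$ on a finite-dimensional rational vector space, the weak-membership problem for the unit ball $K_\nu = \{X : \nu(X) \le 1\}$ with the problem of approximating $\nu$ to additive rational accuracy. Applied with $\nu = \|\cdot\|_{\sigma,\mathbb{F}}$ in the direction ``weak membership $\Rightarrow$ approximation'', Corollary~\ref{cor:hardCR} immediately gives that approximating $\|\cdot\|_{\sigma,\mathbb{F}}$ on $\mathcal{S}$ is NP-hard. The rational precision $\delta$ is tracked through the reduction by the polynomial-time bit-complexity guarantees built into the statement of \cite[Theorem~4.2]{FL1}, and the argument applies uniformly for $\mathbb{F} = \mathbb{R}$ and $\mathbb{F} = \mathbb{C}$, since Corollary~\ref{cor:hardCR} was established over both fields.

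The one substantive point to verify is that the reduction does not leave the structured subclass $\mathcal{S}$. This is automatic: the reduction in \cite[Theorem~4.2]{FL1} produces its oracle queries as scalar multiples of the input tensor $A$, and $\mathcal{S}$ is closed under positive rational scaling — bi-Hermitianness, bi-positive semidefiniteness, and entrywise nonnegativity are all preserved by multiplication by a nonnegative rational. Consequently the structural hypotheses of the corollary are never broken during the reduction, and no fresh geometric argument is needed. The only remaining obstacle, if one wishes to be fully explicit, is the bookkeeping relating the $\delta$ in the weak-membership formulation to the $\delta$ in the approximation formulation. That is a direct polynomial-time bit-length estimate, ultimately supported by the crude bound $\|A_G\|_{\sigma,\mathbb{F}} \le 1$ coming from Theorem~\ref{genban}, and it is precisely what \cite[Theorem~4.2]{FL1} packages for us.
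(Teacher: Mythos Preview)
There is a subtlety you seem to have half-noticed but then lost. The displayed inequality in Corollary~\ref{cor:hardCR2} is verbatim identical to Theorem~\ref{NPhardapprx4tensor}; the surrounding text in the paper announces it as ``the nuclear norm analogue of Theorem~\ref{NPhardapprx4tensor}'', so the subscript $\sigma$ is a typo for $*$. What the paper actually proves here is the hardness of approximating $\lVert A\rVert_{*,\mathbb{F}}$, and it does so by applying \cite[Theorem~4.2]{FL1} to Corollary~\ref{cor:hardCR1} (weak membership in the \emph{nuclear} norm ball).

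Your writeup describes the correct chain Theorem~\ref{NPhardapprx4tensor} $\Rightarrow$ Corollary~\ref{cor:hardCR} $\Rightarrow$ Corollary~\ref{cor:hardCR1} $\Rightarrow$ present corollary, but then the body of your argument does not follow it: you take $\nu=\lVert\cdot\rVert_{\sigma,\mathbb{F}}$ and feed Corollary~\ref{cor:hardCR} into \cite[Theorem~4.2]{FL1}, which simply reproduces Theorem~\ref{NPhardapprx4tensor} and never touches Corollary~\ref{cor:hardCR1} at all. For the intended statement you must instead take $\nu=\lVert\cdot\rVert_{*,\mathbb{F}}$ and invoke Corollary~\ref{cor:hardCR1}. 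Once you make that switch, everything you say about preservation of the structured class $\mathcal{S}$ under positive rational scaling and the bit-length bookkeeping goes through unchanged, and your proof becomes exactly the paper's one-line argument.
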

As we did for Theorem~\ref{thm:np1}, we may use  Corollaries~\ref{cor:hardCR} and \ref{cor:hardCR1} along with Proposition~\ref{prop:mult} to deduce a complex analogue of Theorem~\ref{thm:np1}.
\begin{theorem}\label{thm:np2}
The spectral and nuclear norms of $d$-tensors over $\mathbb{C}$ are NP-hard for any $d \ge 4$.
\end{theorem}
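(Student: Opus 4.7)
The plan is to mimic exactly the argument that was used to deduce Theorem~\ref{thm:np1} from the NP-hardness of the spectral (and hence nuclear) norm of real $3$-tensors, except that now we start from the corresponding NP-hardness for complex $4$-tensors, which is Corollaries~\ref{cor:hardCR}, \ref{cor:hardCR1}, and \ref{cor:hardCR2}. The base case $d=4$ of Theorem~\ref{thm:np2} is therefore already in hand, so only the extension from $d=4$ to arbitrary $d \ge 5$ remains.

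For that extension, given any $A \in \mathbb{Q}_{\mathbb{C}}^{n\times n\times n\times n}$, let $e_1 \in \mathbb{Q}_{\mathbb{C}}^n$ be the first standard basis vector and define the $d$-tensor
\[
\widetilde{A} \;\coloneqq\; A \otimes \underbrace{e_1 \otimes \cdots \otimes e_1}_{d-4\ \text{factors}} \;\in\; \mathbb{Q}_{\mathbb{C}}^{\,n\times\cdots\times n}.
\]
Since $\|e_1\|=1$, Proposition~\ref{prop:mult} yields
\[
\|\widetilde{A}\|_{\sigma,\mathbb{C}} \;=\; \|A\|_{\sigma,\mathbb{C}}
\qquad\text{and}\qquad
\|\widetilde{A}\|_{*,\mathbb{C}} \;=\; \|A\|_{*,\mathbb{C}}.
\]
The map $A \mapsto \widetilde{A}$ is computable in time polynomial (in fact linear) in the bit length of $A$, so any polynomial-time algorithm for computing, or $\varepsilon$-approximating, the spectral or nuclear norm of a complex $d$-tensor would immediately give one for complex $4$-tensors, contradicting the NP-hardness established in Corollaries~\ref{cor:hardCR2} and its nuclear counterpart.

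There is no genuine obstacle here: the entire content of the step is the multiplicativity identity in Proposition~\ref{prop:mult}, which lets padding with unit rank-one factors preserve both norms exactly. It is worth noting, for the sake of analogy with Theorem~\ref{thm:np1}, that the same reduction works uniformly for the weak membership problem in the spectral or nuclear norm unit balls, since the norms are preserved on the nose and rational inputs remain rational, so the $\delta$-tolerance carries over verbatim.
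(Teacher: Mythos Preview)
Your proposal is correct and follows essentially the same approach as the paper: the base case $d=4$ comes from Corollaries~\ref{cor:hardCR} and \ref{cor:hardCR1} (and \ref{cor:hardCR2}), and the extension to $d\ge 5$ is obtained by padding with unit rank-one factors via Proposition~\ref{prop:mult}, exactly as in the derivation of Theorem~\ref{thm:np1}.
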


\section{Polynomial-time approximation bounds}\label{sec:approx}

Assuming that $\mathit{P} \ne \mathit{NP}$, then by Corollaries~\ref{cor:hardCR} and \ref{cor:hardCR1}, one cannot approximate the spectral and nuclear norms of $d$-tensors to arbitrary accuracy in polynomial time. In this section, we will discuss some approximation bounds for spectral and nuclear norms that are computable in polynomial time. 

The simplest polynomial-time computable bounds for the spectral and nuclear norms are those that come from the equivalence of norms in finite-dimensional spaces. The following lemma uses the Hilbert--Schmidt norm but any other H\"{o}lder $p$-norms \cite{Lim13},
\[
\| A \|_{H,p} \coloneqq  \Bigl( \sum_{i_1, \dots, i_d=1}^{n_1, \dots, n_d} \lvert a_{i_1 \cdots i_d}\rvert^p \Bigr)^{1/p},
\] 
where $p \in [1,\infty]$, which are all polynomial-time computable, may also serve the role.
\begin{lemma}\label{tenspecnrmest}  Let $A \in \mathbb{F}^{n_1\times \dots \times n_d}$.
Then 
\[
\frac{1}{\sqrt{ n_1 \cdots n_d}}\|A\|\le \|A\|_{\sigma}\le \|A\| \qquad \text{and}\qquad  \|A\|\le \|A\|_{*}\le \sqrt{ n_1 \cdots n_d}\|A\| .
\]
\end{lemma}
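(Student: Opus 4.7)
The four inequalities split naturally into two trivial bounds (one for $\|A\|_\sigma$ from above and one for $\|A\|_*$ from below) and two nontrivial bounds (one for $\|A\|_\sigma$ from below and one for $\|A\|_*$ from above). The plan is to establish the two easy bounds by Cauchy--Schwarz and the triangle inequality, then derive one of the nontrivial bounds directly from the standard basis expansion, and finally extract the remaining one via the known spectral/nuclear duality $\lvert\langle A,B\rangle\rvert\le\|A\|_\sigma\|B\|_*$ already recorded in Section~\ref{sec:norms}.

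First I would observe that for any unit vectors $u_k\in\mathbb{F}^{n_k}$ we have $\|u_1\otimes\dots\otimes u_d\|=1$, so Cauchy--Schwarz applied to the Hilbert--Schmidt inner product gives $|\langle A,u_1\otimes\dots\otimes u_d\rangle|\le\|A\|$; taking the supremum over unit rank-one tensors in \eqref{def2tenspnrm} yields $\|A\|_\sigma\le\|A\|$. For the dual side, let $A=\sum_{i=1}^r\lambda_i u_{1,i}\otimes\dots\otimes u_{d,i}$ be any decomposition with unit factors; the triangle inequality for the Hilbert--Schmidt norm combined with $\|u_{1,i}\otimes\dots\otimes u_{d,i}\|=1$ gives $\|A\|\le\sum_{i=1}^r|\lambda_i|$, and taking the infimum over all such decompositions delivers $\|A\|\le\|A\|_*$.

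Next, for the upper bound on $\|A\|_*$, I would use the obvious decomposition of $A=(a_{i_1\cdots i_d})$ in the standard tensor basis,
\[
A=\sum_{i_1=1}^{n_1}\cdots\sum_{i_d=1}^{n_d}a_{i_1\cdots i_d}\,e_{i_1}\otimes\dots\otimes e_{i_d},
\]
which exhibits $A$ as an $n_1\cdots n_d$-term combination of unit rank-one tensors. By definition of the nuclear norm this gives $\|A\|_*\le\sum_{i_1,\dots,i_d}|a_{i_1\cdots i_d}|$, and Cauchy--Schwarz on $n_1\cdots n_d$ summands then yields $\sum|a_{i_1\cdots i_d}|\le\sqrt{n_1\cdots n_d}\,\|A\|$.

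Finally, for the lower bound on $\|A\|_\sigma$, I would use the duality $|\langle A,B\rangle|\le\|A\|_\sigma\|B\|_*$ established in Section~\ref{sec:norms} with $B=A$, giving $\|A\|^2=\langle A,A\rangle\le\|A\|_\sigma\|A\|_*$. Combining this with the just-proved bound $\|A\|_*\le\sqrt{n_1\cdots n_d}\,\|A\|$ (and assuming $A\ne 0$, with the $A=0$ case being trivial) yields $\|A\|_\sigma\ge\|A\|/\sqrt{n_1\cdots n_d}$. No step here presents a genuine obstacle; the only point requiring a touch of care is bookkeeping between the two nontrivial bounds, and the strategy of proving one of them by hand and deducing the other by duality avoids duplicating the Cauchy--Schwarz estimate.
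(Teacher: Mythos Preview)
Your proof is correct. The paper's argument is a mirror image of yours: it first proves the spectral lower bound directly by observing that each entry $a_{i_1\cdots i_d}=\langle A,e_{i_1}\otimes\dots\otimes e_{i_d}\rangle$ is bounded by $\|A\|_\sigma$, so $\|A\|_{H,\infty}\le\|A\|_\sigma$, and then uses $\|A\|\le\sqrt{n_1\cdots n_d}\,\|A\|_{H,\infty}$; it then derives both nuclear norm inequalities purely by duality. You instead prove the nuclear upper bound directly via the standard basis expansion (passing through the $l^1$-sum of entries rather than the $l^\infty$-max), and recover the spectral lower bound from the duality inequality $\|A\|^2\le\|A\|_\sigma\|A\|_*$. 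Both routes use the same ingredients---the standard basis and spectral/nuclear duality---and are equally short; the only difference is which of the two nontrivial bounds is established first.
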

\begin{proof} 
We start with the bounds for the spectral norm. Clearly $\|A\|_{\sigma}\le \|A\|$.  Let $A=(a_{i_1 \cdots i_d})$ and set $\|A \|_{H,\infty}=\max\{|a_{i_1 \cdots i_d}| : i_k =1,\dots, n_k,\;k=1,\dots,d\}$.
Clearly, $\|A\|\le \sqrt{ n_1\cdots n_d} \; \|A \|_{H,\infty}$.  Note that $a_{i_1 \cdots i_d} = \langle A, e_{i_1} \otimes \dots \otimes e_{i_d}\rangle$ where $e_{i_k}$ are standard basis vectors 
in $\mathbb{F}^{n_k}$. In particular $\|A \|_{H,\infty}  = \lvert \langle A, u_1 \otimes \dots \otimes u_d\rangle \rvert$ for some unit vectors $u_1,\dots,u_d$ and thus $\|A \|_{H,\infty} \le \|A\|_{\sigma}$ by \eqref{def2tenspnrm}.   The corresponding inequalities for the nuclear norm follows from it being a dual norm.
\end{proof}

One downside of universal bounds like those in Lemma~\ref{tenspecnrmest} is that they necessarily depend on the dimension of the ambient space. We will now construct tighter polynomial-time computable bounds for the spectral and nuclear norms of $3$-tensors that depend only on the `intrinsic dimension' of the specific tensor we are approximating. The \emph{multilinear rank} \cite{DSL} of  a $3$-tensor $A\in\mathbb{F}^{m\times n\times p}$ is the $3$-tuple
$\mu\operatorname{rank}(A)\coloneqq (r_{1},r_{2},r_{3})$ where
\begin{align*}
r_{1} &=\dim\operatorname{span}_{\mathbb{F}}\{A_{1\bullet\bullet },\dots,A_{m\bullet\bullet}\},\\
r_{2} &=\dim\operatorname{span}_{\mathbb{F}}\{A_{\bullet1\bullet},\dots,A_{\bullet n\bullet}\},\\
r_{3} &=\dim\operatorname{span}_{\mathbb{F}}\{A_{\bullet\bullet 1},\dots,A_{\bullet\bullet p}\}.
\end{align*}
Here $A_{i\bullet\bullet}=(a_{ijk})_{j,k=1}^{n,p}\in\mathbb{F}^{n\times p}$,
$A_{\bullet j\bullet}=(a_{ijk})_{i,k=1}^{m,p}\in\mathbb{F}^{m\times p}$,
$A_{\bullet\bullet k}=(a_{ijk})_{i,j=1}^{m,p}\in\mathbb{F}^{m\times n}$ are
`matrix slices' of the $3$-tensor --- the analogues of the row and column
vectors of a matrix. This was due originally to Hitchcock \cite{Hi2}, a special case ($2$-plex rank) of his \textit{multiplex rank}.

We define the \emph{flattening maps} along the $1$st, $2$nd, and $3$rd index by
\[
\flat_{1}:\mathbb{F}^{m\times n\times p}\rightarrow\mathbb{F}^{m \times
np},\quad\flat_{2}:\mathbb{F}^{m\times n\times p}\rightarrow\mathbb{F}%
^{n\times mp},\quad\flat_{3}:\mathbb{F}^{m\times n\times p}\rightarrow
\mathbb{F}^{p\times mn}%
\]
respectively. Intuitively, these take a $3$-tensor $A\in\mathbb{F}^{m\times n\times p}$ and `flatten' it in three different ways to yield
three matrices. Instead of giving precise but cumbersome formulae, it suffices to illustrate these simple maps with an example: Let%
\[
A=\left[
\begin{array}
[c]{r@{\quad}r@{\quad}r}
a_{111} & a_{121} & a_{131}\\
a_{211} & a_{221} & a_{231}\\
a_{311} & a_{321} & a_{331}\\
a_{411} & a_{421} & a_{431}%
\end{array}
\right\vert \!\left.
\begin{array}
[c]{r@{\quad}r@{\quad}r}
a_{112} & a_{122} & a_{132}\\
a_{212} & a_{222} & a_{232}\\
a_{312} & a_{322} & a_{332}\\
a_{412} & a_{422} & a_{432}%
\end{array}
\right]  \in\mathbb{F}^{4\times3\times2},
\]
then%
\begin{gather*}
\flat_{1}(A)=\left[
\begin{array}
[c]{r@{\quad}r@{\quad}r@{\quad}r@{\quad}r@{\quad}r}
a_{111} & a_{112} & a_{121} & a_{122} & a_{131} & a_{132}\\
a_{211} & a_{212} & a_{221} & a_{222} & a_{231} & a_{232}\\
a_{311} & a_{312} & a_{321} & a_{322} & a_{331} & a_{332}\\
a_{411} & a_{412} & a_{421} & a_{422} & a_{431} & a_{432}
\end{array}\right] \in\mathbb{F}^{4\times6},\\
\flat_{2}(A)= \left[\begin{array}{c@{\quad}c@{\quad}c@{\quad}c@{\quad}c@{\quad}c@{\quad}c@{\quad}c}
a_{111} & a_{112} & a_{211} & a_{212} & a_{311} & a_{312} & a_{411} & a_{412}\\
a_{121} & a_{122} & a_{221} & a_{222} & a_{321} & a_{322} & a_{421} & a_{422}\\
a_{131} & a_{132} & a_{231} & a_{232} & a_{331} & a_{332} & a_{431} & a_{432}
\end{array}\right]
  \in\mathbb{F}^{3\times8},\\
\flat_{3}(A)= \left[\begin{array}{c@{\quad}c@{\quad}c@{\quad}c@{\quad}c@{\quad}c@{\quad}c@{\quad}c@{\quad}c@{\quad}c@{\quad}c@{\quad}c}
a_{111} & a_{121} & a_{131} &a_{211} & a_{221} & a_{231}  & a_{311} & a_{321} & a_{331} & a_{411} & a_{421} & a_{431}  \\
a_{112} & a_{122} & a_{132} &a_{212} & a_{222} & a_{232}  & a_{312} & a_{322} & a_{332} & a_{412} & a_{422} & a_{432}
\end{array}\right]
  \in\mathbb{F}^{2\times12}.
\end{gather*}
It follows immediately from  definition that the multilinear rank $\mu\operatorname{rank}(A) = (r_{1},r_{2},r_{3})$ is given by%
\[
r_{1}=\operatorname{rank}(\flat_{1}(A)),\quad r_{2}=\operatorname{rank}%
(\flat_{2}(A)),\quad r_{3}=\operatorname{rank}(\flat_{3}(A)),
\]
where rank here is the usual matrix rank of the matrices $\flat_{1}(A),\flat_{2}(A),\flat_{3}(A)$. Although we will have no use for it, a recently popular definition of tensor nuclear norm is as the arithmetic mean of the (matrix) nuclear norm of the flattenings:
\[
\lVert A \rVert_\flat = \frac{1}{3}(\lVert \flat_{1}(A) \rVert_* + \lVert \flat_{2}(A) \rVert_* + \lVert \flat_{3}(A) \rVert_*).
\]

We first provide alternative characterizations for the spectral and nuclear norms of a $3$-tensor.
\begin{lemma}
Let $A \in \mathbb{F}^{m \times n  \times p}$. Then
\begin{align}
\|A \|_\sigma &= \max \biggl\{ \frac{|\langle A, x \otimes M \rangle|}{\|x\| \|M \|_\sigma} : 0 \ne x \in \mathbb{F}^m,\; 0 \ne M \in \mathbb{F}^{n \times p} \biggr\}, \label{anidlsnrm}\\
\|A\|_* &= \min\Bigl\{\sum_{i=1}^r \|x_i\|\|M_i\|_* : A = \sum_{i=1}^r x_i\otimes M_i, \; x_i \in \mathbb{F}^m, \; M_i\in \mathbb{F}^{n\times p}, \; r\in \mathbb{N} \Bigr\}. \label{anidl1nrm}
\end{align}
Furthermore there is a decomposition of $A$ that attains the minimum in \eqref{anidl1nrm} where $x_1\otimes M_1,\dots, x_r\otimes M_r$ are linearly independent.
\end{lemma}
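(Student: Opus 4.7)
The plan is to view $A \in \mathbb{F}^{m \times n \times p}$ as a $2$-tensor in $\mathbb{F}^m \otimes \mathbb{F}^{n \times p}$ with the second factor equipped with the matrix nuclear norm $\|\cdot\|_*$ (whose dual is $\|\cdot\|_\sigma$), so that the two identities become the bipartite analogues of the projective and injective norm expressions \eqref{eq:proj}, \eqref{eq:inj} from Section~\ref{sec:norms}. The existence of nuclear decompositions from Proposition~\ref{prop:tnn} together with the duality between spectral and nuclear norm will do most of the work.

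For \eqref{anidl1nrm}, write $\mu(A)$ for the right-hand side. To show $\mu(A) \ge \|A\|_*$, given any decomposition $A = \sum_i x_i \otimes M_i$ I would refine each $M_i$ by a matrix nuclear decomposition $M_i = \sum_j y_{ij} z_{ij}^\mathsf{T}$ attaining $\|M_i\|_* = \sum_j \|y_{ij}\|\|z_{ij}\|$; the resulting rank-one decomposition of $A$ has cost $\sum_i \|x_i\|\|M_i\|_*$, whence $\|A\|_* \le \sum_i \|x_i\|\|M_i\|_*$. For the reverse, I would start from a nuclear decomposition $A = \sum_i x_i \otimes y_i \otimes z_i$ provided by Proposition~\ref{prop:tnn} and set $M_i \coloneqq y_i z_i^\mathsf{T}$; since $\|M_i\|_* = \|y_i\|\|z_i\|$, this decomposition realizes $\mu(A)$ and the infimum is in fact a minimum. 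Identity \eqref{anidlsnrm} then follows by duality: by \eqref{anidl1nrm} the nuclear-norm unit ball is $\operatorname{conv}\{x \otimes M : \|x\|\|M\|_* \le 1\}$, so its dual norm --- the spectral norm --- equals $\sup |\langle A, x \otimes M\rangle|/(\|x\|\|M\|_*)$, which is the intended expression. (The $\|M\|_\sigma$ in the denominator of \eqref{anidlsnrm} as written appears to be a typographical slip for the matrix nuclear norm $\|M\|_*$; e.g.\ $A = e_1 \otimes I_2 \in \mathbb{R}^{2\times 2\times 2}$ has $\|A\|_\sigma = 1$ while the literal right-hand side evaluates to $\ge 2$.)

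For the linear-independence refinement, I would mimic the reductions used in Propositions~\ref{prop:tnn} and \ref{prop:convcombextpts}. Take a minimizing decomposition with smallest possible $r$ and write it as $A = \sum_{i=1}^r \lambda_i E_i$ where $\lambda_i > 0$, $E_i = x_i \otimes M_i$ with $\|x_i\|\|M_i\|_* = 1$, and $\sum_i \lambda_i = \|A\|_*$. Suppose the $E_i$ are linearly dependent, say $\sum \beta_i E_i = 0$ with $\beta \ne 0$. First I would show $\sum \beta_i = 0$: otherwise, flipping the sign of $\beta$ if necessary and choosing $t > 0$ small enough that $\lambda_i - t\beta_i > 0$ for all $i$ yields a decomposition $A = \sum (\lambda_i - t\beta_i) E_i$ of strictly smaller cost, contradicting optimality. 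Given $\sum \beta_i = 0$, pick the largest $t > 0$ with $\lambda_i - t\beta_i \ge 0$ for every $i$: at least one coefficient vanishes, the total cost is unchanged, and we obtain an attaining decomposition with fewer than $r$ terms, contradicting minimality of $r$. The main delicacy is that the ``atoms'' $x \otimes M$ with $\|x\|\|M\|_* = 1$ are \emph{not} extreme points of the nuclear-norm unit ball (only rank-one tensors $u \otimes v \otimes w$ are), so Proposition~\ref{prop:convcombextpts} cannot be invoked directly; nevertheless the minimality-based sign reduction still goes through essentially verbatim.
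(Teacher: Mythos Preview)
Your proof is correct and largely parallels the paper's own argument. For \eqref{anidl1nrm} you and the paper do the same thing: refine each $M_i$ via its matrix nuclear decomposition to get $\mu(A)\ge\|A\|_*$, and take $M_i=y_i\otimes z_i$ from a tensor nuclear decomposition to get $\mu(A)\le\|A\|_*$ and attainment.

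Two points are worth highlighting. First, you are right that \eqref{anidlsnrm} as printed has $\|M\|_\sigma$ where $\|M\|_*$ is intended; your counterexample $A=e_1\otimes I_2$ is decisive. The paper's own argument for the upper bound in \eqref{anidlsnrm} (``replace $M$ by its best rank-one approximation $\sigma_1 u_1\otimes v_1$'') keeps the denominator $\|M\|_\sigma$ fixed but silently changes the numerator, so it does not actually establish the printed identity---indeed it cannot, by your example. Your route via duality from \eqref{anidl1nrm} cleanly yields the correct version with $\|M\|_*$, and is arguably the more transparent argument.

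Second, for the linear-independence claim the paper simply invokes Proposition~\ref{prop:convcombextpts}, but as you note this is not quite licit: that proposition applies to convex combinations of \emph{extreme points} of the unit ball, and the atoms $x\otimes M$ with $\|x\|\|M\|_*=1$ are not extreme when $\operatorname{rank}M\ge 2$. Your direct reduction (sign argument on $\sum\beta_i$, then shrink $r$) is exactly the proof mechanism inside Proposition~\ref{prop:convcombextpts}, and it goes through here because all you need is $\|E_i\|_*=\|x_i\|\|M_i\|_*=1$ (Proposition~\ref{prop:mult}) rather than extremality. So your treatment is a genuine repair of a small gap in the paper's citation.
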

\begin{proof}
If we set $M = y \otimes z$, then \eqref{anidlsnrm} becomes \eqref{def1tenspnrm}. So the maximum in  \eqref{anidlsnrm} is at least as large as the maximum in \eqref{def1tenspnrm}. On the other hand, the \textsc{svd} of $M$ shows that $\lVert M \rVert_\sigma = \lVert  \sigma_1 u_1 \otimes v_1 \rVert_\sigma$ where $\sigma_1$, $u_1$, $v_1$ are the largest singular values/vectors of $M$ and so we may always replace any $M$  in \eqref{anidlsnrm} that is not rank-one by its best rank-one approximation $\sigma_1 u_1 \otimes v_1$, giving us \eqref{def1tenspnrm}.

If we set $M_i=y_i\otimes z_i$, $i=1,\dots,r$, then \eqref{anidl1nrm} reduces to \eqref{def1tennrm}.  So the minimum in \eqref{def1tennrm} is not more than the minimum in \eqref{anidl1nrm}.  On the other hand, we may write each $M_i$ as a sum of rank-one matrices, in which case \eqref{anidl1nrm} reduces to \eqref{def1tennrm}. The existence of a decomposition that attains \eqref{anidl1nrm} follows from the same argument that we used in the proof of Proposition~\ref{prop:tnn}.  The linear independence of $x_1\otimes M_1,\dots x_r\otimes M_r$ follows from Proposition~\ref{prop:convcombextpts}.
\end{proof}

\begin{lemma}\label{lem:mrank}
Let $A \in \mathbb{F}^{m \times n  \times p}$ with  $\mu\operatorname{rank}(A) = (r_{1},r_{2},r_{3})$. If the decomposition
\[
A = \sum_{i=1}^r x_i\otimes M_i,
\]
attains \eqref{anidl1nrm}, then for all $i=1,\dots,r$,
\begin{equation}\label{eq:Mrr}
\operatorname{rank} M_i\le \min(r_2, r_3 ).
\end{equation}
\end{lemma}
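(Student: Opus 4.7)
The plan is to compress each $M_i$ onto the subspaces singled out by the multilinear rank of $A$. Let $V_2 \subseteq \mathbb{F}^n$ and $V_3 \subseteq \mathbb{F}^p$ denote the column spans of $\flat_2(A)$ and $\flat_3(A)$ respectively, so $\dim V_2 = r_2$, $\dim V_3 = r_3$, and $A \in \mathbb{F}^m \otimes V_2 \otimes V_3$. Let $P_2, P_3$ be the orthogonal projections onto $V_2, V_3$. Since $A$ already lies in $\mathbb{F}^m \otimes V_2 \otimes V_3$, the partial projection $I \otimes P_2 \otimes P_3$ fixes $A$, and applying it termwise to the given decomposition yields
\[
A = \sum_{i=1}^r x_i \otimes (P_2 M_i P_3^\mathsf{T}).
\]

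Next I would invoke the contractive property $\|P_2 M P_3^\mathsf{T}\|_* \le \|M\|_*$, which is immediate from nuclear/spectral duality combined with the fact that orthogonal projections are contractions in the matrix spectral norm. Summing produces a valid decomposition of $A$ with
\[
\|A\|_* \le \sum_{i=1}^r \|x_i\| \|P_2 M_i P_3^\mathsf{T}\|_* \le \sum_{i=1}^r \|x_i\| \|M_i\|_* = \|A\|_*,
\]
where the rightmost equality is the hypothesis that the given decomposition attains \eqref{anidl1nrm}. After discarding trivial $x_i = 0$ terms, equality throughout forces $\|P_2 M_i P_3^\mathsf{T}\|_* = \|M_i\|_*$ for every remaining $i$.

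Finally I would upgrade this norm equality to the structural conclusion $M_i = P_2 M_i P_3^\mathsf{T}$, i.e.\ $M_i \in V_2 \otimes V_3$. Given the matrix SVD $M_i = \sum_k \sigma_k u_k v_k^\mathsf{T}$ with $\sigma_k > 0$, the nuclear-norm triangle inequality combined with $\|uv^\mathsf{T}\|_* = \|u\|\|v\|$ and $\|P_2 u_k\|, \|P_3 v_k\| \le 1$ gives
\[
\|M_i\|_* = \|P_2 M_i P_3^\mathsf{T}\|_* \le \sum_k \sigma_k \|P_2 u_k\| \|P_3 v_k\| \le \sum_k \sigma_k = \|M_i\|_*.
\]
Forcing equality in each sub-inequality pins down $\|P_2 u_k\| = \|P_3 v_k\| = 1$, hence $u_k \in V_2$ and $v_k \in V_3$ for every $k$. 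Therefore $M_i \in V_2 \otimes V_3$, so its column space lies in $V_2$ and its row space in $V_3$, giving $\operatorname{rank}(M_i) \le \min(\dim V_2, \dim V_3) = \min(r_2, r_3)$. The main obstacle is precisely this last rigidity step: the partial projection preserves the total weighted sum essentially for free, and it is only through the SVD-based equality analysis that the norm identity is converted into the required structural containment of $M_i$ inside $V_2 \otimes V_3$.
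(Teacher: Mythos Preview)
Your argument is correct and rests on the same idea as the paper's: compress the decomposition onto the subspace $\mathbb{F}^m \otimes V_2 \otimes V_3$ singled out by the multilinear rank and exploit that this compression does not increase nuclear norm. The paper implements this by an orthogonal change of basis $A=(U,V,W)\cdot C$ reducing $C$ to an $r_1\times r_2\times r_3$ core, then truncating each $\widetilde{M}_\ell=VM_\ell W^{\mathsf{T}}$ to its top-left $r_2\times r_3$ block $M_\ell'$; your $P_2 M_i P_3^{\mathsf{T}}$ is exactly that truncation written invariantly.

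Where you go further is the rigidity step. The paper stops at the inequality $\sum_\ell \|x_\ell\|\|M_\ell\|_* \ge \sum_\ell \|x_\ell'\|\|M_\ell'\|_*$ together with the trivial observation $\operatorname{rank} M_\ell' \le \min(r_2,r_3)$. Strictly speaking this only produces \emph{some} minimizing decomposition with low-rank matrix factors, which is all that is needed for the application to \eqref{eq:bdnn}, but it does not by itself yield the universal conclusion about the \emph{given} $M_i$'s. Your SVD equality analysis fills this gap: from $\|P_2 M_i P_3^{\mathsf{T}}\|_* = \|M_i\|_*$ you extract $u_k\in V_2$, $v_k\in V_3$ for every singular pair, forcing $M_i\in V_2\otimes V_3$ and hence $\operatorname{rank} M_i\le\min(r_2,r_3)$. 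So your proof actually delivers the lemma as stated, not merely the existential variant. (Both arguments tacitly exclude degenerate terms with $x_i=0$, for which $M_i$ is unconstrained.)
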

\begin{proof}
Suppose $\mathbb{F} = \mathbb{R}$; the proof for $\mathbb{C}$ is similar except that we have unitary transformations in place of orthogonal ones.
Using any one of the multilinear rank decompositions \cite{Lim13}, we may reduce $A \in \mathbb{R}^{m \times n \times p}$ to a tensor  $U \in \operatorname{O}(m)$, $V \in \operatorname{O}(n)$, $W \in \operatorname{O}(p)$ such that
\[
A =  (U,V,W) \cdot C 
\]
where $C  \in \mathbb{R}^{m \times n \times p}$ is such that $c_{ijk} = 0$ if $i > r_1$, $j>r_2$, or $k > r_3$.
So we have
\[
 (U, V, W) \cdot C = \sum_{\ell =1}^r x_\ell \otimes M_\ell ,
\]
and applying the multilinear transform  $(U^\mathsf{T}, V^\mathsf{T}, W^\mathsf{T})$  to both sides, we get
\[
C = \sum_{\ell =1}^r (U^\mathsf{T}x_\ell ) \otimes( VM_\ell W^\mathsf{T}).
\]
Let $\ell = 1,\dots,r$. Let us partition $\widetilde{x}_\ell  = U^\mathsf{T}x_\ell \in \mathbb{R}^{m}$ and $\widetilde{M}_\ell  = VM_\ell W^\mathsf{T} \in \mathbb{R}^{n \times p}$ into
\begin{align*}
\widetilde{x}_\ell  &=\begin{bmatrix}y_\ell \\ z_\ell \end{bmatrix}, \quad y_\ell  \in \mathbb{R}^{r_1},\; z_\ell \in \mathbb{R}^{m -r_1},\\
\widetilde{M}_\ell  &=
\begin{bmatrix}
J_\ell  &K_\ell \\
L_\ell  &N_\ell 
\end{bmatrix}, \quad  J_\ell \in \mathbb{R}^{r_2 \times r_3},\; K_\ell  \in \mathbb{R}^{r_2 \times (p - r_3) }, \; L_\ell  \in \mathbb{R}^{(n -r_2)  \times r_3}, \; N_\ell  \in \mathbb{R}^{(n-r_2) \times (p -r_3)}.
\end{align*}
Now set
\[
x_\ell'=\begin{bmatrix}y_\ell \\0\end{bmatrix}, \qquad
M_\ell'=\begin{bmatrix}
J_\ell &0\\0&0\end{bmatrix}.
\]
As $c_{ijk}=0$ if $i > r_1$, $j>r_2$, or $k>r_3$, it follows that
\[
C =\sum_{\ell =1}^r x_\ell'\otimes M_\ell'.
\]
Since orthogonal matrices preserve Hilbert--Schimdt and nuclear norms,  $\|x_i\| =\|\widetilde{x}_i\| \ge \|x_i'\|$ and $\|M_\ell \|_* = \|\widetilde{M}_\ell \|_*\ge \|M_\ell '\|_*$ and so
\[
\sum_{\ell =1}^r \|x_\ell \|\|M_\ell \|_*\ge \sum_{\ell =1}^r \|x_\ell '\|\|M_\ell '\|_*.
\]
Clearly $\operatorname{rank} M_\ell '\le \min(r_2, r_3 )$.
\end{proof}

By Definition~\ref{def1tennrm},
\[
\|\flat_1(A)\|_*=\min\Bigl\{\sum_{i=1}^r \|x_i\|\|M_i\| : A = \sum_{i=1}^r x_i\otimes M_i, \; x_i \in \mathbb{F}^m, \; M_i\in \mathbb{F}^{n\times p}, \; r\in \mathbb{N} \Bigr\},
\]
and since any matrix satisfies
\[
\|M_i\|\le \|M_i\|_*\le \sqrt{\operatorname{rank} M_i}\|M_i\|_{\sigma},
\]
using \eqref{anidl1nrm} and \eqref{eq:Mrr}, we obtain
\begin{equation}\label{eq:bdnn}
\|\flat_1(A)\|_*\le \|A\|_*\le \sqrt{\min(r_2(A), r_3(A) )}\|\flat_1(A)\|_*.
\end{equation}
From \eqref{eq:bdnn}, we deduce the corresponding bounds for its dual norm, 
\[
\|\flat_1(A)\|_\sigma\ge \|A\|_\sigma\ge \frac{1}{\sqrt{\min(r_2(A), r_3(A))}}\|\flat_1(A)\|_\sigma.
\]
Moreover, we may deduce analogous inequalities in terms of $\flat_2(A)$ and $\flat_3(A)$. We assemble these to get  the bounds in the following theorem.
\begin{theorem}\label{thm:approx}
Let $A \in \mathbb{F}^{m \times n \times p}$ with $\mu\operatorname{rank}(A) = (r_{1},r_{2},r_{3})$. Then
{\footnotesize
\[
\max \biggl\{ \frac{\|\flat_1(A)\|_\sigma}{\sqrt{\min(r_2, r_3)}},\frac{\|\flat_2(A)\|_\sigma}{\sqrt{\min(r_1, r_3)}}, \frac{\|\flat_3(A)\|_\sigma}{\sqrt{\min(r_1, r_2)}}\biggr\}
 \le \|A\|_{\sigma}\le \min \{ \|\flat_1(A)\|_\sigma,  \|\flat_2(A)\|_\sigma, \|\flat_3(A)\|_\sigma \}
\]}%
and
{\footnotesize
\begin{multline*}
\max\{ \|\flat_1(A)\|_*,  \|\flat_2(A)\|_*,  \|\flat_3(A)\|_* \}\le \|A\|_* \\
\le \min \bigl\{\sqrt{\min(r_2, r_3 )}\|\flat_1(A)\|_*, \sqrt{\min(r_1, r_3 )}\|\flat_2(A)\|_*, \sqrt{\min(r_1, r_2 )}\|\flat_3(A)\|_*\bigr\}.
\end{multline*}}%
\end{theorem}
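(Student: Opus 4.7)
The statement is an aggregation of three pairs of bounds, one for each flattening $\flat_k$, that follow from applying the discussion leading up to the theorem to each mode. The plan is to prove both the nuclear and spectral bounds along $\flat_1$, note that the roles of the three modes are interchangeable, and then take the tightest of the three upper bounds and the tightest of the three lower bounds.

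\textbf{Nuclear-norm bounds along $\flat_1$.} The lower bound $\|\flat_1(A)\|_* \le \|A\|_*$ is immediate from \eqref{anidl1nrm}: any decomposition $A = \sum_i x_i \otimes M_i$ is simultaneously a rank-one matrix decomposition of $\flat_1(A)$, and the matrix inequality $\|M\| \le \|M\|_*$ gives $\sum \|x_i\|\|M_i\| \le \sum \|x_i\|\|M_i\|_*$. For the upper bound, I would first reduce $A$ to core form, as in the proof of Lemma~\ref{lem:mrank}, via orthogonal/unitary changes of basis in each mode, so that $A$ is supported on $[r_1]\times[r_2]\times[r_3]$ and thus $\flat_1(A)$ is supported on rows $[r_1]$ and columns indexed by $[r_2]\times[r_3]$. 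Taking the matrix SVD $\flat_1(A) = \sum_{i=1}^{r_1} \sigma_i u_i v_i^{\mathsf{T}}$, each $v_i$ corresponds to a matrix $V_i \in \mathbb{F}^{n\times p}$ supported on $[r_2]\times[r_3]$, with Frobenius norm one and rank at most $\min(r_2,r_3)$. The resulting tensor decomposition $A = \sum_i \sigma_i u_i \otimes V_i$, combined with the matrix inequality $\|V_i\|_* \le \sqrt{\operatorname{rank} V_i}\,\|V_i\|$, yields $\|A\|_* \le \sum_i \sigma_i \|V_i\|_* \le \sqrt{\min(r_2,r_3)}\sum_i \sigma_i = \sqrt{\min(r_2,r_3)}\,\|\flat_1(A)\|_*$.

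\textbf{Spectral-norm bounds along $\flat_1$.} I would argue directly rather than by formal duality, since the multiplicative constant depends on $A$. The upper bound $\|A\|_\sigma \le \|\flat_1(A)\|_\sigma$ follows because each test tensor $x \otimes y \otimes z$ embeds into a rank-one test matrix $x(y\otimes z)^{\mathsf{T}}$ with identical Hilbert--Schmidt products, so the tensor supremum is over a subset of the matrix supremum. For the lower bound, choose unit vectors achieving $\|\flat_1(A)\|_\sigma = |\langle \flat_1(A), x \otimes w\rangle|$ and view $w$ as a Frobenius-unit matrix $W$. Using the matrix SVD $W = \sum_j \sigma_j y_j \otimes z_j$, the triangle inequality gives $\|\flat_1(A)\|_\sigma \le \sum_j \sigma_j |\langle A, x\otimes y_j \otimes z_j\rangle| \le \|W\|_*\,\|A\|_\sigma$, and the same core-tensor reduction shows $\operatorname{rank} W \le \min(r_2,r_3)$, whence $\|W\|_* \le \sqrt{\min(r_2,r_3)}$.

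\textbf{Assembly.} Permuting the three modes yields the analogous pair of bounds along $\flat_2$ and $\flat_3$; taking the maximum of the three lower bounds and the minimum of the three upper bounds produces the stated inequalities. The principal subtlety is the core-tensor reduction, which is needed to ensure that the matrix factor arising from the SVD of a flattening has rank bounded by the corresponding pair of multilinear ranks rather than the ambient dimensions; once that bound on the rank is in hand, the rest consists of routine applications of the standard Cauchy--Schwarz comparison $\|M\|_* \le \sqrt{\operatorname{rank} M}\,\|M\|$ and the definitions of the spectral and nuclear norms.
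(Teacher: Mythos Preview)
Your proposal is correct and rests on the same core idea as the paper---reduce $A$ to its $r_1\times r_2\times r_3$ core via orthogonal/unitary mode changes, so that the matrix factor arising from a flattening has rank at most $\min(r_2,r_3)$, and then feed this into the standard comparison $\|M\|_*\le\sqrt{\operatorname{rank} M}\,\|M\|$. The implementation differs in two places. For the nuclear upper bound you start from the SVD of $\flat_1(A)$ (so $\sum_i\sigma_i=\|\flat_1(A)\|_*$ exactly) and bound $\|V_i\|_*$; the paper instead invokes Lemma~\ref{lem:mrank}, which bounds the ranks of the $M_i$ in a decomposition attaining \eqref{anidl1nrm}. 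Your route is the cleaner one for this direction, since starting from the optimal \eqref{anidl1nrm}-decomposition only gives $\sum\|x_i\|\|M_i\|\ge\|\flat_1(A)\|_*$, which points the wrong way. For the spectral lower bound the paper simply appeals to duality of \eqref{eq:bdnn}; you correctly flag that the constant $\sqrt{\min(r_2,r_3)}$ depends on $A$, so naive duality does not apply, and you give a direct argument instead---this is precisely the ``unwound'' dual argument (the optimal $w$ lies in the row space of $\flat_1(A)$, hence after core reduction $W$ is supported on $[r_2]\times[r_3]$), and it closes the gap the paper leaves implicit.
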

Note that both upper and lower bounds are computable in polynomial time. Clearly, we may extend Theorem~\ref{thm:approx} to any $d > 3$ simply by flattening along $d$ indices. 

\section*{Acknowledgment}

We thank Harm~Derksen and Jiawang~Nie for enormously helpful discussions. We thank Li~Wang for help with numerical experiments that suggested that $\|C\|_{*,\mathbb{C}} = \sqrt{2}$ in Lemma~\ref{3tenexamnorm2}.
SF's work is partially supported by NSF DMS-1216393.  
LH's work is partially supported by AFOSR FA9550-13-1-0133, DARPA D15AP00109, NSF IIS 1546413, DMS 1209136, DMS 1057064.

\bibliographystyle{plain}

\end{document}